\pdfoutput=1

\documentclass[pdflatex,sn-mathphys-num,iicol]{sn-jnl}

\usepackage{amsmath,amssymb,amsfonts}
\usepackage{mathrsfs, amsthm}
\usepackage{booktabs}
\usepackage{bm}
\usepackage{mathtools}
\usepackage{array}
\usepackage{algorithm}
\usepackage{algorithmicx}
\usepackage{algpseudocode}
\usepackage{enumitem}
\usepackage{subcaption}
\usepackage{soul}
\usepackage{graphicx}

\usepackage{tikz}
\usetikzlibrary{arrows.meta,positioning,calc,shapes.geometric,fit,backgrounds}

\theoremstyle{thmstyleone}
\newtheorem{theorem}{Theorem}

\newtheorem{corollary}{Corollary}
\newtheorem{proposition}{Proposition}

\theoremstyle{thmstyletwo}
\newtheorem{remark}{Remark}

\newcommand{\bq}{\bm{q}}
\newcommand{\bqo}{\bm{q}_{\mathrm{obs}}}
\newcommand{\bqh}{\bm{q}_{\mathrm{hid}}}
\newcommand{\bxi}{\boldsymbol{\xi}}
\newcommand{\btheta}{\boldsymbol{\theta}}
\newcommand{\bPhi}{\boldsymbol{\Phi}}
\newcommand{\bS}{\mathcal{S}}
\newcommand{\calL}{\mathcal{L}}
\newcommand{\fphys}{f_{\mathrm{phys}}}
\newcommand{\fnn}{f_{\mathrm{NN}}}
\newcommand{\fpoly}{f_{\mathrm{poly}}}
\newcommand{\R}{\mathbb{R}}
\newcommand{\norm}[1]{\left\lVert #1 \right\rVert}
\newcommand{\diag}{\mathrm{diag}}
\newcommand{\Radj}{R^{2}_{\mathrm{adj}}}
\newcommand{\OO}{\Omega}
\newcommand{\bL}{\bm{L}}
\newcommand{\br}{\bm{r}}
\newcommand{\dbqo}{\dot{\bm{q}}_{\mathrm{obs}}}
\newcommand{\ddbqo}{\ddot{\bm{q}}_{\mathrm{obs}}}
\newcommand{\dbqh}{\dot{\bm{q}}_{\mathrm{hid}}}

\newcommand{\tbqo}{\tilde{\bm{q}}_{\mathrm{obs}}}
\newcommand{\dtbqo}{\dot{\tilde{\bm{q}}}_{\mathrm{obs}}}
\newcommand{\ddtbqo}{\ddot{\tilde{\bm{q}}}_{\mathrm{obs}}}

\begin{document}

\title[SPIRAL-PO: Partially Observed Identification of Rotating Machinery]%
      {SPIRAL-PO: Symbolic Identification of Partially Observed
       Nonlinear Dynamics with Application to Rotating Machinery}

\author*[1]{\fnm{Mohammad~A.} \sur{Ayoubi}}\email{maayoubi@scu.edu}

\affil*[1]{\orgdiv{Department of Mechanical Engineering},
           \orgname{Santa Clara University},
           \orgaddress{\city{Santa Clara},
                       \state{CA},
                       \postcode{95053},
                       \country{USA}}}
                       
 \author*[2]{\fnm{Sina } \sur{Piramoon}}\email{spiramoon@scu.edu}

\affil*[2]{\orgdiv{Laboratory Products Division},
           \orgname{Thermo Fisher Scientific},
           \orgaddress{\city{Santa Clara},
                       \state{CA},
                       \postcode{95054},
                       \country{USA}}}

\abstract{%
Many engineering systems are governed by states that cannot be directly
measured---tilt angles in rotating machines, internal flow variables,
aeroelastic modes---yet these hidden states couple into the measured
outputs. Identifying the governing equations from such partial
observations is the problem addressed here. We present SPIRAL-PO (Symbolic Physics-Informed Residual Augmentation
Loop --- Partially Observed), a framework that treats hidden-state
effects not as nuisances to be eliminated but as structured, physically
interpretable signatures in the observable equations of motion. From a
minimal physics seed in measured coordinates, the method iteratively
fits a multi-output residual network to the projection residual,
projects the learned structure onto a physics-constrained candidate
library, and admits terms through a sequential statistical gating
protocol. We give sufficient conditions for unique recovery of the
hidden-coupling coefficients, a matching impossibility result showing
that insufficiently rich excitation makes recovery impossible for any
estimator, and a closed-form sample-complexity bound. These guarantees
concern the coefficients of the hidden-state coupling \emph{as projected
onto the observable library}---not the hidden trajectory itself, which
is subsequently reconstructed by a separate state estimator. The framework is demonstrated on a vertical flexible rotor with Duffing
supports, where only the lateral displacements of the rotor are
measured while the tilt angles remain hidden. A constant-speed run is
provably unidentifiable, whereas a speed sweep restores identifiability;
from noisy coast-down data, SPIRAL-PO recovers the gyroscopic coupling, the
Duffing nonlinearity, and the translation--tilt cross-coupling, each with
a standard error and $t$-statistic. An Extended Kalman Filter built on
the validated model then reconstructs the hidden tilt trajectory from
observed displacements alone.}

\keywords{Nonlinear system identification,
          Partially observed systems,
          Symbolic regression,
          Physics-informed machine learning,
          Identifiability,
          Rotordynamics}

\maketitle

\label{sec:intro}

A recurring challenge in engineering dynamics is that the models we
need are governed by states we cannot measure. Tilt angles in rotating
machines, internal flow variables in fluid systems, and aeroelastic
modes in flexible structures all exert observable effects---they couple
into the measured outputs---yet direct sensing is impractical or
impossible. The governing equations of the observable channels then
contain structured, unknown terms that look like cross-coupling or
nonlinearity from the sensor's point of view. Identifying these terms,
and distinguishing genuine physics from noise artifacts, is the problem
this paper addresses.

Existing data-driven methods handle the full-state case well. SINDy
\cite{Brunton2016} recovers sparse governing equations by library
regression on state measurements and their time derivatives for the observed states; Weak
SINDy \cite{Messenger2021} extends this to noisy data by projecting
onto smooth test functions; physics-informed neural networks
\cite{Raissi2019} embed governing equations into the training loss and
have been applied across a broad range of forward and inverse problems.
What all these approaches share is the assumption that every state
entering the library is directly measurable. In rotordynamics, that
assumption fails by construction: the tilt angles $\alpha,\beta$
evolve dynamically and are coupled to the lateral displacements $x,y$
through the bearing geometry, but the sensors sit at the disk and see
only $x(t)$ and $y(t)$.

A growing body of work tackles exactly this partially observed
setting, and it is useful to place SPIRAL-PO against it. The classical
route is \emph{time-delay embedding}: Takens' theorem guarantees that
stacking delayed copies of a limited measurement reconstructs an
attractor diffeomorphic to that of the full-state system, an idea made
operational for linear-operator analysis by the Hankel alternative
view of Koopman (HAVOK) \cite{Brunton2017}. Building on delay
coordinates, deep delay autoencoders learn a closed-form model from a
single measured channel \cite{Bakarji2023}, while related
autoencoder--SINDy architectures jointly discover latent coordinates
and the equations that govern them \cite{Champion2019}. A distinct
line absorbs the influence of the unobserved states by regressing
higher-order time derivatives of the measured variables
\cite{Somacal2022}. Orthogonally, classical state-estimation methods
--- extended and unscented Kalman filtering, moving-horizon estimation
--- reconstruct hidden states from a known model rather than
discovering the model itself. What the discovery methods share is that
they recover a model in surrogate coordinates --- delay,
latent, or derivative coordinates --- whose physical meaning must then
be inferred after the fact. SPIRAL-PO instead keeps the
observable coordinates fixed and projects the hidden-state dynamics
onto them as physically labeled library terms, so that every
discovered term retains a direct mechanical interpretation.

The original SPIRAL method \cite{Ayoubi2027} was developed for
nonlinear unsteady aerodynamics---a different partial-observation
problem---and showed that iterative residual augmentation from a
physics seed can recover interpretable, statistically validated
models when the full state is not available. The present work
generalizes this idea to a broad class of partially observed
nonlinear systems. Three additions distinguish it from the
original. The first is a projection residual formulation
that maps hidden-state dynamics onto the observable subspace as
structured candidate library terms, making explicit which physical
mechanisms the sensor channels can and cannot distinguish. Building on
this, we construct a physics-structured bivariate library
from both the $x$ and $y$ displacement channels, so that gyroscopic
coupling, isotropic Duffing nonlinearity, and tilt-mediated
cross-channel terms are all representable---none of which can appear
in a single-channel reduction. Finally, a frequency-domain
whiteness guard replaces the variance-based coverage threshold with a
Ljung--Box test on the validation residual, a criterion better suited
to the multi-harmonic, non-stationary signals typical of a coast-down
experiment.

The framework is applied to a four-degree-of-freedom (4-DOF) vertical
flexible rotor with Duffing bearing supports \cite{Friswell2010},
where the tilt angles $\alpha(t)$, $\beta(t)$ are genuinely unmeasured
and the only sensor channels are the lateral displacements $x(t)$,
$y(t)$ at the disk. We show that SPIRAL-PO correctly identifies the
governing equation structure from $x,y$ alone, and that an Extended
Kalman Filter built on the validated dynamics can subsequently
reconstruct all four states---including $\alpha$ and $\beta$---with
sub-2\% relative error against synthetic ground truth.

Section~\ref{sec:methodology} formalizes the partially observed system
class and the SPIRAL-PO algorithm; Section~\ref{sec:identifiability}
develops the identifiability, impossibility, and sample-complexity
results; Section~\ref{sec:application} applies the method to the 4-DOF
rotor, validating recovery against synthetic ground truth,
reconstructing the hidden tilt with an Extended Kalman Filter, and
testing it on the laboratory rig; and Sections~\ref{sec:discussion}
and~\ref{sec:conclusions} cover related work, limitations, and
conclusions.

\section{Methodology}
\label{sec:methodology}

\subsection{General problem formulation}
\label{sec:problem}


Before presenting the theory, let us define ``partially observed'' term.
Rather than absorbing the hidden states into a latent-space
representation or eliminating them by differentiation, we retain the
full-state equations and partition them, so that the role of each
unmeasured variable is structurally explicit from the outset.

In general, a nonlinear system can be represented as
\begin{equation}
  \dot{\bm{z}} = \bm{F}\bigl(\bm{z},\,\bm{u}(t),\,\bm{p}\bigr),
  \qquad
  \bm{z}(t) = \begin{Bmatrix} \bm{z}_{\mathrm{o}}(t) \\
                              \bm{z}_{\mathrm{h}}(t) \end{Bmatrix},
  \label{eq:general_ss}
\end{equation}
in which only the sub-state $\bm{z}_{\mathrm{o}}(t)$ is measured,
$\bm{z}_{\mathrm{h}}(t)$ is hidden, $\bm{u}(t)$ is a known input, and
$\bm{p}$ collects physical parameters.
Splitting the measured rows of $\bm{F}$ into a part expressible in
measured quantities alone and a remainder,
\begin{equation}
  \dot{\bm{z}}_{\mathrm{o}}
  = \bm{F}_{\mathrm{o}}^{\mathrm{obs}}
    \bigl(\bm{z}_{\mathrm{o}},\,\bm{u},\,\bm{p}\bigr)
  + \Delta\bigl(\bm{z}_{\mathrm{o}},\,\bm{z}_{\mathrm{h}},\,
                \bm{u},\,\bm{p}\bigr),
  \label{eq:general_split}
\end{equation}
collects every effect of the hidden sub-state on the measured
channels into a single coupling term $\Delta$.
This pair --- a measured derivative on the left, a structured but
unknown hidden-coupling term on the right --- is all that SPIRAL-PO
requires: the residual loop, the physics-structured library, and the
identifiability theory of Sect.~\ref{sec:identifiability} operate on
the form \eqref{eq:general_split}, regardless of the order of the
underlying system, provided every candidate term carries the physical
units of the governing equation.
Mechanical and structural systems, the focus of this paper, 
arrive naturally in second-order form, so the remainder of the
development is phrased in generalized coordinates and their
derivatives.

Concretely, consider a second-order system with $n_q$ generalized
coordinates whose configuration vector partitions into an observable
part and a hidden part:
\begin{equation}
\begin{gathered}
  \bq(t) = \begin{Bmatrix} \bqo(t) \\ \bqh(t) \end{Bmatrix},\\[2pt]
  \bqo \in \R^{n_o},\quad
  \bqh \in \R^{n_h},\quad
  n_o + n_h = n_q.
\end{gathered}
  \label{eq:partition}
\end{equation}

The full equations of motion are given below.
\begin{equation}
  \bm{M}\,\ddot{\bq} + \bm{D}(\bq,\dot{\bq},\bm{p})\,\dot{\bq}
  + \bm{K}(\bq,\bm{p})\,\bq = \bm{f}(t,\bm{p}),
  \label{eq:full_eom}
\end{equation}
where $\bm{M}$, $\bm{D}$, $\bm{K}$ are mass, damping/gyroscopic,
and stiffness operators (possibly state-dependent), $\bm{f}$ is a
known forcing, and $\bm{p}$ is a vector of physical parameters.
Conformally with the partition~\eqref{eq:partition}, each operator
splits into observable ($o$) and hidden ($h$) blocks,
\begin{equation}
\begin{aligned}
  \bm{M} &=
  \begin{bmatrix} \bm{M}_{oo} & \bm{M}_{oh} \\
                  \bm{M}_{ho} & \bm{M}_{hh} \end{bmatrix},
  \quad
  \bm{D} =
  \begin{bmatrix} \bm{D}_{oo} & \bm{D}_{oh} \\
                  \bm{D}_{ho} & \bm{D}_{hh} \end{bmatrix},\\[3pt]
  \bm{K} &=
  \begin{bmatrix} \bm{K}_{oo} & \bm{K}_{oh} \\
                  \bm{K}_{ho} & \bm{K}_{hh} \end{bmatrix},
\end{aligned}
  \label{eq:blocks}
\end{equation}
where the $oo$ blocks are $n_o \times n_o$, the $oh$ blocks
$n_o \times n_h$, the $ho$ blocks $n_h \times n_o$, and the $hh$
blocks $n_h \times n_h$. The observable--observable block $\bm{M}_{oo}$
maps observable accelerations to observable forces, while the
observable--hidden blocks $\bm{D}_{oh}$ and $\bm{K}_{oh}$ are precisely
the channels through which the unmeasured coordinates $\bqh$ act on the
measured ones. Projecting~\eqref{eq:full_eom} onto its observable rows
and assuming block-diagonal inertia ($\bm{M}_{oh}=\bm{0}$), isolates $\bm{M}_{oo}\ddbqo$ and collects
 hidden-block actions $\bm{D}_{oh}\dbqh$, $\bm{K}_{oh}\bqh$ together
with nonlinear cross-coupling in the term $\Delta$
of~\eqref{eq:obs_eom}.
Only $\bqo(t)$ (and its time derivatives, estimated from data) is
available.

The operators $\bm{M}$, $\bm{D}$, $\bm{K}$ in \eqref{eq:full_eom} are
kept deliberately general: they may depend on the instantaneous state
$(\bq,\dot{\bq})$ and on the parameter vector $\bm{p}$, and $\bm{D}$ in
particular absorbs any velocity-proportional coupling, including
speed-dependent gyroscopic effects, so that a scheduling parameter
(e.g.\ a rotor spin speed) enters through $\bm{p}$ rather than as a
separate operator. This generality is all the SPIRAL-PO development
requires; the concrete rotordynamic instantiation---with an explicit
gyroscopic matrix and Duffing bearing stiffness---is deferred to the
application of Sect.~\ref{sec:application}.

\subsubsection{Observable equations of motion}

The partition above is useful only if it leads to equations that
can be written solely in terms of measurable quantities.
Projecting the full equations onto the observable rows achieves
exactly this: it isolates the dynamics of $\bqo$ while collecting
all hidden-state influence into a single, structured residual term
$\Delta$.
That term is unknown and inherits the
physical constraints of the hidden dynamics, a property that
SPIRAL-PO will exploit in the discovery step.

To separate linear structure from nonlinear coupling, decompose the
operators into constant and state-dependent parts,
\begin{equation}
\begin{aligned}
  \bm{D}(\bq,\dot{\bq},\bm{p}) &= \bar{\bm{D}}(\bm{p})
    + \bm{D}_{\mathrm{nl}}(\bq,\dot{\bq},\bm{p}),\\
  \bm{K}(\bq,\bm{p}) &= \bar{\bm{K}}(\bm{p})
    + \bm{K}_{\mathrm{nl}}(\bq,\bm{p}),
\end{aligned}
  \label{eq:DK_split}
\end{equation}
and partition both parts into blocks conformal with \eqref{eq:partition},
\begin{equation}
\begin{aligned}
  \bar{\bm{D}} &=
  \begin{bmatrix} \bm{D}_{oo} & \bm{D}_{oh} \\
                  \bm{D}_{ho} & \bm{D}_{hh} \end{bmatrix},\\[3pt]
  \bm{D}_{\mathrm{nl}} &=
  \begin{bmatrix} \bm{D}_{\mathrm{nl},oo} & \bm{D}_{\mathrm{nl},oh} \\
                  \bm{D}_{\mathrm{nl},ho} & \bm{D}_{\mathrm{nl},hh}
  \end{bmatrix},
\end{aligned}
  \label{eq:block_partition}
\end{equation}
and analogously for $\bar{\bm{K}}$, $\bm{K}_{\mathrm{nl}}$, where the
nonlinear blocks retain their dependence on the full state
$(\bq,\dot{\bq})$. Taking the observable block row of
\eqref{eq:full_eom} (with $\bm{M}_{oh}=\bm{0}$) gives

\begin{equation}
\begin{aligned}
  \bm{M}_{oo}\ddbqo
  &= \bm{f}_o - \bm{D}_{oo}\dbqo - \bm{K}_{oo}\bqo\\
  &\quad - \bm{D}_{oh}\dbqh - \bm{K}_{oh}\bqh\\
  &\quad - g(\bqo,\bqh,\dbqo,\dbqh),
\end{aligned}
  \label{eq:obs_row}
\end{equation}
where
\begin{equation}
\begin{aligned}
  g(\bqo,\bqh,\dbqo,\dbqh)
  &= \bm{D}_{\mathrm{nl},oo}\,\dbqo
  + \bm{D}_{\mathrm{nl},oh}\,\dbqh\\
  & + \bm{K}_{\mathrm{nl},oo}\,\bqo
  + \bm{K}_{\mathrm{nl},oh}\,\bqh,
\end{aligned}
  \label{eq:g_def}
\end{equation}
collects all nonlinear cross-coupling: even the terms acting on
$\bqo$, $\dot{\bqo}$ involve the hidden states through the
state dependence of the nonlinear blocks. Let's write \eqref{eq:obs_row} in the following form
\begin{equation}
\begin{aligned}
\bm{M}_{oo}\,\ddbqo
={}& \bm{f}_o(t,\bm{p})
-\bm{D}_{oo}\,\dbqo
-\bm{K}_{oo}\,\bqo \\
&-\bm{\Delta}(\bqo,\bqh,\dbqo,\dbqh),
\end{aligned}
\label{eq:obs_eom}
\end{equation}

\begin{equation}
\begin{aligned}
\bm{\Delta}(\bqo,\bqh,\dbqo,\dbqh)
&\coloneqq{}
\bm{D}_{oh}\,\dbqh
+\bm{K}_{oh}\,\bqh \\
&+\bm{g}(\bqo,\bqh,\dbqo,\dbqh),
\end{aligned}
\label{eq:hidden_coupling}
\end{equation}
where the bracketed term $\Delta$ is unknown. Intrinsically $\Delta$
depends on the hidden states, $\Delta(\bqo,\bqh,\dbqo,\dbqh)$; but
along any solution of \eqref{eq:full_eom} the hidden coordinates
$\bqh(t),\dbqh(t)$ are themselves determined by the dynamics, so the
restriction of $\Delta$ to the realized trajectory is a function of the
observed coordinates and time alone, written $\Delta(\bqo,\dbqo,t)$
with the explicit $t$ carrying the (unmeasured) hidden-state
contribution. It is this trajectory restriction that
condition~(C1) of Theorem~\ref{thm:ident} approximates by observable
library terms. SPIRAL-PO exploits this structure to build a targeted
candidate library without ever measuring $\bqh$.

\subsubsection{Identification objective}

With the problem structure in place, the identification objective can
be stated concisely. The goal is not to reconstruct $\bqh$ directly, instead, to determine which terms belong in $\Delta$ and
what their coefficients are, using only the observable trajectory.

Measurements of the observable coordinates are corrupted by noise,
\begin{equation}
  \tbqo(t_k) = \bqo(t_k) + \varepsilon_k,
  \qquad k = 1,\ldots,N,
  \label{eq:measurement}
\end{equation}
where $\varepsilon_k \sim \mathcal{N}(\bm{0},\sigma^2\bm{I})$ is
i.i.d.\ measurement noise.
Given \eqref{eq:measurement} and a known excitation
$\bm{f}(t,\bm{p})$, the identification objective is the pair
$(\fphys,\btheta^*)$: a functional form
$\fphys(\bqo,\dbqo,t;\btheta)$ and coefficients
\begin{equation}
\begin{aligned}
  \btheta^*
  &= \arg\min_{\btheta}\;
    \sum_{k=1}^{N}
    \bigl\|\bm{r}(t_k;\btheta)\bigr\|_2^2,\\
  \bm{r}(t_k;\btheta)
  &= \ddtbqo(t_k)
  - \fphys\Bigl(\tbqo(t_k),\dtbqo(t_k),t_k;\btheta\Bigr),
\end{aligned}
  \label{eq:ident_obj}
\end{equation}
where the residual $\bm{r}$ is the portion of the measured
acceleration unexplained by the physics model, reported with standard
errors, $t$-statistics, and model-selection diagnostics.


Before presenting the individual components of the SPIRAL-PO framework, it is helpful to step back and view the identification problem from a broader perspective. In a partially observed system, the goal is not merely to fit the measured trajectories. Rather, the objective is to determine which physical mechanisms must be present in the observable equations to explain the measured response. From this viewpoint, the hidden states are not treated as nuisance variables to be reconstructed in an abstract latent space. Instead, their influence is inferred through the structured signatures they leave behind in the measured dynamics. SPIRAL-PO is designed around this principle: begin with the simplest physically meaningful model, explain the remaining discrepancy using a residual learner, and then convert that discrepancy into interpretable mathematical terms that can be tested, validated, and either accepted or rejected.

\subsection{The SPIRAL-PO algorithm}
\label{sec:method}

The method is built around a single organizing principle: at each
iteration, ask whether the current model leaves unexplained structure
in the residual, and if so, determine what physics-consistent terms
would best account for it.
The loop advances only when a candidate term survives two independent
statistical gates and does not harm held-out prediction accuracy;
it terminates the moment the residual is statistically
indistinguishable from white noise.
Every admitted term is accompanied by a standard error, a
$t$-statistic, and a physical label, so the final model can be
interrogated in the same way as a conventional engineering model ---
not just trusted as a black-box fit.

\subsubsection{Overview of the loop}

SPIRAL-PO is an iterative closed-loop algorithm.
At each iteration $n$ the observable acceleration is decomposed as
\begin{equation}
  \ddtbqo(t)
  = \fphys^{(n)}\!\bigl(\bxi(t);\,\btheta^{(n)}\bigr)
  + \Delta^{(n)}\!\bigl(\bxi(t)\bigr)
  + \varepsilon(t),
  \label{eq:decomp}
\end{equation}
where $\bxi(t)$ is the observable regressor vector
(Sect.~\ref{ssec:regressor}), $\fphys^{(n)}$ is the current physics
model, $\Delta^{(n)}$ is the systematic projection residual, and
$\varepsilon(t)$ is zero-mean noise.
The residual signal is
\begin{equation}
  r^{(n)}(t)
  = \ddtbqo(t)
    - \fphys^{(n)}\!\bigl(\bxi(t);\,\btheta^{(n)}\bigr),
  \label{eq:residual}
\end{equation}
and the loop advances as
\begin{equation}
\begin{aligned}
  \fphys^{(n)}
  &\;\xrightarrow{\;\text{project}\;}
  r^{(n)}
  \;\xrightarrow{\;\text{NN fit}\;}
  \fnn^{(n)}
  \;\xrightarrow{\;\text{Halton}\;}
  \fpoly^{(n)}\\
  &\;\xrightarrow{\;\text{Gate 1}\;}
  \;\xrightarrow{\;\text{Gate 2}\;}
  \fphys^{(n+1)}.
\end{aligned}
  \label{eq:loop}
\end{equation}
The objects appearing in \eqref{eq:loop} are, in order of use:
$\fnn^{(n)}:\R^{n_\xi}\!\to\!\R^{n_o}$, the multi-output residual
neural network fitted to $r^{(n)}$ (Step~2, Sect.~\ref{ssec:nn});
$\calL^{(n)}$, the physics-structured candidate library at iteration
$n$ (Step~3, Sect.~\ref{ssec:library}), from which the already-active
set $\mathcal{A}^{(n)}\!\subseteq\!\calL^{(n)}$ of terms already
present in $\fphys^{(n)}$ is excluded to avoid double-counting;
$\fpoly^{(n)}$, the polynomial projection of $\fnn^{(n)}$ onto
$\calL^{(n)}$ obtained on a Halton grid (Step~4,
Sect.~\ref{ssec:halton}); and
$\bS^{(n)}\!\subseteq\!\calL^{(n)}$, the subset of candidates that
survive the Gate~1 significance screen (Step~5,
Sect.~\ref{ssec:gate1}). Each object is defined in full where it is
introduced; the definitions are collected here so that no symbol in
\eqref{eq:loop} is used before it is named.

\subsubsection{Observable regressor vector}
\label{ssec:regressor}

The augmented observable regressor is
\begin{equation}
  \bxi(t)
  = \bigl(\bqo^{\mathsf{T}},\;\dbqo^{\mathsf{T}},\;
    \bm{p}_{\mathrm{sched}}^{\mathsf{T}}\bigr)^{\mathsf{T}}
  \;\in\;\R^{n_\xi},
  \label{eq:regressor}
\end{equation}
where $\bm{p}_{\mathrm{sched}}$ contains any known scheduling
parameters (e.g., spin speed $\OO$ in rotating machinery, flow speed
$U$ in aerodynamics, temperature in thermomechanical systems.
Note that $\bqh$ does not appear in $\bxi$ --- the regressor
is formed entirely from measurable quantities.
Time derivatives $\dbqo$ are estimated from $\tbqo$ by
the method appropriate to the signal type (Sect.~\ref{sec:deriv}).

\subsubsection{Step 1 --- Physics seed}
\label{ssec:seed}

SPIRAL-PO begins from the minimal physics model expressible in the
observable subspace alone.
Let $\fphys^{(0)}$ contain only terms whose coefficients are
knowable from the system's physical parameters without any
hidden-state contribution:
\begin{equation}
  \fphys^{(0)}(\bxi;\,\btheta^{(0)})
  = \bm{M}_{oo}^{-1}
    \Bigl[
      \bm{f}_o(t)
      - \bm{D}_{oo}^{(0)}\,\dbqo
      - \bm{K}_{oo}^{(0)}\,\bqo
    \Bigr].
  \label{eq:seed}
\end{equation}

All coupling to hidden states, all nonlinear corrections, and all
scheduling-parameter-dependent terms start inactive and must
be earned from data.
The seed coefficients $\btheta^{(0)}$ are estimated by ordinary
least squares (OLS) on the training split.

\subsubsection{Whiteness guard and termination test}
\label{ssec:guard}

Run first in each iteration (before the network fit), a whiteness
\emph{guard} answers the termination question: \emph{does the current
residual $r^{(n)}$ still contain any discoverable structure, or is it
indistinguishable from noise?} The original SPIRAL uses a variance
coverage threshold to decide this, calibrated for narrow-band harmonic
inputs.
For broadband response data (e.g., rotor run-up, random excitation),
the residual energy is spread across many frequencies and the
variance threshold is unreliable.
SPIRAL-PO replaces that threshold with the Ljung--Box whiteness
test \cite{LjungBox1978}, which tests the hypotheses,
\begin{equation}
\begin{aligned}
H_0:\quad & \rho_1=\cdots=\rho_m=0,\\
H_1:\quad & \rho_k\neq 0
\quad \text{for some } k\in\{1,\ldots,m\}.
\end{aligned}
\label{eq:lb_hyp}
\end{equation}
Here, \(H_0\) indicates that \(r^{(n)}\) is white up to lag \(m\).

The hypotheses are evaluated using the Ljung--Box statistic
\begin{equation}
Q_{\mathrm{LB}}^{(n)}
=
N(N+2)
\sum_{k=1}^{m}
\frac{
\hat{\rho}_k^2\!\left(r^{(n)}\right)
}{
N-k
},
\label{eq:ljungbox}
\end{equation}
where \(N\) is the number of residual samples, \(m\) is the maximum
tested lag, and \(\hat{\rho}_k\!\left(r^{(n)}\right)\) denotes the sample
autocorrelation of \(r^{(n)}\) at lag \(k\).
where \(N\) is the number of residual samples, \(m\) is the maximum tested lag,
and \(\hat{\rho}_k\!\left(r^{(n)}\right)\) denotes the sample autocorrelation
of \(r^{(n)}\) at lag \(k\).
where $m$ is the number of lags tested and $\hat{\rho}_k$ is the
sample autocorrelation of $r^{(n)}$ at lag $k$.
Under $H_0$, $Q_{\mathrm{LB}}^{(n)} \sim \chi^2(m)$.
If $Q_{\mathrm{LB}}^{(n)} < \chi^2_{0.95}(m)$, $H_0$ is not rejected:
the residual is deemed white (no discoverable structure remains) and
the loop terminates; otherwise structure remains and the iteration
proceeds to the neural-network fit and the admission gates.
That this guard must eventually fire---so the loop stops after finitely
many iterations---is established in Section~\ref{sec:identifiability}
(Proposition~\ref{prop:termination}).

\subsubsection{Step 2 --- Projection residual and NN fit}
\label{ssec:nn}

At each iteration the projection residual $r^{(n)}$
\eqref{eq:residual} captures what the current model $\fphys^{(n)}$
cannot yet explain: systematic, autocorrelated signal in it flags a
missing hidden-state coupling or nonlinear term. A shallow
fully-connected neural network
\begin{equation}
\begin{aligned}
  \fnn^{(n)}&:\R^{n_\xi} \to \R^{n_o},\\
  \fnn^{(n)}(\bxi) &= \bm{W}_L\,\sigma\!\Bigl(\cdots
    \sigma(\bm{W}_1\bxi + \bm{b}_1)\cdots\Bigr) + \bm{b}_L,
\end{aligned}
  \label{eq:nn}
\end{equation}
with $\tanh$ activations and $L$ hidden layers is trained to fit
$r^{(n)}$ with Adam \cite{Kingma2015} and early stopping on the
validation split. The network is not the final model; it is only a
smooth, nonparametric oracle from the observable regressor to the
residual. Fitting a network first, rather than regressing the noisy
residual directly onto the candidate library, is deliberate: the network
supplies a smooth, denoised estimate of the unexplained dynamics, so the
subsequent library projection (Step~4) and statistical gates
(Steps~5--6) operate on a clean target rather than on noise-corrupted,
time-correlated samples---sharply reducing spurious term selection. It is multi-output ($\bxi\to\R^{n_o}$) so that cross-channel
coupling---e.g.\ the gyroscopic term linking the $x$-equation to
$\dot y$---is captured in one pass, which independent single-output
networks would miss. The rotor architecture (five inputs, two hidden
$\tanh$ layers, two outputs) is shown in Fig.~\ref{fig:nn_arch}.

\begin{figure}[H]
    \centering
    \includegraphics[width=1.035\linewidth]{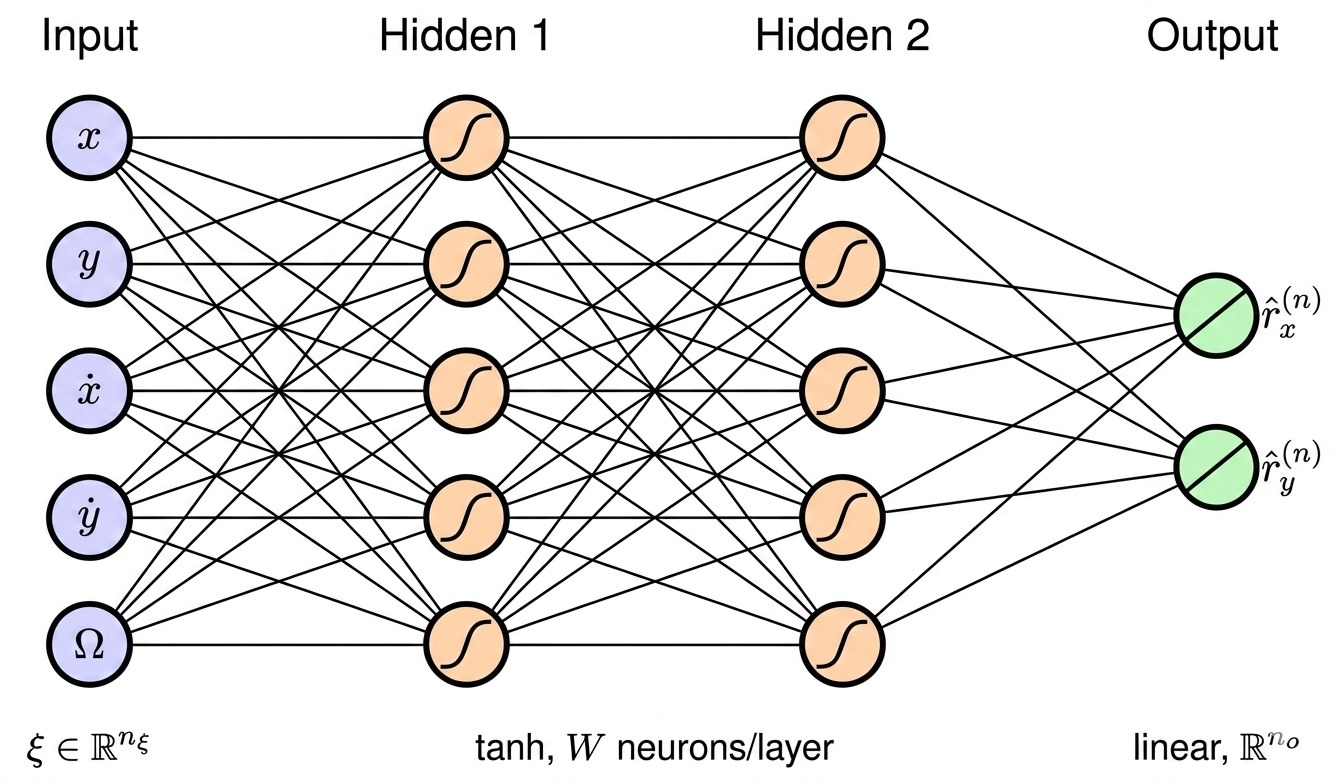}
    \caption{Architecture of the residual neural network $\fnn^{(n)}$
for the 4-DOF rotor application. Five observable quantities
$\bxi=(x,y,\dot x,\dot y,\OO)^\mathsf{T}$ enter the input layer
(blue); two hidden layers with $W$ neurons each use $\tanh$
activations (orange, with the activation curve drawn inside each
neuron); the two-dimensional output (green) provides simultaneous
residual estimates $\hat r^{(n)}_x$ and $\hat r^{(n)}_y$ for the
$x$- and $y$-equations. Training the network as a single
multi-output unit ensures that cross-channel coupling between the
two equations is captured.}
    \label{fig:nn_arch}
\end{figure}

\subsubsection{Step 3 --- Physics-structured library construction}
\label{ssec:library}

The residual network obtained in Step~2 provides an accurate approximation
of the unexplained dynamics, but its weights and activation functions do
not possess direct physical interpretation. The objective of this step is
therefore to transform the learned residual into a finite set of
physically interpretable candidate basis functions suitable for symbolic
identification.

Rather than employing a generic polynomial or Fourier expansion,
SPIRAL-PO constructs a physics-informed candidate library whose basis
functions are selected according to the governing mechanisms of the
underlying system. Since the hidden-state dynamics satisfy
\eqref{eq:full_eom}, dimensional analysis and conservation laws constrain
which observable combinations can legitimately appear in the projected
equations of motion. In particular, every candidate basis function must
carry the same physical dimensions as the observable acceleration
$\bm{M}_{oo}^{-1}\bm{f}_o$. This dimensional-admissibility requirement is
enforced during library construction, ensuring that only physically
meaningful terms are considered before any statistical regression is
performed.

Consequently, the candidate library is organized according to physical
mechanisms rather than polynomial degree alone, improving both model
interpretability and statistical efficiency by eliminating physically
inadmissible basis functions before sparse regression.

The structured library at iteration $n$ is defined as

\begin{equation}
\calL^{(n)}
=
\calL_{\mathrm{known}}
\cup
\calL_{\mathrm{hidden}}^{(n)}
\setminus
\mathcal{A}^{(n)},
\label{eq:library}
\end{equation}

where $\calL_{\mathrm{known}}$ contains basis functions derived directly from the observable dynamics, whereas
$\calL_{\mathrm{hidden}}^{(n)}$ contains candidate terms generated by projecting the hidden-state dynamics onto the observable subspace.For the rotor application considered in this paper, the resulting library contains linear, quadratic, gyroscopic, isotropic Duffing, nonlinear damping, imbalance-forcing, and FFNN residual terms, as summarized in Table~\ref{tab:CandidateLibrary}.
Instead of constructing an unrestricted polynomial basis, the hidden-state library is assembled from physically meaningful mechanisms expected in
rotating machinery. These include gyroscopic coupling
(e.g., $\Omega\dot{q}_{o,j}$), isotropic Duffing nonlinear stiffness
(e.g., $\|\bq_o\|^{2}q_{o,i}$), amplitude-dependent nonlinear damping
(e.g., $\|\bq_o\|^{2}\dot{q}_{o,i}$), rotating-imbalance excitation,
and rotating-imbalance excitation, and data-driven residual corrections
provided by a feedforward neural network (FFNN). Here,
$q_{o,i}$ denotes the $i$-th scalar component of the observable coordinate
vector $\bq_o$, and
$\|\bq_o\|^{2}=\sum_i q_{o,i}^{2}$.
The active set $\mathcal{A}^{(n)}$ contains terms already present in the
current physics model $\fphys^{(n)}$ and is excluded from
$\calL^{(n)}$ to prevent duplicate admission during successive iterations.

Whether the coefficients associated with these physically structured
candidates are uniquely recoverable depends on the richness of the
observable trajectory. Section~\ref{sec:identifiability} establishes
sufficient conditions for unique coefficient recovery
(Theorem~\ref{thm:ident}) and proves that a rank-deficient excitation
renders recovery impossible for any estimator
(Proposition~\ref{thm:impossibility}).
\subsubsection{Step 4 --- Halton sampling and polynomial extraction}
\label{ssec:halton}

To estimate the library coefficients without inheriting the
measurement noise, the trained network is evaluated on a dense
quasi-random grid rather than on the training time series. A
low-discrepancy Halton sequence \cite{Halton1960} of $N_H = 10^5$
points spanning the observed range of $\bxi$ fills the domain more
uniformly than pseudorandom sampling, giving a better-conditioned
regressor matrix $\bPhi_H^{(n)}$ and a fitting error that decays as
$\mathcal{O}(N_H^{-1}(\log N_H)^{n_\xi})$ rather than the
$\mathcal{O}(N_H^{-1/2})$ Monte Carlo rate. Evaluating the network on
this grid,
\begin{equation}
  \hat{r}^{(n)}_j = \fnn^{(n)}(\bxi_j^H),
  \qquad j = 1,\ldots,N_H,
  \label{eq:halton_eval}
\end{equation}
yields a clean, noise-free surface, decoupled from the time-correlated
training data. The polynomial approximant $\fpoly^{(n)}$ then follows
by OLS on the Halton data:
\begin{equation}
  \hat{\btheta}_H^{(n)}
  = \arg\min_{\btheta}
    \norm{\hat{\bm{r}}^{(n)} - \bPhi_H^{(n)}\,\btheta}^2,
  \label{eq:halton_ols}
\end{equation}
where $\bPhi_H^{(n)} \in \R^{N_H \times |\calL^{(n)}|}$ is the
regressor matrix built from the structured library
\eqref{eq:library} evaluated at the Halton points.

\subsubsection{Step 5 --- Gate~1: Bonferroni \texorpdfstring{$t$}{t}-test
screening}
\label{ssec:gate1}

Gate~1 answers a per-term question: \emph{taken on its own, does
candidate $\ell$ carry a coefficient distinguishable from zero on the
noise-free Halton surface?} Formally, each term
$\ell \in \calL^{(n)}$ is subjected to the two-sided hypothesis test
\begin{equation}
  H_0:\ \theta_\ell = 0
  \qquad\text{versus}\qquad
  H_1:\ \theta_\ell \neq 0,
  \label{eq:gate1_hyp}
\end{equation}
using the $t$-statistic $\hat{t}_\ell = \hat{\theta}_{H,\ell}^{(n)} /
\mathrm{SE}(\hat{\theta}_{H,\ell}^{(n)})$ formed from the Halton OLS
fit \eqref{eq:halton_ols}, which under $H_0$ follows a Student-$t$
distribution with $N_H - |\calL^{(n)}|$ degrees of freedom.
Because $|\calL^{(n)}|$ terms are tested simultaneously, the
family-wise error rate is held at $\alpha = 0.05$ by the Bonferroni
correction \cite{Dunn1961,Miller1981}: $H_0$ is rejected---term
$\ell$ passes Gate~1---if
\begin{equation}
  |\hat{t}_\ell|
  > t_{\alpha/(2|\calL^{(n)}|),\,N_H - |\calL^{(n)}|}.
  \label{eq:gate1}
\end{equation}

Section~\ref{sec:identifiability} shows that this per-term screen holds
the family-wise error rate at $\alpha$ (Proposition~\ref{prop:validity})
and fixes the sample size needed to detect a genuine term
(Proposition~\ref{thm:sample}).

\subsubsection{Step 6 --- Gate~2: Partial \texorpdfstring{$F$}{F}-test
with validation rollback}
\label{ssec:gate2}

Passing Gate~1 individually is necessary but not sufficient: a term
can be significant on its own yet add nothing once the current model
is accounted for, either through collinearity with an admitted term or
because its significance reflects the noise realization. Gate~2 poses
the complementary joint question: \emph{taken together, does the batch
$\bS^{(n)}$ of Gate~1 survivors explain enough additional variance to
be worth admitting, and does it improve out-of-sample prediction?}
This is the joint hypothesis test
\begin{equation}
\begin{aligned}
&H_0:\ \theta_\ell = 0\ \ \forall\,\ell \in \bS^{(n)}
  \qquad\text{versus}\qquad\\
&H_1:\ \theta_\ell \neq 0\ \text{for some } \ell \in \bS^{(n)},
  \label{eq:gate2_hyp}
\end{aligned} 
\end{equation}
i.e.\ under $H_0$ the whole batch adds nothing to $\fphys^{(n)}$.
If $H_0$ is rejected the model is augmented; if not---or if the
augmented model generalizes worse on held-out data---the batch is
discarded and the loop terminates. This validation rollback is the
primary safeguard against overfitting: training-data significance
alone does not admit a term, out-of-sample accuracy must improve.

Let $\bS^{(n)} \subseteq \calL^{(n)}$ be the set of terms that
passed Gate~1.
A joint OLS is performed on the training data with the current model
augmented by $\bS^{(n)}$:
\begin{equation}
  \hat{\btheta}^{(n+1)}_{\mathrm{tr}}
  = \arg\min_{\btheta}
    \norm{\ddtbqo - \bPhi_{\mathrm{tr}}^{(n+1)}\,\btheta}^2,
  \label{eq:joint_ols}
\end{equation}
where $\bPhi_{\mathrm{tr}}^{(n+1)}$ is the training-data regressor
matrix for the model augmented by $\bS^{(n)}$.
The partial $F$-statistic \cite{Montgomery2017} measures how much of
the residual sum of squares the batch removes, relative to the
residual variance of the augmented model:
\begin{equation}
  F^{(n)}
  = \frac{
      \bigl[\mathrm{RSS}^{(n)} - \mathrm{RSS}^{(n+1)}\bigr]
      / |\bS^{(n)}|
    }{
      \mathrm{RSS}^{(n+1)} / (N_{\mathrm{tr}} - p^{(n+1)})
    }.
  \label{eq:Ftest}
\end{equation}

Here $\mathrm{RSS}^{(n)}$ is the residual sum of squares of
$\fphys^{(n)}$ on the $N_{\mathrm{tr}}$ training samples and $p^{(n)}$ is
the number of active parameters (admitted terms) in $\fphys^{(n)}$.
Under $H_0$ of \eqref{eq:gate2_hyp} it follows an
$F_{|\bS^{(n)}|,\,N_{\mathrm{tr}}-p^{(n+1)}}$ distribution, so $H_0$ is
rejected and the batch passes Gate~2 when
$F^{(n)} > F_{0.05,\,|\bS^{(n)}|,\,N_{\mathrm{tr}}-p^{(n+1)}}$.
As a final safeguard, the augmented model is evaluated on the
held-out validation split: if the validation RMSE worsens relative
to $\fphys^{(n)}$, the batch is rolled back regardless of the
$F$-statistic.

After Gate~2, each admitted term is mapped back to its physical origin
(observable self-coupling, hidden-state gyroscopic coupling,
hidden-state nonlinear coupling, etc.) via the dimensional-analysis
tags attached to $\calL_{\mathrm{hidden}}^{(n)}$, so every discovered
coefficient carries a direct physical interpretation.

Equations~\eqref{eq:joint_ols}--\eqref{eq:Ftest} describe Gate~2 as a
single joint test of the whole batch $\bS^{(n)}$ against a single
training/validation split. On monotonic run-up or run-down records the
batch can become severely collinear, and three practical refinements
to Gate~2 (correlation-clustered screening, $K$-fold cross-validation,
and a contribution-magnitude plausibility check) are then required.
Because they are motivated by, and validated on, the rotor coast-down
data, these refinements are described where they are used, in
Sect.~\ref{ssec:hidden_state_validation}.

\subsubsection{Termination}
\label{ssec:termination}

The loop terminates when either: (1) the Ljung--Box test fails to
reject whiteness of $r^{(n)}$ (Guard~1), so no discoverable
structure remains; or (2) $\bS^{(n)} = \emptyset$, meaning Gate~1
screens out all candidates.
The final model is $\fphys^{(N^*)}$, where $N^*$ is the termination
iteration.

\subsection{Derivative estimation for observable states}
\label{sec:deriv}

The regressor $\bxi$ requires $\dbqo$ and (for second-order
systems) $\ddbqo$, and the appropriate estimation strategy
depends on the character of the signal. When $\bqo(t)$ is driven by a
known harmonic input at frequency $\omega$, the cleanest route is to
fit a Fourier series and differentiate it analytically, which
introduces no numerical-differentiation noise; this is the strategy
used by the original SPIRAL for aerodynamic inputs. For a steady-state
periodic response---the constant-speed rotor case, valid when $\OO$ is
known and constant---one instead fits a multi-harmonic Fourier series
\begin{equation}
  \tilde{q}_{o,i}(t) \approx \sum_{k} A_k\cos(k\OO t)
  + B_k\sin(k\OO t),
  \label{eq:fourier_fit}
\end{equation}
by least squares and differentiates the fitted
series. A broadband or run-up response calls for total-variation (TV)
regularized differentiation \cite{Chartrand2011},
\begin{equation}
  \hat{\dot{q}}_{o,i}
  = \arg\min_{v}
    \int\!\!\left(\int_0^t v\,d\tau
    - \tilde{q}_{o,i}\right)^{\!2}\!\!dt
    + \lambda \int |\dot{v}|\,dt,
  \label{eq:tv}
\end{equation}
with $\lambda$ selected by cross-validation. Finally, when $\bqo$ is
measured at several spatial locations, spatial finite differences or
algebraic inversion of known shape functions can recover additional
state information without differentiation at all.

\subsection{Diagnostic statistics}
\label{sec:diagnostics}

Three statistics are computed from the joint OLS
\eqref{eq:joint_ols} and reported every iteration as model health
indicators (non-blocking: they do not gate admission).

The predicted residual error sum of squares (PRESS) \cite{montgomery2018},
\begin{equation}
  \mathrm{PRESS}^{(n)}
  = \sum_{i=1}^{N_{\mathrm{tr}}}
    \left(\frac{e_i}{1 - h_{ii}}\right)^{\!2},
  \quad
  h_{ii} = \norm{\bm{Q}(i,:)}_2^2,
  \label{eq:press}
\end{equation}
estimates leave-one-out prediction error without re-fitting
$N_{\mathrm{tr}}$ times, using the economy QR decomposition
$\bPhi_{\mathrm{tr}}^{(n)} = \bm{Q}\bm{R}$, where $e_i$ is the $i$-th
training residual and $h_{ii}$ its leverage.

The variance inflation factor (VIF) \cite{montgomery2018} for term $j$,
\begin{equation}
  \mathrm{VIF}_j = \norm{[\bm{R}^{-1}]_{j,:}}_2^2
  \cdot (N_{\mathrm{tr}}-1),
  \label{eq:vif}
\end{equation}
flags potential collinearity.
$\mathrm{VIF}_j > 10$ warrants inspection but does not block
admission when Gate~2 passes.

Finally, the adjusted coefficient of determination,
\begin{equation}
  R^{2,(n)}_{\mathrm{adj}}
  = 1 - \frac{N_{\mathrm{tr}}-1}{N_{\mathrm{tr}}-p^{(n)}}
    \cdot \frac{\mathrm{RSS}^{(n)}}{SS_T},
  \label{eq:radj}
\end{equation}
penalizes model complexity so that admitting an uninformative term
causes $\Radj$ to fall; here $SS_T$ is the total sum of squares of
$\ddtbqo$ about its mean.

\subsection{Algorithm summary}
\label{sec:algorithm}

Algorithm~\ref{alg:spiral_po} states the full procedure, and
Fig.~\ref{fig:spiral_loop} shows the corresponding control flow: each
iteration forms the projection residual, applies the whiteness guard,
and---while structure remains---generates and screens candidate terms
through Gates~1 and~2 before augmenting the model.

\begin{figure*}
    \centering
    \includegraphics[width=0.74\linewidth]{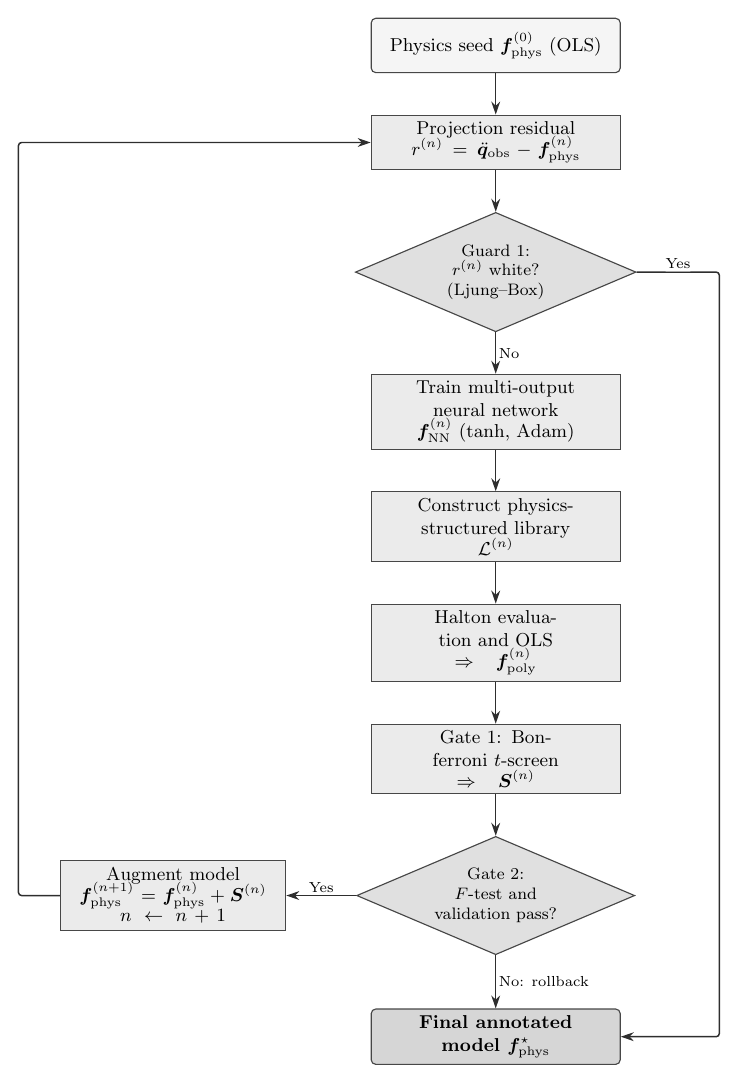}
    \caption{SPIRAL-PO closed-loop discovery process. A minimal physics
seed $\fphys^{(0)}$ is iteratively augmented: at each iteration the
projection residual $r^{(n)}$ is tested for whiteness by the
Ljung--Box statistic (Guard~1) and the loop terminates if no structure
remains; otherwise a multi-output $\tanh$ network fits the residual,
a physics-structured library is screened on Halton samples by a
Bonferroni $t$-test (Gate~1), and the surviving batch $\bS^{(n)}$ is admitted
only if it passes the partial $F$-test and does not worsen the
held-out validation RMSE (Gate~2, with rollback). Admitted terms are
folded into $\fphys^{(n+1)}$ and the loop repeats.}
    \label{fig:spiral_loop}
\end{figure*}

\begin{algorithm}[ht!]
\caption{SPIRAL-PO}
\label{alg:spiral_po}
\begin{algorithmic}[1]
\Require
  $\{\tbqo(t_k)\}_{k=1}^N$, known forcing $\bm{f}(t)$,
  scheduling parameters $\bm{p}_{\mathrm{sched}}$,
  physics seed $\fphys^{(0)}$,
  structured library $\calL^{(0)}$,
  significance level $\alpha$, Halton size $N_H$,
  Ljung--Box lags $m$.
\Ensure
  $\fphys^{(N^*)}$, $\btheta^{(N^*)}$, standard errors,
  $t$-statistics, physical annotations.
\State Estimate $\dbqo$, $\ddbqo$ from $\tbqo$
       (Sect.~\ref{sec:deriv}).
\State Split data: training (70\%), validation (30\%).
\State Form regressor $\bxi$ via \eqref{eq:regressor}.
\State Fit seed $\btheta^{(0)}$ by QR-OLS on training data.
\State $n \leftarrow 0$.
\Repeat
  \State $r^{(n)} \leftarrow \ddtbqo -
         \fphys^{(n)}(\bxi;\btheta^{(n)})$.
  \State Guard 1 --- Ljung--Box: compute $Q_{\mathrm{LB}}^{(n)}$
         via \eqref{eq:ljungbox}.
  \If{$Q_{\mathrm{LB}}^{(n)} < \chi^2_{0.95}(m)$}
    \State Terminate: residual is white. break
  \EndIf
  \State Train $\fnn^{(n)}$ on $r^{(n)}$ (Adam + early stopping).
  \State Update $\calL^{(n)}$ via \eqref{eq:library}.
  \State Evaluate $\fnn^{(n)}$ on $N_H$ Halton points; solve
         \eqref{eq:halton_ols}.
  \State Gate 1 --- Bonferroni $t$-test: retain
         $\bS^{(n)}$.
  \If{$\bS^{(n)} = \emptyset$}
    \State Terminate: no terms pass. break
  \EndIf
  \State Gate 2 --- Partial $F$-test: augment model;
         compute $F^{(n)}$ via \eqref{eq:Ftest}.
  \If{$F^{(n)} \leq F_{0.05,|\bS^{(n)}|,N_{\mathrm{tr}}-p^{(n+1)}}$}
    \State Discard $\bS^{(n)}$; Terminate.
  \EndIf
  \State Validation rollback: evaluate on validation split.
  \If{val.\ RMSE worsens}
    \State Discard $\bS^{(n)}$; Terminate.
  \EndIf
  \State Assign physical annotations to admitted terms.
  \State $\fphys^{(n+1)} \leftarrow \fphys^{(n)}
         + \fpoly^{(n)}|_{\bS^{(n)}}$.
  \State Compute PRESS, VIF, $\Radj$ (Sect.~\ref{sec:diagnostics}).
  \State $n \leftarrow n + 1$.
\Until{termination}
\State \Return $\fphys^{(N^*)}$, $\btheta^{(N^*)}$, standard errors,
       $t$-statistics, physical annotations.
\end{algorithmic}
\end{algorithm}

\section{Identifiability theory}
\label{sec:identifiability}

\subsection{Motivation}

Before running SPIRAL-PO on any dataset it is essential to know
whether the hidden-coupling terms are recoverable in principle
from the observable projection residual.
Without such a guarantee, the method may converge to a misidentified
model even in the noise-free limit.
This section establishes formal sufficient conditions for
identifiability (Theorem~\ref{thm:ident}), together with a matching
necessity result for the excitation requirement
(Proposition~\ref{thm:impossibility}), constituting the theoretical
backbone of
SPIRAL-PO and differentiating it from purely heuristic
sparse-regression approaches.

\medskip
\noindent
\fbox{\parbox{0.95\columnwidth}{\small
\textbf{Scope of identifiability.}
The results below establish identifiability of the \emph{coefficients
of the projected hidden-state coupling} within the prescribed
observable library, and only up to the persistent
library-approximation bias of Theorem~\ref{thm:ident}. They do
\emph{not} imply unique recovery of the hidden-state trajectory
$\bqh(t)$ itself, nor of the complete governing equations; recovering
$\bqh(t)$ requires the additional observability assumptions and the
state-estimation step of Sect.~\ref{ssec:ekf_reconstruction}.}}
\medskip

It is useful to separate what is \emph{proved} from what is
\emph{empirically validated}. The theory of this section concerns an
idealized regression problem: the projected coupling admits the library
representation of~(C1), the excitation satisfies the persistency
condition~(C2), and the noise is zero-mean with bounded variance. In
this setting Theorem~\ref{thm:ident} bounds the coefficient error and
Propositions~\ref{prop:validity}--\ref{thm:sample} govern the statistical
gates. The full SPIRAL-PO algorithm additionally introduces numerical
differentiation, finite-capacity neural-network residual fitting,
Halton-sampled regression, and finite-sample decision errors; these
practical elements are not covered by the theorems and are instead
assessed numerically (Sect.~\ref{ssec:hidden_state_validation}) and
experimentally (Sect.~\ref{sec:testrig}). The theorems should therefore
be read as characterizing the estimation core that the algorithm
approximates, not the end-to-end pipeline.

\subsection{Definitions}

Several objects recur throughout the analysis. For a candidate term
$\phi_\ell(\bxi) \in \calL_{\mathrm{hidden}}$, its \emph{observable
signature} is the time history
\begin{equation}
  s_\ell(t) \;\triangleq\; \phi_\ell\!\bigl(\bxi(t)\bigr).
  \label{eq:signature}
\end{equation}
Stacking these signatures for $p$ candidate terms across $N$ time
samples gives the \emph{signature matrix}
\begin{equation}
\begin{aligned}
      \bPhi_{\bxi}
  = \bigl[&\phi_1(\bxi(t_1)),\ldots,\phi_p(\bxi(t_1));\;
    \ldots;\;\\
    &\phi_1(\bxi(t_N)),\ldots,\phi_p(\bxi(t_N))\bigr],
\end{aligned}
  \label{eq:sig_matrix}
\end{equation}
so that $\Delta \approx \bPhi_{\bxi}\,\btheta^*$. We say the observable
trajectory $\bxi(t)$ is \emph{persistently exciting} of order $p$ if
there exist $T > 0$ and $\mu > 0$ such that
\begin{equation}
  \frac{1}{T}\int_0^T \bxi(t)\,\bxi(t)^{\mathsf{T}}\,dt
  \;\succeq\; \mu\,\bm{I}_{n_\xi},
  \label{eq:PE}
\end{equation}
The left-hand side is the time average of the outer product
$\bxi(t)\bxi(t)^{\mathsf T}$ over the window $[0,T]$---the empirical
second-moment (covariance) matrix of the regressor---and the
requirement $\succeq \mu\bm{I}_{n_\xi}$ makes it uniformly positive
definite, i.e.\ the motion excites every one of the $n_\xi$ regressor
directions with energy at least $\mu$ rather than collapsing onto a
lower-dimensional subspace.
Correspondingly, the induced signature Gram matrix satisfies
$\lambda_{\min}(N^{-1}\bPhi_{\bxi}^{\mathsf{T}}\bPhi_{\bxi})
\geq \mu_\Phi > 0$. This \emph{induced Gramian}
$N^{-1}\bPhi_{\bxi}^{\mathsf{T}}\bPhi_{\bxi}$ has $(k,\ell)$ entry
$N^{-1}\sum_{t} s_k(t)\,s_\ell(t)$, the time-averaged inner product of
two candidate signatures; $\lambda_{\min}>0$ therefore states that the
signatures are linearly independent along the trajectory---no candidate
term is reproducible as a combination of the others---which is exactly
what allows their coefficients to be separated. Finally, a candidate term $\phi_\ell$ is
\emph{dimensionally admissible} if it carries the same physical
dimensions as $[\bm{M}_{oo}^{-1}\bm{f}_o]$, that is, acceleration
units (for a general system in the form \eqref{eq:general_split},
the units of $\dot{\bm{z}}_{\mathrm{o}}$).

\subsection{Identifiability: when is recovery possible?}

When can the hidden-coupling coefficients $\btheta^*$ be recovered
uniquely from the observable residual? Two conditions suffice: the
hidden influence must be expressible in the library up to a small
remainder (functional separability), and the trajectory must excite all
candidate directions (persistent excitation). Dimensional admissibility
is not a hypothesis here---it is enforced at library construction
(Sect.~\ref{ssec:library}). That rich excitation is also
\emph{necessary} is established by Proposition~\ref{thm:impossibility}. The
proof is the standard least-squares consistency and bias--variance
argument \cite{LehmannCasella1998}; what is specific to SPIRAL-PO is that
the excitation quality $\mu_\Phi$ governs both error terms.
Before establishing the formal result, it is useful to explain its engineering interpretation. Recovery of hidden-state coefficients requires two ingredients: first, the unknown dynamics must be representable by the candidate library, and second, the experimental excitation must sufficiently illuminate every candidate direction. The theorem below formalizes these requirements.

\begin{theorem}[Coefficient consistency and error bound]
\label{thm:ident}
Let the full system satisfy \eqref{eq:full_eom} with true coefficient
vector $\btheta^* \in \R^p$, and let every candidate
$\phi_\ell \in \calL_{\mathrm{hidden}}$ be dimensionally admissible
(Sect.~\ref{ssec:library}). Suppose:
\begin{enumerate}[label=\textup{(C\arabic*)}]
  \item Functional separability. The trajectory restriction
        $\Delta(\bqo,\dbqo,t)$ of the hidden coupling (defined at
        \eqref{eq:obs_eom}) satisfies
        $\Delta(\bqo,\dbqo,t)
        = \sum_{\ell=1}^{p} \theta_\ell^*\,\phi_\ell(\bxi(t)) + \eta(t)$,
        with $\norm{\eta}_{L^2}/\norm{\Delta}_{L^2} \le \epsilon_0$. Here $\norm{\cdot}_{L^2}$ denotes the $L^2(0,T)$ signal norm,
$\norm{v}_{L^2} = \bigl(\int_0^T |v(t)|^2\,dt\bigr)^{1/2}$, so that
$\norm{\eta}_{L^2}/\norm{\Delta}_{L^2}$ is the relative energy of the
unmodeled remainder $\eta$ against the total hidden coupling $\Delta$.

  \item Persistent excitation.
        $\lambda_{\min}(N^{-1}\bPhi_{\bxi}^{\mathsf{T}}\bPhi_{\bxi})
        \ge \mu_\Phi > 0$.
\end{enumerate}

Then the OLS estimator
$\hat{\btheta}=(\bPhi_{\bxi}^{\mathsf{T}}\bPhi_{\bxi})^{-1}
\bPhi_{\bxi}^{\mathsf{T}}\ddtbqo$ is unique and satisfies
\begin{equation}
  \norm{\hat{\btheta}-\btheta^*}_2
  \;\le\;
  \underbrace{\frac{\epsilon_0\,\norm{\Delta}_{L^2}}{\sqrt{T\mu_\Phi}}}_{\substack{\text{library bias}\\(\text{persistent})}}
  \;+\;
  \underbrace{\mathcal{O}_p\!\left(\sigma_\varepsilon\sqrt{\tfrac{p}{N\mu_\Phi}}\right)}_{\substack{\text{noise variance}\\(\to 0)}} ,
  \label{eq:ident_final}
\end{equation}
where $T$ is the excitation window of \eqref{eq:PE}. The variance term
vanishes as $N\to\infty$, whereas the library bias does not; hence
$\hat{\btheta}\to\btheta^*$ requires \emph{all three} of
$\epsilon_0\to0$ (rich library), $\sigma_\varepsilon\to0$, and
$N\to\infty$.
\end{theorem}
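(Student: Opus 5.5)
The plan is the standard least-squares decomposition. First I fix the regression model the OLS estimator is actually fitted to. The observable row \eqref{eq:obs_eom} and condition (C1), sampled at $t_1,\dots,t_N$, give the stacked identity
\[
\bm{y} = \bPhi_{\bxi}\btheta^* + \bm{\eta} + \bm{\varepsilon},
\]
where $\bm{y}$ is the measured acceleration with the known seed part (forcing, $\bm{D}_{oo}$, $\bm{K}_{oo}$ contributions) absorbed or included as library columns. Here $\bm{\eta}$ is the sampled library remainder and $\bm{\varepsilon}$ is the zero-mean noise of effective variance $\sigma_\varepsilon^2$ carried into the acceleration channel.

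Uniqueness follows from (C2). Since $\lambda_{\min}(N^{-1}\bPhi_{\bxi}^{\mathsf T}\bPhi_{\bxi})\ge\mu_\Phi>0$, the Gram matrix is invertible, so the normal equations have the single solution $\hat{\btheta}$. Substituting the model gives the exact error decomposition
\[
\hat{\btheta}-\btheta^* = (\bPhi_{\bxi}^{\mathsf T}\bPhi_{\bxi})^{-1}\bPhi_{\bxi}^{\mathsf T}\bm{\eta} + (\bPhi_{\bxi}^{\mathsf T}\bPhi_{\bxi})^{-1}\bPhi_{\bxi}^{\mathsf T}\bm{\varepsilon}.
\]
The triangle inequality then separates the two terms of \eqref{eq:ident_final}.

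For the bias term I use the operator-norm bound $\|(\bPhi^{\mathsf T}\bPhi)^{-1}\bPhi^{\mathsf T}\|_2 = 1/\sigma_{\min}(\bPhi) \le 1/\sqrt{N\mu_\Phi}$. This gives $\|\text{bias}\|\le \|\bm{\eta}\|_2/\sqrt{N\mu_\Phi}$. To convert the discrete norm to the $L^2(0,T)$ norm in (C1), I assume uniform sampling with $\Delta t = T/N$. Then $\|\bm{\eta}\|_2^2 \approx \Delta t^{-1}\|\eta\|_{L^2}^2 = (N/T)\|\eta\|_{L^2}^2$, up to a Riemann-sum error that vanishes for the smooth signals considered. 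This yields $\|\bm{\eta}\|_2/\sqrt{N\mu_\Phi} = \|\eta\|_{L^2}/\sqrt{T\mu_\Phi} \le \epsilon_0\|\Delta\|_{L^2}/\sqrt{T\mu_\Phi}$, which is the persistent term. It does not decay in $N$, because both $\|\bm{\eta}\|_2$ and $\sigma_{\min}(\bPhi)$ scale like $\sqrt N$.

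For the variance term I condition on the regressors and compute the covariance $\mathbb{E}[\,\cdot\,\cdot^{\mathsf T}] = \sigma_\varepsilon^2(\bPhi^{\mathsf T}\bPhi)^{-1}$. Its trace is at most $\sigma_\varepsilon^2 p/(N\mu_\Phi)$, so Markov/Chebyshev gives the $\mathcal{O}_p(\sigma_\varepsilon\sqrt{p/(N\mu_\Phi)})$ bound. This term tends to zero as $N\to\infty$. The closing sentence follows by inspection: the bias vanishes only as $\epsilon_0\to0$, and the variance vanishes as $N\to\infty$ or $\sigma_\varepsilon\to0$.

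The main obstacle is that $\bPhi_{\bxi}$ is built from noisy $\tbqo$ and estimated derivatives, so the noise is neither independent of the regressors nor homoscedastic. I would handle this the way the scope box permits. The theorem is stated for the idealized core: the regressors are exact observable signals, the noise enters only the left-hand side and is zero-mean with bounded variance $\sigma_\varepsilon^2$, and it is independent of $\bPhi_{\bxi}$. Errors-in-variables effects are deferred to the numerical validation. If that is not acceptable, the fallback is to assume only a conditional mean of zero and bounded conditional variance, and to keep the same trace bound.
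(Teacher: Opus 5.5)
Your proposal is correct and follows the paper's proof essentially step for step: uniqueness from (C2), the exact bias/variance split of $\hat{\btheta}-\btheta^*$, the pseudo-inverse bound $(N\mu_\Phi)^{-1/2}$ combined with the Riemann-sum bridge $\norm{\eta}_2\approx\sqrt{N/T}\,\norm{\eta}_{L^2}$ for the persistent bias, and the trace bound $\sigma_\varepsilon^2\operatorname{tr}[(\bPhi_{\bxi}^{\mathsf T}\bPhi_{\bxi})^{-1}]\le\sigma_\varepsilon^2 p/(N\mu_\Phi)$ with Markov's inequality for the variance. Two of your points, folding the known seed part into the response and assuming the noise is independent of the regressors, only spell out what the paper leaves implicit; as in the paper, your bias inequality holds only up to the $(1+o(1))$ Riemann-sum factor.
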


\begin{remark}[Scope of Theorem~\ref{thm:ident}]
The theorem certifies identifiability of the coefficients of the
\emph{projected} hidden-state coupling within the observable library, up
to the persistent bias of \eqref{eq:ident_bias}. It does not by itself
reconstruct the hidden-state trajectory $\bqh(t)$, which is recovered
separately by the state estimator of
Sect.~\ref{ssec:ekf_reconstruction}.
\end{remark}

\begin{proof}
The proof has three parts: existence/uniqueness, the deterministic
bias, and the stochastic variance.

\emph{Existence and uniqueness.}
Under (C2), $\lambda_{\min}(\bPhi_{\bxi}^{\mathsf{T}}\bPhi_{\bxi})
\ge N\mu_\Phi>0$, so the Gram matrix is symmetric positive definite and
invertible; the normal equations therefore have the unique solution
$\hat{\btheta}$ \cite{LehmannCasella1998}. Writing the sampled
observation model
\begin{equation}
  \ddtbqo=\bPhi_{\bxi}\btheta^*+\eta+\varepsilon ,
  \label{eq:ident_model}
\end{equation}
with $\eta$ the deterministic library mismatch of (C1) and $\varepsilon$
the zero-mean measurement noise, the estimation error splits into a
deterministic (bias) and a random (variance) part,
\begin{equation}
  \hat{\btheta}-\btheta^*
  =\underbrace{(\bPhi_{\bxi}^{\mathsf{T}}\bPhi_{\bxi})^{-1}\bPhi_{\bxi}^{\mathsf{T}}\eta}_{\text{bias}}
  +\underbrace{(\bPhi_{\bxi}^{\mathsf{T}}\bPhi_{\bxi})^{-1}\bPhi_{\bxi}^{\mathsf{T}}\varepsilon}_{\text{variance}} .
  \label{eq:ident_error}
\end{equation}

\emph{Bias.}
The spectral norm of the least-squares map is the reciprocal smallest
singular value of $\bPhi_{\bxi}$, bounded through (C2) by
\begin{equation}
  \norm{(\bPhi_{\bxi}^{\mathsf{T}}\bPhi_{\bxi})^{-1}\bPhi_{\bxi}^{\mathsf{T}}}_2
  =\lambda_{\min}(\bPhi_{\bxi}^{\mathsf{T}}\bPhi_{\bxi})^{-1/2}
  \le(N\mu_\Phi)^{-1/2}.
  \label{eq:ident_opnorm}
\end{equation}

Condition (C1) bounds the mismatch in the continuous $L^2$ norm, whereas
\eqref{eq:ident_opnorm} acts on the sampled (Euclidean) vector $\eta$;
the two are linked by the Riemann-sum identity
\begin{equation}
  \norm{\eta}_2^2=\sum_{k=1}^{N}\eta(t_k)^2
  =\frac{N}{T}\,\norm{\eta}_{L^2}^2\,\bigl(1+o(1)\bigr),
  \label{eq:ident_normbridge}
\end{equation}
so that $\norm{\eta}_2=\sqrt{N/T}\,\norm{\eta}_{L^2}(1+o(1))$. The
sampling factor $\sqrt{N}$ in \eqref{eq:ident_normbridge} cancels the
$\sqrt{N}$ in \eqref{eq:ident_opnorm}. With
$\norm{\eta}_{L^2}\le\epsilon_0\norm{\Delta}_{L^2}$ from (C1),
\begin{equation}
  \norm{(\bPhi_{\bxi}^{\mathsf{T}}\bPhi_{\bxi})^{-1}\bPhi_{\bxi}^{\mathsf{T}}\eta}_2
  \le\frac{\norm{\eta}_2}{\sqrt{N\mu_\Phi}}
  \le\frac{\epsilon_0\,\norm{\Delta}_{L^2}}{\sqrt{T\mu_\Phi}} .
  \label{eq:ident_bias}
\end{equation}

The bias is thus \emph{independent of $N$}: enlarging the record cannot
remove error caused by an incomplete library---only a richer library
($\epsilon_0\!\downarrow$) or stronger excitation ($\mu_\Phi\!\uparrow$)
can. (Writing the bias with the spurious factor $1/\sqrt{N}$, as would
follow from equating $\norm{\eta}_2$ with $\norm{\eta}_{L^2}$, is the one
subtlety the empirical-norm bridge \eqref{eq:ident_normbridge}
resolves.)

\emph{Variance.}
The noise term is zero-mean, so we bound it in mean square. For a fixed
matrix $\bm{M}$ and noise with
$\mathbb{E}[\varepsilon\varepsilon^{\mathsf{T}}]=\sigma_\varepsilon^2\bm{I}$,
the trace identity
\begin{equation}
  \mathbb{E}\norm{\bm{M}\varepsilon}_2^2
  =\sigma_\varepsilon^2\operatorname{tr}(\bm{M}\bm{M}^{\mathsf{T}}),
  \label{eq:ident_traceid}
\end{equation}
holds \cite{LehmannCasella1998}, where $\operatorname{tr}(\cdot)$ is the
matrix trace (sum of diagonal entries). Taking
$\bm{M}=(\bPhi_{\bxi}^{\mathsf{T}}\bPhi_{\bxi})^{-1}\bPhi_{\bxi}^{\mathsf{T}}$,
for which $\bm{M}\bm{M}^{\mathsf{T}}=(\bPhi_{\bxi}^{\mathsf{T}}\bPhi_{\bxi})^{-1}$,
\begin{equation}
  \mathbb{E}\norm{(\bPhi_{\bxi}^{\mathsf{T}}\bPhi_{\bxi})^{-1}\bPhi_{\bxi}^{\mathsf{T}}\varepsilon}_2^2
  =\sigma_\varepsilon^2\operatorname{tr}\!\bigl[(\bPhi_{\bxi}^{\mathsf{T}}\bPhi_{\bxi})^{-1}\bigr]
  \le\frac{\sigma_\varepsilon^2 p}{N\mu_\Phi},
  \label{eq:ident_var}
\end{equation}
the last bound using
$\operatorname{tr}[(\bPhi_{\bxi}^{\mathsf{T}}\bPhi_{\bxi})^{-1}]
=\sum_{\ell=1}^{p}\lambda_\ell^{-1}\le p\,\lambda_{\min}^{-1}$ and (C2).
By Markov's inequality this is
$\mathcal{O}_p(\sigma_\varepsilon\sqrt{p/(N\mu_\Phi)})$, which---unlike
the bias---vanishes as $N\to\infty$.

\emph{Combination.}
Inserting \eqref{eq:ident_bias} and \eqref{eq:ident_var} into
\eqref{eq:ident_error} and applying the triangle inequality gives
\eqref{eq:ident_final}. Under library mismatch ($\epsilon_0>0$) the
estimator converges to $\btheta^*$ plus the persistent bias
\eqref{eq:ident_bias}; only when the library spans the coupling exactly
($\epsilon_0=0$) does $\hat{\btheta}\overset{P}{\to}\btheta^*$ as
$N\to\infty$.
\end{proof}

\begin{remark}[Design insight]
Both terms in \eqref{eq:ident_final} scale as $\mu_\Phi^{-1/2}$, so
richer excitation (larger $\mu_\Phi$, e.g.\ a run-up) reduces bias and
variance \emph{together} --- experiment design, not sensor precision, is
the primary lever on accuracy.
\end{remark}

\begin{remark}[On the functional-separability assumption]
Condition (C1) carries most of the weight of
Theorem~\ref{thm:ident} and deserves an honest accounting: it asserts
that the hidden-state influence $\Delta$ is captured, up to a relative
residual $\epsilon_0$, by a finite linear combination of the
physics-structured candidates. It is an approximation, not an identity.
\end{remark}

\subsection{Impossibility: when is recovery impossible?}
\label{ssec:impossibility}

Theorem~\ref{thm:ident} says when recovery is possible; the
complementary question is when it is provably impossible. If the
excitation is too poor for the candidate signatures to be linearly
independent---so the signature matrix loses column rank---then two
distinct coefficient vectors produce identical observable dynamics, and
no estimator can tell them apart, however long the record or precise the
sensors. This is the classical persistent-excitation (rank) condition of system
identification \cite{Ljung1999,SoderstromStoica1989}, specialized to the
observable signature matrix: a rank-deficient $\bPhi_{\bxi}$ leaves the
coefficients non-identifiable, and Le~Cam's two-point bound
\cite{Tsybakov2009} makes the impossibility estimator-independent. We
state it only to set up its concrete consequence for rotor
experiment design (Corollary~\ref{cor:const_speed}).

\begin{proposition}[Non-identifiability under insufficient excitation]
\label{thm:impossibility}
Suppose the signature matrix is rank-deficient over the observed
record, $\operatorname{rank}\bPhi_{\bxi} < p$ (equivalently
$\mu_\Phi = \lambda_{\min}(N^{-1}\bPhi_{\bxi}^{\mathsf{T}}
\bPhi_{\bxi}) = 0$). Then there exist $\btheta^{(1)} \neq \btheta^{(2)}$
that induce identical observable accelerations,
$\bPhi_{\bxi}\btheta^{(1)} = \bPhi_{\bxi}\btheta^{(2)}$. Consequently $\btheta^{(1)}$ and $\btheta^{(2)}$ are statistically
indistinguishable: whatever the record length $N$ or sensor precision
$\sigma_\varepsilon \geq 0$, no estimator $\hat{\btheta}$ can identify
which generated the data, and none is consistent.
\end{proposition}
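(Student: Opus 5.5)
The argument is linear-algebraic first and statistical second. I will take a kernel vector, use it to build two coefficient vectors with identical observable signatures, and then apply Le~Cam's two-point bound \cite{Tsybakov2009}.

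\emph{Step 1 (kernel vector).} Since $\operatorname{rank}\bPhi_{\bxi}<p$, rank--nullity gives $\ker\bPhi_{\bxi}\neq\{\bm{0}\}$. The equivalence with $\mu_\Phi=0$ is immediate, because $\ker\bPhi_{\bxi}=\ker(\bPhi_{\bxi}^{\mathsf T}\bPhi_{\bxi})$ and the Gramian is positive semidefinite. Pick a unit vector $\bm{v}\in\ker\bPhi_{\bxi}$, take any admissible $\btheta^{(1)}$ (for instance the true $\btheta^*$), and set $\btheta^{(2)}=\btheta^{(1)}+c\,\bm{v}$ for some $c\neq0$. Then $\btheta^{(1)}\neq\btheta^{(2)}$ and $\bPhi_{\bxi}\btheta^{(1)}=\bPhi_{\bxi}\btheta^{(2)}$, which is the first claim.

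\emph{Step 2 (identical data laws).} In the sampled observation model \eqref{eq:ident_model}, the data are $\ddtbqo=\bPhi_{\bxi}\btheta+\eta+\varepsilon$. Here $\eta$ and the law of $\varepsilon$ do not depend on the choice between the two hypotheses. Both hypotheses therefore induce exactly the same distribution $P_1=P_2$ on the observed record, for every $N$ and every $\sigma_\varepsilon\ge0$. The case $\sigma_\varepsilon=0$ is covered because the two records coincide pointwise. Note that the regressor $\bxi(t)$ is fixed by the record, so $\bPhi_{\bxi}$ is common to both hypotheses.

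\emph{Step 3 (estimator-independent impossibility).} Le~Cam's two-point inequality gives, for any estimator $\hat{\btheta}$,
\[
\max_{i=1,2}\Pr_{i}\bigl(\norm{\hat{\btheta}-\btheta^{(i)}}_2\ge \tfrac{|c|}{2}\bigr)\ \ge\ \tfrac12\bigl(1-\mathrm{TV}(P_1,P_2)\bigr)=\tfrac12 .
\]
This follows from the triangle inequality: $\hat{\btheta}$ cannot lie within $|c|/2$ of both points, and the probabilities of the two complementary events sum to at least one under the common law. Hence no test distinguishes the hypotheses better than a coin flip, and the error probability is bounded below by $1/2$ uniformly in $N$. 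Consistency therefore fails, since consistency would force both probabilities to $0$. The same bound holds for every $c$, so the non-identifiable set is in fact the whole affine line $\btheta^{(1)}+\ker\bPhi_{\bxi}$.

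\emph{Main obstacle.} The only delicate point is interpreting "whatever the record length $N$". As $N$ grows, the rank deficiency must persist, meaning $\bm{v}$ must stay in the kernel of every enlarged signature matrix. I would make this explicit by reading the hypothesis as holding over the entire observed record, i.e. $\sum_\ell v_\ell\phi_\ell(\bxi(t))\equiv0$ along the trajectory. The constant-speed rotor of Corollary~\ref{cor:const_speed} is of exactly this type, with $\OO$ constant and the columns $\OO\dot q$ and $\dot q$ collinear. Under that reading the kernel vector is independent of $N$, and Steps 2--3 go through unchanged.
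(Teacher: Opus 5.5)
Your proof is correct and follows the paper's own argument: a nonzero null vector $\bm{v}$ of $\bPhi_{\bxi}$ yields $\btheta^{(2)}=\btheta^{*}+c\bm{v}$ with the same data law as $\btheta^{(1)}=\btheta^{*}$, and Le~Cam's two-point bound with $\mathrm{TV}=0$ then gives the $\tfrac12$ error floor for every estimator. Two parts of your write-up go slightly further than the paper without changing the route: you state the bound directly in estimator form rather than through an induced test, and you require the kernel vector to persist as $N$ grows, which the paper's ``uniformly in $N$'' conclusion tacitly assumes.
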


\begin{proof}
Rank deficiency furnishes a nonzero null vector
$\bm{v}\in\mathrm{null}(\bPhi_{\bxi})$. Fixing the data-generating value
$\btheta^{(1)}=\btheta^*$ and setting $\btheta^{(2)}=\btheta^*+c\bm{v}$
with $c\neq0$ gives
\begin{equation}
  \bPhi_{\bxi}\btheta^{(2)}=\bPhi_{\bxi}\btheta^{(1)},
  \label{eq:imp_null}
\end{equation}
so both vectors produce the same mean observable acceleration at every
sample. Under the model $\ddtbqo=\bPhi_{\bxi}\btheta+\varepsilon$ with
$\varepsilon$ independent of $\btheta$, the induced data distributions
are equal, so their total-variation distance
$\mathrm{TV}(P,Q)\triangleq\sup_A|P(A)-Q(A)|$ vanishes,
$\mathrm{TV}(P_{\btheta^{(1)}},P_{\btheta^{(2)}})=0$, and the Le~Cam
two-point bound \cite{Tsybakov2009} gives, for any test
$\psi\in\{1,2\}$,
\begin{equation}
  \inf_{\psi}\;\max_{i\in\{1,2\}}P_{\btheta^{(i)}}(\psi\neq i)
  \;\ge\;\tfrac12\bigl(1-\mathrm{TV}(P_{\btheta^{(1)}},P_{\btheta^{(2)}})\bigr)
  =\tfrac12 .
  \label{eq:lecam}
\end{equation}

Any estimator induces such a test, so by \eqref{eq:lecam} it cannot
distinguish $\btheta^{(1)}$ from $\btheta^{(2)}$ better than chance; the
data-generating vector is therefore unrecoverable and no estimator is
consistent, uniformly in $N$ and $\sigma_\varepsilon$.
\end{proof}

\begin{remark}[What triggers rank deficiency]
The hypothesis is about the \emph{functional} rank of the library
along the trajectory---the number of linearly independent signatures
$\{\phi_\ell(\bxi(\cdot))\}$---not the geometric dimension of the state
manifold. A low-dimensional trajectory usually triggers it: the
constant-speed rotor (Corollary~\ref{cor:const_speed}) is the archetype,
where distinct candidates alias onto a common set of harmonics and
$\operatorname{rank}\bPhi_{\bxi}$ drops below $p$.
\end{remark}

\subsection{Gate consistency: does the algorithm recover the right model?}
\label{ssec:gate_consistency}

\emph{Do the two gates deliver the guarantees of the idealised
estimate?} Theorem~\ref{thm:ident} and Proposition~\ref{thm:impossibility}
concern the least-squares solution on a fixed library; the algorithm
instead admits terms batch by batch through Gate~1 (per-term
significance) and Gate~2 (joint improvement). The next two propositions
show it inherits those guarantees---detecting true terms and controlling
false discoveries (Proposition~\ref{prop:validity}), then halting once no
structure remains (Proposition~\ref{prop:termination}). Each ingredient---the
Bonferroni bound \cite{Dunn1961}, the $t$-test, and the Ljung--Box
whiteness test \cite{LjungBox1978}---is standard; the contribution is
their assembly into guarantees for the iterative gating loop.

\begin{proposition}[Statistical validity of the two-gate screening]
\label{prop:validity}
Assume (C1)--(C2) of Theorem~\ref{thm:ident} and
\begin{enumerate}[label=\textup{(C\arabic*)},start=3]
  \item Noise bound.
        $\mathrm{SNR} \triangleq
        \norm{\Delta}_{L^2}/\sigma_\varepsilon > \mathrm{SNR}_{\min}$,
        where
        \begin{equation}
          \mathrm{SNR}_{\min}
          = \frac{\sqrt{p}}{\sqrt{\mu_\Phi}\,(1 - \epsilon_0)}\,
            \sqrt{\frac{F_{0.05,\,p,\,N-p}}{N}}.
          \label{eq:SNR_min}
        \end{equation}
\end{enumerate}

Then \textup{(i)} each true term $\phi_\ell$ ($\theta_\ell^* \neq 0$)
passes Gate~1 with probability $\to 1$; and \textup{(ii)} each null term
($\theta_\ell^* = 0$) is rejected with family-wise error rate
$\leq \alpha$.
\end{proposition}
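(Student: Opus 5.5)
The plan is to reduce both claims to the behaviour of the OLS $t$-statistics $\hat t_\ell=\hat\theta_\ell/\mathrm{SE}(\hat\theta_\ell)$. I will reuse the error decomposition \eqref{eq:ident_error} and the bounds \eqref{eq:ident_bias}--\eqref{eq:ident_var} from the proof of Theorem~\ref{thm:ident}. Under (C2), the covariance of the variance part is $\sigma_\varepsilon^2(\bPhi_{\bxi}^{\mathsf T}\bPhi_{\bxi})^{-1}$. Each diagonal entry is therefore at most $\sigma_\varepsilon^2/(N\mu_\Phi)$, which gives
\[
\mathrm{SE}(\hat\theta_\ell)\le \hat\sigma_\varepsilon/\sqrt{N\mu_\Phi}
\]
with $\hat\sigma_\varepsilon\overset{P}{\to}\sigma_\varepsilon$ (possibly inflated by the library remainder, which I will absorb into the factor $(1-\epsilon_0)$). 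So $\mathrm{SE}=\mathcal{O}_p(N^{-1/2})$, and everything hinges on the numerator.

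For part (i), take $\theta^*_\ell\neq0$. By Theorem~\ref{thm:ident},
\[
|\hat\theta_\ell|\ge|\theta^*_\ell|-\frac{\epsilon_0\norm{\Delta}_{L^2}}{\sqrt{T\mu_\Phi}}-\mathcal{O}_p\!\left(\sigma_\varepsilon\sqrt{\tfrac{p}{N\mu_\Phi}}\right).
\]
I will lower-bound the true-term magnitudes through the signal energy. Since $\norm{\bPhi\btheta^*}\ge(1-\epsilon_0)\norm{\Delta}$ by (C1) and the triangle inequality, the coefficient vector carries energy at least of order $(1-\epsilon_0)\norm{\Delta}_{L^2}$ on the scale set by $\mu_\Phi$. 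Substituting into the $t$-ratio, $|\hat t_\ell|$ exceeds a quantity proportional to $\mathrm{SNR}\cdot(1-\epsilon_0)\sqrt{N\mu_\Phi/p}$. Condition (C3) is exactly the statement that this exceeds $\sqrt{F_{0.05,p,N-p}}$. Because the Bonferroni critical value $t_{\alpha/(2p),N-p}$ grows only like $\sqrt{\log p}$ while the ratio grows like $\sqrt N$, we get $P(|\hat t_\ell|>t_{\alpha/(2p),N-p})\to1$.

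The main obstacle is this step. (C3) controls the aggregate signal, not each individual $|\theta^*_\ell|$, so I will need either an implicit per-term reading of $\mathrm{SNR}$ (each true term's contribution is comparable to $\norm{\Delta}$) or a joint statement matching the $F$ quantile. I will try to make the argument via the joint $F$-type quantity $\hat\btheta^{\mathsf T}\bPhi^{\mathsf T}\bPhi\hat\btheta/(p\hat\sigma^2)$, which is why $F_{0.05,p,N-p}$ appears in \eqref{eq:SNR_min}. I will then note that the per-term conclusion follows asymptotically because the threshold is $\mathcal{O}(1)$ while the statistic diverges.

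For part (ii), take $\theta^*_\ell=0$. If $\epsilon_0=0$ and the noise is Gaussian, each $\hat t_\ell$ is exactly Student-$t$ with $N-p$ degrees of freedom, so $P(|\hat t_\ell|>t_{\alpha/(2p),N-p})=\alpha/p$. The union bound over the at most $p$ null terms then gives family-wise error at most $\alpha$; this is Bonferroni \cite{Dunn1961}. For non-Gaussian, zero-mean, finite-variance noise the same holds asymptotically via the CLT for $\bPhi^{\mathsf T}\varepsilon/\sqrt N$. With $\epsilon_0>0$, the bias term \eqref{eq:ident_bias} shifts the null $t$-statistic. I will state that the guarantee holds for the idealized estimator on the exact-library core, consistent with the scope box above.
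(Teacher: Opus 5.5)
Your skeleton matches the paper's proof. You bound the standard error by $\sigma_\varepsilon/\sqrt{N\mu_\Phi}$ via (C2), and you let $|\hat t_\ell|$ grow like $\sqrt N$ against a Bonferroni critical value that stays bounded for fixed $p$. For (ii) you use the exact $t_{N-p}$ null law plus the union bound, which is essentially the paper's argument and is fine.

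The gap is in part (i). Your lower bound
\[
|\hat\theta_\ell|\ge|\theta^*_\ell|-\epsilon_0\norm{\Delta}_{L^2}/\sqrt{T\mu_\Phi}-\mathcal{O}_p(N^{-1/2})
\]
keeps an $N$-independent bias term. Nothing in (C1)--(C3) makes that term smaller than $|\theta^*_\ell|$. If it is not smaller, $\hat\theta_\ell$ need not stay away from zero, and $|\hat t_\ell|$ need not diverge.

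Your plan to close this through (C3) cannot work, for three reasons:
\begin{enumerate}
\item (C3) compares $\norm{\Delta}_{L^2}$ with $\sigma_\varepsilon$. It says nothing about the bias relative to any individual coefficient.
\item The energy bound $\norm{\bPhi_{\bxi}\btheta^*}\ge(1-\epsilon_0)\norm{\Delta}$ yields a lower bound on $\norm{\btheta^*}_2$ only through $\lambda_{\max}(\bPhi_{\bxi}^{\mathsf T}\bPhi_{\bxi})$. Using $\mu_\Phi$ gives an upper bound instead. Even with $\lambda_{\max}$, it does not control each $|\theta^*_\ell|$ separately.
\item Because $F_{0.05,p,N-p}$ tends to a constant, $\mathrm{SNR}_{\min}=\mathcal{O}(N^{-1/2})$. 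So (C3) carries no information in the limit you are taking.
\end{enumerate}
The joint $F$-type statistic is therefore a detour, and the paper does not route (i) through (C3) quantitatively either.

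The missing step is the one the paper imposes at the outset, for both parts. Work in the small-bias regime $\epsilon_0\sqrt N\to0$, which holds exactly when $\Delta$ lies in the library span. There the library bias is negligible against the $\mathcal{O}(N^{-1/2})$ standard error. Then $\hat\theta_\ell=\theta^*_\ell+\mathcal{O}_p(N^{-1/2})$, and for any fixed $\theta^*_\ell\neq0$ you get $|\hat t_\ell|\gtrsim|\theta^*_\ell|\sqrt{N\mu_\Phi}/\sigma_\varepsilon\to\infty$ directly. No per-term energy argument is needed.

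You already adopt this restriction in (ii), where it is indispensable: a nonzero bias component on a null coordinate would make that null $t$-statistic diverge as well and destroy the family-wise error control. Extending the same restriction to (i) and dropping the (C3)/$F$ route closes the proof.
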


\begin{proof}
Throughout we work in the \emph{small-bias regime}
$\epsilon_0\sqrt{N}\to0$---exactly satisfied when the coupling lies in
the library span, $\Delta\in\mathrm{span}(\calL_{\mathrm{hidden}})$---so
that by Theorem~\ref{thm:ident} the persistent library bias
\eqref{eq:ident_bias} is negligible relative to the sampling standard
error and the estimator is effectively unbiased at the sampling scale.

\emph{(i) Detection of true terms.} Fix a true term,
$\theta_\ell^*\neq0$. In the small-bias regime Theorem~\ref{thm:ident}
gives $\hat\theta_\ell=\theta_\ell^*+\mathcal{O}_p(N^{-1/2})$, and, by
the coefficient covariance \eqref{eq:ident_var} and (C2), its standard
error obeys
\begin{equation}
  \hat\sigma_\ell
  =\sigma_\varepsilon\sqrt{\bigl[(\bPhi_{\bxi}^{\mathsf{T}}\bPhi_{\bxi})^{-1}\bigr]_{\ell\ell}}
  \;\le\;\frac{\sigma_\varepsilon}{\sqrt{N\mu_\Phi}} .
  \label{eq:gate1_se}
\end{equation}

The Gate-1 $t$-statistic is therefore
\begin{equation}
  |\hat t_\ell|=\frac{|\hat\theta_\ell|}{\hat\sigma_\ell}
  \;\ge\;\frac{|\theta_\ell^*|\sqrt{N\mu_\Phi}}{\sigma_\varepsilon}
  +\mathcal{O}_p(N^{-1/2}),
  \label{eq:cons_t}
\end{equation}
which grows without bound like $\sqrt N$. The Bonferroni critical value
$t_{\alpha/(2p),\,N-p}$ in \eqref{eq:gate1} grows only like
$\sqrt{2\log p}$, so once the signal exceeds the
threshold~\eqref{eq:SNR_min} the statistic overtakes the critical value
with probability $\to1$: every true term passes Gate~1.

\emph{(ii) Control of false discoveries.} Fix a null term,
$\theta_\ell^*=0$. In the small-bias regime its projected bias is
negligible, so $\hat\theta_\ell$ is pure estimation noise and, by
standard regression theory \cite{LehmannCasella1998,Montgomery2017}, its
studentized statistic follows the null distribution
$\hat t_\ell\sim t_{N-p}$. A two-sided test at the
Bonferroni-reduced tail level $\alpha/(2p)$ therefore falsely admits any
one null term with probability $\alpha/p$, and the union bound over the
at most $p$ candidates gives
\begin{equation}
  \mathbb{P}(\text{any null term admitted})
  \;\le\;\sum_{\ell:\,\theta_\ell^*=0}\frac{\alpha}{p}
  \;\le\;\alpha,
  \label{eq:cons_fwer}
\end{equation}
so the family-wise error rate is at most $\alpha$.
\end{proof}

\begin{proposition}[Finite termination]
\label{prop:termination}
Suppose the candidate library $\calL$ is finite and
\begin{enumerate}[label=\textup{(C\arabic*)},start=4]
  \item Residual whiteness.
        The measurement noise $\{\varepsilon_k\}$ is zero-mean and
        serially uncorrelated (already implied by the i.i.d.\ model
        \eqref{eq:measurement}), and the library-mismatch signal $\eta$
        is asymptotically uncorrelated in time, i.e.\ its sample
        autocorrelations satisfy $\hat\rho_k[\eta]\to0$ as $N\to\infty$
        for each lag $k$.
\end{enumerate}
Then the loop terminates in finitely many iterations, $N^* < \infty$.
\end{proposition}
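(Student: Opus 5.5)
The proof rests on a monotonicity (potential-function) argument over the finite library, backed by the termination rules of Sect.~\ref{ssec:termination}. At every non-terminal iteration $n$, the loop either stops (Guard~1 fires, $\bS^{(n)}=\emptyset$, Gate~2 rejects, or validation rollback triggers) or admits a nonempty batch $\bS^{(n)}\subseteq\calL^{(n)}$. By \eqref{eq:library}, admitted terms enter the active set $\mathcal{A}^{(n+1)}$ and are excluded from all later libraries. Hence the cardinality $|\calL^{(n)}|$ strictly decreases at each non-terminal iteration, i.e.\ $|\calL^{(n+1)}|\le|\calL^{(n)}|-1$. Since $|\calL^{(0)}|=|\calL|<\infty$, at most $|\calL|$ augmentations occur.

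After those, the library is empty. Then Gate~1 necessarily returns $\bS^{(n)}=\emptyset$, which is termination rule~(2). This already gives $N^*\le|\calL|+1$ deterministically.

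The role of (C4) is to show that termination is not merely forced by exhaustion but happens \emph{through the whiteness guard} once the true terms are in. Suppose the active set spans the library part of $\Delta$. By Theorem~\ref{thm:ident} the residual is then $r^{(n)}=\eta+\varepsilon+\Phi(\hat\btheta-\btheta^*)$. The last term is $\mathcal{O}_p(N^{-1/2})$ plus the bias. I would write $\hat\rho_k[r^{(n)}]$ as a ratio of lagged sample cross-products and expand it into the pure noise autocovariance, the $\eta$ autocovariance, and cross terms. The noise autocovariance is $\mathcal{O}_p(N^{-1/2})$ because the noise is i.i.d. The $\eta$ autocovariance tends to $0$ by (C4). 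The cross terms between $\eta$ and $\varepsilon$ vanish by independence and the law of large numbers. It would then follow that $Q_{\mathrm{LB}}^{(n)}$ is asymptotically $\chi^2(m)$, so the guard fires with probability at least $0.95$ asymptotically. Either way, the loop stops.

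The main obstacle is this last step. Under the statement's hypothesis, (C4) only gives $\hat\rho_k[\eta]\to0$. The Ljung--Box statistic, however, scales $\hat\rho_k^2$ by $N$, so the mismatch contributes roughly $N\hat\rho_k[\eta]^2$, which need not vanish. I would therefore not make finite termination depend on the guard. Instead, I would present the counting argument as the proof of $N^*<\infty$. The whiteness analysis would appear only as a supplementary statement: it holds under the small-bias regime $\epsilon_0\sqrt N\to0$ used in Proposition~\ref{prop:validity}, where $N\hat\rho_k[\eta]^2\to0$. I would also check that the rollback branches of Algorithm~\ref{alg:spiral_po} are all labeled ``Terminate'', so no iteration can leave $\calL^{(n)}$ unchanged without stopping.
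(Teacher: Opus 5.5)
Your proof is correct, and its load-bearing step is the paper's first mechanism: each accepted batch is nonempty, disjoint from the active set, and never removed, so at most $|\calL|$ acceptances can occur, and every other branch of Algorithm~\ref{alg:spiral_po} terminates.

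The difference is in how (C4) is used. The paper runs a second, co-equal mechanism and then ``combines'' the two: once the true terms are admitted, $r^{(N^*)}=\eta+\varepsilon+o_P(1)$, so $\hat\rho_k\to0$, so $Q_{\mathrm{LB}}\overset{d}{\to}\chi^2(m)$. You instead let the counting argument carry the whole proof. That gives a deterministic, non-asymptotic bound $N^*\le|\calL|+1$ and shows (C4) is not needed for finiteness at all.

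Your objection to the whiteness step is sound. Because $Q_{\mathrm{LB}}\approx N\sum_k\hat\rho_k^2$, the deterministic contribution of $\eta$ to $\hat\rho_k$ must be $o(N^{-1/2})$, not merely $o(1)$. Even granting the paper's conclusion, $\mathbb{P}(Q_{\mathrm{LB}}<\chi^2_{0.95}(m))\to0.95$ does not by itself force a stop. So the paper's proof in fact also rests on counting. What the paper's route would add, once repaired, is the information that the loop typically halts through the whiteness guard rather than by exhausting the library.

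Two small fixes.
\begin{enumerate}
\item Use the paper's potential $|\mathcal{A}^{(n)}|$ rather than $|\calL^{(n)}|$. The library $\calL_{\mathrm{hidden}}^{(n)}$ is indexed by $n$ and reserve terms may be activated later, so $|\calL^{(n+1)}|\le|\calL^{(n)}|-1$ is not guaranteed. By contrast, $\mathcal{A}^{(n)}\subsetneq\mathcal{A}^{(n+1)}\subseteq\calL$ always holds, and your bound survives unchanged.
\item In your supplementary remark, the small-bias regime does not make $N\hat\rho_k[\eta]^2$ small, since $\hat\rho_k[\eta]$ is scale-invariant. What it shrinks is the weight of $\eta$ in $r^{(n)}$. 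The $\eta$-driven part of $N\hat\rho_k[r^{(n)}]^2$ is roughly $N\bigl(\hat\rho_k[\eta]\,\norm{\eta}_2^2/\norm{r^{(n)}}_2^2\bigr)^2$, which is $\mathcal{O}(N\epsilon_0^4)$ at fixed signal-to-noise ratio and so vanishes when $\epsilon_0\sqrt N\to0$.
\end{enumerate}
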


\begin{proof}
Two independent mechanisms cap the number of iterations.

\emph{Finiteness of expansions.} Each accepted iteration adds at least
one new term from the finite library $\calL$ and never removes one, so
the active set grows strictly, $\mathcal{A}^{(k)}\subsetneq
\mathcal{A}^{(k+1)}$, and at most $|\calL|=p$ acceptances can occur.

\emph{Whiteness after the last addition.} Once every true term has been
admitted, Theorem~\ref{thm:ident} gives
$\bPhi_{\bxi}(\btheta^*-\hat{\btheta})=o_P(1)$, so the residual reduces to
\begin{equation}
  r^{(N^*)}=\Delta-\bPhi_{\bxi}\hat{\btheta}
  =\eta+\varepsilon+o_P(1),
  \label{eq:term_resid}
\end{equation}
that is, only library mismatch and measurement noise remain. By (C4)
both are asymptotically serially uncorrelated, so every sample
autocorrelation obeys $\hat\rho_k(r^{(N^*)})\to0$. Consequently the
Ljung--Box statistic \cite{LjungBox1978}
\begin{equation}
  Q_{\mathrm{LB}}=N(N+2)\sum_{k=1}^{m}\frac{\hat\rho_k^{\,2}}{N-k}
  \;\overset{d}{\longrightarrow}\;\chi^2(m)
  \label{eq:term_lb}
\end{equation}
converges to its null law, and the stopping test satisfies
\begin{equation}
  \mathbb{P}\!\left(Q_{\mathrm{LB}}<\chi^2_{0.95}(m)\right)\to0.95 ,
  \label{eq:term_stop}
\end{equation}
so the whiteness guard fires and the loop halts. Combining the two
mechanisms, the algorithm terminates at a finite (random) index
$N^*<\infty$.
\end{proof}

\subsection{Sample complexity: how much data is enough?}

How much data is enough? The following proposition gives the minimum
sample count $N_{\min}$ in closed form. Its key consequence is
$N_{\min}\propto\mu_\Phi^{-1}$: excitation quality is a stronger lever
than sensor precision, since noise enters only through
$\sigma_\varepsilon^2$. The bound is a standard Wald-test power
(sample-size) calculation \cite{LehmannCasella1998} specialized to the
Gate-1 threshold.

\begin{proposition}[Sample size for term detection]
\label{thm:sample}
Under (C1)--(C3) of Theorem~\ref{thm:ident} and Proposition~\ref{prop:validity}, the Bonferroni
Gate~1 passes every true term $\phi_\ell$ (with
$|\theta_\ell^*| \geq \theta_{\min} > 0$) with probability at
least $1 - \delta$ provided
\begin{equation}
  N \;\geq\; N_{\min}(\delta)
  \;=\; \frac{\sigma_\varepsilon^2}{\mu_\Phi\,\theta_{\min}^2}
        \cdot
        \left[
          z_{1-\delta/2} + z_{1-\alpha/(2p)}
        \right]^2
        + p,
  \label{eq:Nmin}
\end{equation}
where $z_\gamma = \Phi^{-1}(\gamma)$ is the standard normal
quantile, and
$\mu_\Phi = \lambda_{\min}(N^{-1}\bPhi_{\bxi}^{\mathsf{T}}
\bPhi_{\bxi})$.
\end{proposition}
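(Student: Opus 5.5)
This is a standard Wald-test power calculation on the Gate~1 statistic, run in the small-bias regime used in the proof of Proposition~\ref{prop:validity}, so that the library bias \eqref{eq:ident_bias} is negligible at the sampling scale. The plan has four steps.

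\textbf{Step 1: distribution of the estimate.} Fix a true term with $|\theta_\ell^*|\ge\theta_{\min}$. From the error decomposition \eqref{eq:ident_error}, $\hat\theta_\ell-\theta_\ell^*$ is, up to negligible bias, a fixed linear functional of $\varepsilon$. Its variance is $\sigma_\ell^2=\sigma_\varepsilon^2[(\bPhi_{\bxi}^{\mathsf T}\bPhi_{\bxi})^{-1}]_{\ell\ell}$. By (C2), as already shown in \eqref{eq:gate1_se},
\begin{equation*}
  \sigma_\ell\le\frac{\sigma_\varepsilon}{\sqrt{N\mu_\Phi}} .
\end{equation*}
For Gaussian noise the estimate is exactly normal, $\hat\theta_\ell\sim\mathcal N(\theta_\ell^*,\sigma_\ell^2)$; otherwise a CLT makes it asymptotically normal.

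\textbf{Step 2: pass event.} Write $c=t_{\alpha/(2p),N-p}$ for the Bonferroni critical value. Gate~1 passes term $\ell$ exactly when $|\hat\theta_\ell|/\hat\sigma_\ell>c$. Let $\lambda=|\theta_\ell^*|/\sigma_\ell$ be the noncentrality parameter. Ignoring the far tail on the wrong side, the miss probability is at most
\begin{equation*}
  \mathbb P\bigl(Z<c-\lambda\bigr),\qquad Z\sim\mathcal N(0,1),
\end{equation*}
so it suffices to have $\lambda\ge c+z_{1-\delta/2}$. Here I use the $\delta/2$ quantile deliberately: it leaves slack for the second tail and for the error in replacing $\hat\sigma_\ell$ by $\sigma_\ell$.

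\textbf{Step 3: solve for $N$.} Combining Step~1 with $|\theta_\ell^*|\ge\theta_{\min}$ gives
\begin{equation*}
  \lambda\ge\frac{\theta_{\min}\sqrt{N\mu_\Phi}}{\sigma_\varepsilon}.
\end{equation*}
So the condition of Step~2 holds whenever
\begin{equation*}
  N\ge\frac{\sigma_\varepsilon^2}{\mu_\Phi\theta_{\min}^2}\bigl(z_{1-\delta/2}+c\bigr)^2 .
\end{equation*}

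\textbf{Step 4: replace the $t$ quantile by the normal quantile.} This step accounts for the additive $+p$ in \eqref{eq:Nmin}, and I expect it to be the main obstacle. The formula is stated with $z_{1-\alpha/(2p)}$, but the test uses the Student quantile $t_{\alpha/(2p),N-p}$ and the estimated $\hat\sigma_\ell$, which uses the residual variance with $N-p$ degrees of freedom. I would first bound $t_{\alpha/(2p),N-p}\le z_{1-\alpha/(2p)}(1+O(1/(N-p)))$. Then I would argue that requiring the effective degrees of freedom $N-p$, rather than $N$, to exceed the leading term absorbs this inflation together with the chi-square fluctuation of $\hat\sigma_\ell/\sigma_\ell$, which concentrates at rate $(N-p)^{-1/2}$. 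The remaining $\delta/2$ of slack from Step~2 covers the event that $\hat\sigma_\ell$ overshoots. If a rigorous nonasymptotic version is too delicate, I would state the bound as asymptotically valid, with the $+p$ as the degrees-of-freedom correction, consistent with the paper's remark that this is a standard Wald sample-size calculation. Finally, since Gate~1 is applied to each term separately, the conclusion holds for each true term, and a union bound extends it to all of them.
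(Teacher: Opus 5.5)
Your core argument is the paper's: in the small-bias regime you treat $\hat\theta_\ell$ as normal with standard error at most $\sigma_\varepsilon/\sqrt{N\mu_\Phi}$ by (C2). You then require the noncentrality $\theta_{\min}/\sigma_\ell$ to clear the Bonferroni critical value by $z_{1-\delta/2}$ and solve for $N$. The paper takes the threshold to be its Gaussian limit $z_{1-\alpha/(2p)}$ and simply asserts that the additive $p$ accounts for degrees of freedom, so your Step~4 is work it never attempts.

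The step that is actually wrong is your closing union bound. If each of $k$ true terms is missed with probability at most $\delta$, the union bound gives simultaneous detection with probability at least $1-k\delta$, not $1-\delta$. You have also already spent the second $\delta/2$ on the overshoot of $\hat\sigma_\ell$. The proposition is per-term: the paper's proof treats a single coefficient, and its subsequent remark calls the result a per-term detection bound. So drop that sentence, or put $\delta/k$ in the quantile if you want the simultaneous version. Relatedly, you need no slack for the ``second tail''. The wrong-sign tail only enlarges the pass event, so with known $\sigma_\ell$ the bound $\mathbb{P}(\text{miss})\le\Phi(c-\lambda)$ already holds as stated.

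Your Step~4 cannot be made nonasymptotic, because with the exact Student threshold the bound is false for small $N-p$. The expansion $t_{\alpha/(2p),N-p}\le z_{1-\alpha/(2p)}\bigl(1+O(1/(N-p))\bigr)$ is only asymptotic, and the $+p$ buys little when $N-p$ is tiny. A concrete failure:
\begin{itemize}
\item Take $p=6$, $\alpha=\delta=0.05$ and $|\theta_\ell^*|=\theta_{\min}$.
\item Let term $\ell$ lie along the weakest eigendirection of the Gram matrix, so the (C2) bound on $\sigma_\ell$ is attained.
\item Choose the design so that $\sigma_\varepsilon^2(z_{0.975}+z_{1-\alpha/12})^2/(\mu_\Phi\theta_{\min}^2)=1$.
\end{itemize}
Then $N_{\min}=7$, so $N-p=1$ and $\lambda=\sqrt{7\cdot 21.2}\approx 12.2$. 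The one-degree-of-freedom Bonferroni quantile is about $76$. The exact $t$-test therefore passes only with probability about $\mathbb{P}(|W|<0.16)\approx 0.13$, where $W$ is standard normal. Your fallback is thus forced rather than optional. State the result in the asymptotic sense in which the paper proves it, namely a Gaussian statistic against the Gaussian threshold. The $+p$ is then a heuristic degrees-of-freedom correction, not a derived term.
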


\begin{proof}
Consider a single true coefficient with $|\theta_\ell^*|\ge\theta_{\min}$
in the small-bias regime of Proposition~\ref{prop:validity}, so that the
library bias \eqref{eq:ident_bias} is negligible against the standard
error. Then, by the coefficient covariance \eqref{eq:ident_var} and the
asymptotic normality of least squares \cite{LehmannCasella1998},
\begin{equation}
  \hat{\theta}_\ell\sim\mathcal{N}\!\bigl(\theta_\ell^*,\,
  \sigma_\varepsilon^2[(\bPhi_{\bxi}^{\mathsf{T}}\bPhi_{\bxi})^{-1}]_{\ell\ell}\bigr),
  \label{eq:sample_dist}
\end{equation}
and under (C2) its variance is controlled by the excitation quality,
$[(\bPhi_{\bxi}^{\mathsf{T}}\bPhi_{\bxi})^{-1}]_{\ell\ell}\le(N\mu_\Phi)^{-1}$,
so $\hat{\sigma}_\ell\le\sigma_\varepsilon/\sqrt{N\mu_\Phi}$. Gate~1
rejects $H_0:\theta_\ell=0$ when $|\hat t_\ell|>z_{1-\alpha/(2p)}$, the
Gaussian limit of the Bonferroni threshold. Under the alternative
$\theta_\ell=\theta_{\min}$ the statistic is centered at
$\theta_{\min}/\hat\sigma_\ell$, so its power reaches $1-\delta$ exactly
when this center clears the critical value by the $\delta$-quantile
margin,
\begin{equation}
  \frac{\theta_{\min}}{\hat{\sigma}_\ell}
  \;\ge\;z_{1-\alpha/(2p)}+z_{1-\delta/2}.
  \label{eq:sample_power}
\end{equation}

Substituting $\hat{\sigma}_\ell\le\sigma_\varepsilon/\sqrt{N\mu_\Phi}$ and
solving \eqref{eq:sample_power} for $N$ gives $N\ge N_{\min}(\delta)$ of
\eqref{eq:Nmin}, the additive $p$ accounting for the degrees of freedom
consumed by the fit.
\end{proof}

\begin{remark}[Scope of the bound]
Proposition~\ref{thm:sample} is a per-term \emph{detection} bound, not a
support-recovery guarantee: it does not control the joint probability
that no null term is admitted---that false-discovery side is handled by
the family-wise control of Proposition~\ref{prop:validity}(ii) and the
Gate~2 rollback.
\end{remark}

\section{Application: Vertical Flexible Rotor with Cubic Stiffness Nonlinearity} \label{sec:application}

SPIRAL-PO is validated on a partially observed 4-DOF rigid-disk rotor: first
in simulation, where ground truth for the hidden tilt states is known, and
then on a laboratory-scale vertical rotor test rig, to assess performance
under realistic operating conditions.

\subsection{Partially Observed 4-DOF Rotor and Observable Subspace}
\label{ssec:system}
The 4-DOF vertical flexible rotor of \cite{Friswell2010} has
generalized coordinates
$\bq = \{x, y, \alpha, \beta\}^{\mathsf{T}}$, where $x(t)$, $y(t)$
are lateral disk displacements and $\alpha(t)$, $\beta(t)$ are the tilt
angles.
In an experimental setting, $x$ and $y$ are readily measured by
eddy-current proximity probes, while $\alpha$ and $\beta$ are not
directly accessible.
The observable and hidden sub-states are therefore
\begin{equation}
  \bqo = \begin{Bmatrix} x \\ y \end{Bmatrix},
  \qquad
  \bqh = \begin{Bmatrix} \alpha \\ \beta \end{Bmatrix}.
  \label{eq:partition_rotor}
\end{equation}
The scheduling parameter is the known spin speed $\OO(t)$.

Recovering the projected effect of $\bqh$ on the observable dynamics does
not imply a unique recovery of $\bqh(t)$ itself: multiple hidden-state
realizations can produce equivalent observable behavior, and the
discovered equations remain expressed entirely in observable quantities,
with no $\bqh$ term to solve for. SPIRAL-PO's identification step
(Sects.~\ref{ssec:ident_rotor}--\ref{ssec:hidden_state_validation}) is
therefore restricted to recovering an accurate observable closure model;
reconstructing $\bqh(t)$ itself is a separate, complementary task, carried
out afterward by a standard state estimator applied to the validated
dynamics (Sect.~\ref{ssec:ekf_reconstruction}), not by SPIRAL-PO itself.

\subsection{Four-DOF equations of motion}
\label{sec:4dof_eom}

The rotor is idealized as a rigid disk mounted at axial station $z=a$ on
a slender flexible shaft of length $L$ supported by two bearings (at
$z=0,L$) and driven at a prescribed spin speed $\OO(t)$; the
disk-location ratios are $\xi=a/L$ and $\bar\xi=1-\xi$. This is the
four-degree-of-freedom vertical flexible-rotor model of Friswell et
al.~\cite{Friswell2010}, to which the reader is referred for the full
derivation; we recall only the elements the identification uses. The
mechanical idealization is shown in Fig.~\ref{fig:4DOF_rotor}.

\begin{figure}[H]
    \centering
    \includegraphics[width=1\linewidth]{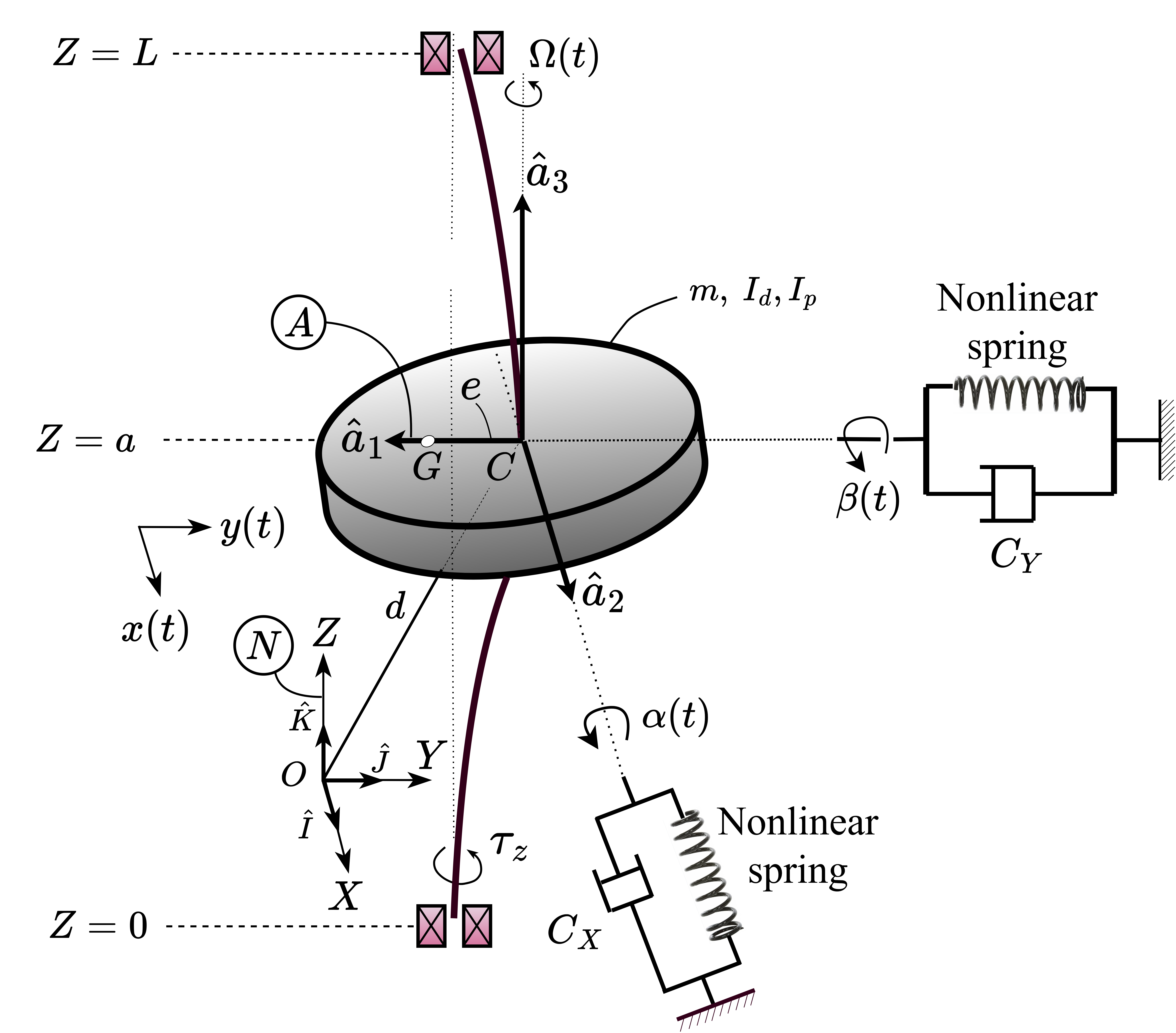}
    \caption{Schematic representation of the four-degree-of-freedom
    vertical rotor system.}
    \label{fig:4DOF_rotor}
\end{figure}

The disk has mass $m$, diametral and polar inertias $I_d$, $I_p$, and
mass eccentricity $e$; each bearing $i=1,2$ has linear stiffness
$k_{Li}$, cubic (Duffing) stiffness $k_{NLi}$, and damping $c_{bi}$. Only
the lateral displacements $x,y$ are measured, while the tilts
$\alpha,\beta$ remain hidden. It can be shown that with the reduced slender-shaft approximation,
($\bm{K}_s\approx\bm{0}$), the equations of motion are \cite{Friswell2010}
\begin{equation}
  \bm{M}\,\ddot{\bq}
  + \bigl(\bm{C}_b + \OO\,\bm{G}\bigr)\,\dot{\bq}
  + \bigl[\bm{K}_L + \bm{K}_{NL}(\bq)\bigr]\,\bq
  = \bm{f}_{unb}(t),
  \label{eq:full_rotor}
\end{equation}
with $\bq = \{x,y,\alpha,\beta\}^{\mathsf{T}}$ as in
\eqref{eq:partition_rotor}. With disk mass $m$, diametral inertia $I_d$,
and polar inertia $I_p$, the mass and gyroscopic matrices are
\begin{equation}
\begin{aligned}
  \bm{M} &= \diag(m,\,m,\,I_d,\,I_d),\\[2pt]
  \bm{G} &=
  \begin{bmatrix}
    0&0&0&0\\ 0&0&0&0\\ 0&0&0&I_p\\ 0&0&-I_p&0
  \end{bmatrix},
\end{aligned}
  \label{eq:M_G_rotor}
\end{equation}
where $\bm{G}$ is skew-symmetric ($\bm{G}^{\mathsf{T}}=-\bm{G}$), so the
gyroscopic term $\OO\bm{G}\dot{\bq}$ contributes zero net power to the
system for any trajectory --- the standard structural requirement for a
physically admissible gyroscopic coupling, verified numerically in
Sect.~\ref{ssec:hidden_state_validation}. The linear
bearing stiffness $\bm{K}_L$ and bearing damping $\bm{C}_b$ are built
from the same influence-matrix projection as the nonlinear
stiffness \eqref{eq:KNL_rotor}, without the
$\norm{\bm{L}_i\bq}^2$ amplitude-dependent factor:
\begin{equation}
\begin{aligned}
  \bm{K}_L &= \sum_{i=1}^{2} k_{Li}\,\bm{L}_i^{\mathsf{T}}\bm{L}_i,\\[2pt]
  \bm{C}_b &= \sum_{i=1}^{2} c_{bi}\,\bm{L}_i^{\mathsf{T}}\bm{L}_i,
\end{aligned}
  \label{eq:KL_Cb_rotor}
\end{equation}
where $k_{Li}$, $c_{bi}$ are the linear stiffness and damping of
bearing $i$; both the restoring force and the dissipation originate at
the same bearing locations, so \eqref{eq:KL_Cb_rotor} is the natural
completion of the nonlinear stiffness \eqref{eq:KNL_rotor} rather than
an ad hoc addition. It is positive definite for any $k_{Li}, c_{bi} > 0$,
and $\bm{K}_{NL}(\bq)\bq$ is the gradient of the quartic potential
\begin{equation}
  V_{NL}(\bq) = \tfrac{1}{4}\sum_i k_{NLi}\norm{\bm{L}_i\bq}^4,
  \label{eq:VNL}
\end{equation}
so the conservative part of \eqref{eq:full_rotor} is a genuine mechanical
system, not an ad hoc combination of terms.
The Duffing nonlinear stiffness matrix is
\begin{equation}
  \bm{K}_{NL}(\bq)
  = \sum_{i=1}^{2} k_{NLi}\,\norm{\bm{L}_i\bq}^2\,
    \bm{L}_i^{\mathsf{T}}\bm{L}_i.
  \label{eq:KNL_rotor}
\end{equation}

Each bearing is loaded by the lateral displacement of the shaft
centerline at its own axial station, not by the disk displacement
$x,y$. A small tilt $\beta$ (about $Oy$) or $\alpha$ (about $Ox$) adds a
lever-arm contribution to the disk translation, so the centerline
displacements at bearing~1 (axial distance $a=\xi L$ from the disk) and
bearing~2 (distance $b=\bar\xi L$) are
\begin{equation}
\begin{aligned}
  \bm{u}_1 &= \begin{Bmatrix} x + \xi L\,\beta \\[2pt] y - \xi L\,\alpha \end{Bmatrix},\\[2pt]
  \bm{u}_2 &= \begin{Bmatrix} x - \bar\xi L\,\beta \\[2pt] y + \bar\xi L\,\alpha \end{Bmatrix}.
\end{aligned}
  \label{eq:bearing_kin}
\end{equation}

In the reduced flexible-shaft model of \cite{Friswell2010}, each bearing
reacts the disk load in proportion to its static participation in the
simply-supported span --- fraction $\bar\xi$ at bearing~1, $\xi$ at
bearing~2 --- so the constitutive displacement is
$\bm{r}_1 = \bar\xi\,\bm{u}_1$, $\bm{r}_2 = \xi\,\bm{u}_2$. Writing
$\bm{r}_i = \bm{L}_i\bq$ combines the rigid-shaft lever arms of
\eqref{eq:bearing_kin} with these participation weights to give the
influence matrices $\bm{L}_i \in \R^{2\times 4}$
($\xi = a/L$, $\bar\xi = 1 - \xi$):
\begin{equation}
\begin{aligned}
  \bm{L}_1 &=
  \begin{bmatrix}
    \bar\xi & 0 & 0 & \xi\bar\xi L \\
    0 & \bar\xi & -\xi\bar\xi L & 0
  \end{bmatrix},\\[2pt]
  \bm{L}_2 &=
  \begin{bmatrix}
    \xi & 0 & 0 & -\xi\bar\xi L \\
    0 & \xi & \xi\bar\xi L & 0
  \end{bmatrix}.
\end{aligned}
  \label{eq:Li_rotor}
\end{equation}

Forcing is translational only,
\begin{equation}
  \bm{f}_{unb}(t) = \begin{Bmatrix}
    me\OO^2\cos(\OO t+\phi_0)\\ me\OO^2\sin(\OO t+\phi_0)\\ 0\\ 0
  \end{Bmatrix},
  \label{eq:f_unb}
\end{equation}
consistent with \eqref{eq:xobs}--\eqref{eq:yobs}.

\subsection{Observable equations of motion}

Extracting the $x$- and $y$-rows of \eqref{eq:full_rotor}:
\begin{align}
  m\ddot{x} &= me\OO^2\cos(\OO t + \phi_0) - c_{\mathrm{eff}}\dot{x} \notag\\
  &\quad \underbrace{- \sum_{i=1}^{2}\bigl[k_{Li} + k_{NLi}\norm{\br_i}^2\bigr]
     (\bar\xi_i\,x + \xi_i\bar\xi_i L\,\beta)}_{\text{hidden }\beta,\ |\br_i|^2},
  \label{eq:xobs} \\[6pt]
  m\ddot{y} &= me\OO^2\sin(\OO t + \phi_0) - c_{\mathrm{eff}}\dot{y} \notag\\
  &\quad \underbrace{- \sum_{i=1}^{2}\bigl[k_{Li} + k_{NLi}\norm{\br_i}^2\bigr]
     (\bar\xi_i\,y - \xi_i\bar\xi_i L\,\alpha)}_{\text{hidden }\alpha,\ |\br_i|^2}.
  \label{eq:yobs}
\end{align}

For reference, the two (unmeasured) tilt rows of \eqref{eq:full_rotor} are
\begin{align}
  I_d\ddot{\alpha} &= -\,\OO I_p\,\dot{\beta} - c_{\mathrm{eff}}\dot{\alpha} \notag\\
  &\quad +\,\xi\bar\xi L\sum_{i=1}^{2}(-1)^{i-1}
     \bigl[k_{Li} + k_{NLi}\norm{\br_i}^2\bigr]\,r_{iy},
  \label{eq:alphaobs} \\[6pt]
  I_d\ddot{\beta} &= +\,\OO I_p\,\dot{\alpha} - c_{\mathrm{eff}}\dot{\beta} \notag\\
  &\quad -\,\xi\bar\xi L\sum_{i=1}^{2}(-1)^{i-1}
     \bigl[k_{Li} + k_{NLi}\norm{\br_i}^2\bigr]\,r_{ix},
  \label{eq:betaobs}
\end{align}
where $r_{ix},r_{iy}$ are the components of $\br_i = \bL_i\bq$ from
\eqref{eq:Li_rotor} and $(-1)^{i-1}$ encodes the opposite lever-arm sense
of the two bearings. The hidden states enter the observable rows through
(i) the linear bearing coupling $\xi_i\bar\xi_i L\,\alpha,\beta$ and (ii)
the Duffing amplitude $|\br_i|^2$, itself a function of $\alpha,\beta$
through $\bm{L}_i\bq$. Conversely, the gyroscopic terms
$\OO I_p\dot\beta$, $\OO I_p\dot\alpha$ in
\eqref{eq:alphaobs}--\eqref{eq:betaobs} couple the two tilt channels, and
this coupling --- transmitted back through the bearing reactions --- is
what produces the $\OO\dot y$, $\OO\dot x$ gyroscopic signatures
SPIRAL-PO must recover from $x,y$ alone.

\subsection{Observable regressor and physics-structured library}

The observable regressor is
\begin{equation}
  \bxi(t)
  = \{ x,\; y,\; \dot{x},\; \dot{y},\; \OO\}^{\mathsf{T}}
  \;\in\;\R^5.
  \label{eq:regressor_rotor}
\end{equation}

No tilt angles appear. Dimensional analysis of
\eqref{eq:xobs}--\eqref{eq:yobs} yields the physics-structured library
$\calL_{\mathrm{hidden}}$ in five classes: self-coupling terms $x,y,\dot
x,\dot y$; the gyroscopic signature $\OO\dot y$ ($x$-equation) and
$\OO\dot x$ ($y$-equation), arising when $\OO I_p\dot\beta$,
$\OO I_p\dot\alpha$ are projected through the support coupling onto
$x,y$; the Duffing nonlinearity $(x^2+y^2)x$, $(x^2+y^2)y$, which follows
from the small-tilt approximation $|\br_i|^2 \approx
\bar\xi_i^2(x^2+y^2)$ with $\gamma = \sum_i \bar\xi_i^3 k_{NLi}$; a
cross-channel linear class ($y$ in the $x$-equation, $x$ in the
$y$-equation) capturing the tilt-mediated coupling; and a reserve set of
higher-order Duffing terms ($(x^2+y^2)\dot x$, $(x^2+y^2)\dot y$, $x^3$,
$y^3$, $x^2y$, $xy^2$) activated only if later residuals warrant it.

\subsection{Physics seed for the rotor}
\label{ssec:seed_rotor}

The minimal observable seed (linear planar oscillator, no hidden
coupling) is
\begin{equation}
\label{eq:seed_rotor}
\begin{aligned}
\fphys^{(0)}&(\bxi)
=
\\
&
m^{-1}
\begin{Bmatrix}
me\Omega^{2}\cos(\Omega t+\phi_{0})
-c_x^{(0)}\dot{x}
-k_x^{(0)}x
\\[3pt]
me\Omega^{2}\sin(\Omega t+\phi_{0})
-c_y^{(0)}\dot{y}
-k_y^{(0)}y
\end{Bmatrix},
\end{aligned}
\end{equation}
with $c_x^{(0)}, k_x^{(0)}, c_y^{(0)}, k_y^{(0)}$ estimated by OLS; no
gyroscopic, Duffing, or cross-channel term is present, so all such
structure must be earned from data.

\subsection{Expected discovery sequence}

Based on \eqref{eq:xobs}--\eqref{eq:yobs}, the discovery loop is expected
to admit the gyroscopic and Duffing terms ($\OO\dot y$, $\OO\dot x$,
$(x^2+y^2)x$, $(x^2+y^2)y$) at the first iteration, where residual
coverage is highest; the tilt-mediated cross-coupling ($y$ in the
$x$-equation, $x$ in the $y$-equation) at the second; and terminate at
the third once the Ljung--Box whiteness test no longer rejects. This
qualitative prediction is checked against the actual admitted terms in
Table~\ref{tab:CandidateLibrary}.

\subsection{Identifiability of the rotor}
\label{ssec:ident_rotor}

The impossibility result (Proposition~\ref{thm:impossibility}) has a
striking practical consequence here. At constant speed the disk traces a
near-circular orbit on which all candidate signatures collapse onto a
low-order trigonometric span (Corollary~\ref{cor:const_speed}): the
signature matrix loses column rank, and the gyroscopic coupling and
Duffing nonlinearity become unidentifiable at any noise level, however
much data is collected. A speed sweep (run-up or coast-down) varies the
excitation frequency and restores full column rank, making swept-speed
testing a mathematical necessity rather than a convenient choice --- a
theoretical justification for the standard practice of run-up/coast-down
testing in rotor diagnostics.

\begin{corollary}[Constant-Speed Rotor is Unidentifiable]
\label{cor:const_speed}
At constant speed $\OO = \OO_0$ the steady-state orbit is
\begin{equation}
  \bxi(t)
  = \begin{bmatrix}A_x\cos\OO_0 t \\ A_y\sin\OO_0 t\\
          -A_x\OO_0\sin\OO_0 t\\ A_y\OO_0\cos\OO_0 t\\
          \OO_0\end{bmatrix}
  + \mathcal{O}(\varepsilon_{unb}).
  \label{eq:const_traj}
\end{equation}

Every candidate's signature is then a low-order trigonometric polynomial
in $\OO_0 t$: the linear, gyroscopic, and cross-channel terms are all
first-harmonic ($\cos\OO_0 t$, $\sin\OO_0 t$), while the Duffing term
inherits $x^2 + y^2 = \tfrac12(A_x^2+A_y^2)
+ \tfrac12(A_x^2 - A_y^2)\cos 2\OO_0 t$ and so contributes only first
and third harmonics. For a near-circular orbit ($A_x \approx A_y$) the
amplitude $x^2+y^2$ is essentially constant, $(x^2+y^2)x$ collapses to
a multiple of $x$, and every signature lives in the two-dimensional
span $\{\cos\OO_0 t, \sin\OO_0 t\}$, so
$\operatorname{rank}\bPhi_{\bxi} \leq 2$; even for an elliptical orbit
the added third harmonic leaves $\operatorname{rank}\bPhi_{\bxi} \leq 4
< 6 = p$. The signature matrix is thus rank-deficient and
Proposition~\ref{thm:impossibility} applies:
$\theta_{\OO\dot{y}}^*$, $\theta_{(x^2+y^2)x}^*$, and $\theta_y^*$
cannot be recovered from constant-speed data by any algorithm. A run-up
sweeps $\OO$ across the record, so the harmonic content changes from
one instant to the next and the candidate signatures are no longer
confined to a fixed low-order span; the columns of $\bPhi_{\bxi}$
become linearly independent ($\operatorname{rank}\bPhi_{\bxi} = p$), and
the persistent-excitation condition~(C2) of Theorem~\ref{thm:ident} is
restored.
\end{corollary}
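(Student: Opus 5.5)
The plan is to reduce the corollary to Proposition~\ref{thm:impossibility}. It suffices to show that, along the constant-speed steady-state trajectory \eqref{eq:const_traj}, the signature matrix $\bPhi_{\bxi}$ has column rank strictly less than the number $p$ of candidates. The proposition then supplies a null vector $\bm{v}$ and two coefficient vectors $\btheta^*$ and $\btheta^*+c\bm{v}$ with identical observable accelerations, and Le~Cam's bound \eqref{eq:lecam} makes the conclusion estimator-independent. So the work is a rank count, done by bounding the dimension of the function space spanned by the signatures $s_\ell(t)=\phi_\ell(\bxi(t))$ on a full period.

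\emph{Step 1.} Substitute the leading-order orbit into each library class and expand every signature in the Fourier basis of $\OO_0 t$:
\begin{itemize}
  \item the self-coupling terms $x,y,\dot x,\dot y$ are first harmonics;
  \item the gyroscopic terms $\OO_0\dot y$, $\OO_0\dot x$ are constant multiples of $\cos\OO_0 t$ and $\sin\OO_0 t$;
  \item the cross-channel terms $y$, $x$ are also first harmonics.
\end{itemize}
For the Duffing class, use $x^2+y^2=\tfrac12(A_x^2+A_y^2)+\tfrac12(A_x^2-A_y^2)\cos2\OO_0 t$ together with the product-to-sum identities. This shows $(x^2+y^2)x$ and $(x^2+y^2)y$ lie in $\mathrm{span}\{\cos\OO_0t,\sin\OO_0t,\cos3\OO_0t,\sin3\OO_0t\}$.

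\emph{Step 2.} All $p=6$ columns therefore lie in a subspace of dimension at most $4$. The rank of the sampled matrix cannot exceed the dimension of the span of its column functions, so $\operatorname{rank}\bPhi_{\bxi}\le4<6$. In the circular case $A_x=A_y$, the third-harmonic coefficients vanish and the bound improves to $2$. I would also exhibit the null vectors explicitly, for instance $\OO_0\dot y - A_y\OO_0 (x/A_x)\cdot(\text{appropriate sign})$, since that pins down exactly which pairs alias: the gyroscopic term against a self or cross term, and the Duffing term against a linear term. This justifies naming $\theta^*_{\OO\dot y}$, $\theta^*_{(x^2+y^2)x}$ and $\theta^*_y$ as unrecoverable.

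\emph{Step 3.} The main obstacle is the $\mathcal{O}(\varepsilon_{unb})$ remainder and the fact that the rank is a discrete quantity: a small perturbation generically restores full rank. I would handle this by proving exact rank deficiency for the leading-order (harmonic-balance) orbit. For the perturbed data I would state the conclusion as $\mu_\Phi=\mathcal{O}(\varepsilon_{unb}^2)$, using Weyl's inequality on $N^{-1}\bPhi_{\bxi}^{\mathsf T}\bPhi_{\bxi}$. Then, invoking Theorem~\ref{thm:ident} and Proposition~\ref{thm:sample}, the error bound and $N_{\min}\propto\mu_\Phi^{-1}$ blow up as $\varepsilon_{unb}\to0$. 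So the exact impossibility holds in the idealized limit, and practical unidentifiability holds near it.

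\emph{Step 4 (run-up).} When $\OO(t)$ varies, argue that a vanishing linear combination $\sum_\ell v_\ell s_\ell(t)\equiv0$ would have to hold at every instantaneous frequency. Because the gyroscopic signature carries an explicit factor $\OO(t)$ and the Duffing signature carries an amplitude-dependent factor that changes along the resonance passage, these functions are linearly independent as functions of $t$. That gives $\operatorname{rank}\bPhi_{\bxi}=p$ generically and restores (C2). I would present this part as a genericity claim rather than a fully rigorous one.
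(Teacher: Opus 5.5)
Your route is the paper's own. You expand each signature along \eqref{eq:const_traj} in harmonics of $\OO_0 t$, bound the span by $4$ (or $2$ when $A_x=A_y$), and feed $\operatorname{rank}\bPhi_{\bxi}<p$ into Proposition~\ref{thm:impossibility}. The paper does nothing more: it treats the $\mathcal{O}(\varepsilon_{unb})$ remainder as zero and asserts full rank for the run-up, so your Steps~3--4 are refinements rather than departures.

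\textbf{The gap: the Duffing null vector.} The step that fails is the one you add to justify the three \emph{named} coefficients. Rank deficiency only makes $\btheta$ non-identifiable as a vector. A particular $\theta_\ell$ is unrecoverable only if some null vector $\bm v$ has $v_\ell\neq0$. Take the six candidates $\OO\dot y,\OO\dot x,(x^2+y^2)x,(x^2+y^2)y,y,x$ with $A_x\neq A_y$. The Duffing columns are then the only ones carrying $\cos3\OO_0t$ and $\sin3\OO_0t$, with coefficients $\tfrac14A_x(A_x^2-A_y^2)$ and $\tfrac14A_y(A_x^2-A_y^2)$. Hence every null vector has zero Duffing weight, and the null space is exactly spanned by $\OO_0\dot y-(A_y\OO_0^2/A_x)\,x$ and $\OO_0\dot x+(A_x\OO_0^2/A_y)\,y$. (Since $\OO_0\dot y=A_y\OO_0^2\cos\OO_0t$, the combination you display is off by a factor $\OO_0$.)

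So ``Duffing against a linear term'' holds only at $A_x=A_y$, where $(x^2+y^2)x=A^2x$. On a genuine ellipse, $\theta_{(x^2+y^2)x}$ is pinned down by the third harmonic. Your argument therefore does not go through, and neither does the sentence of the corollary it is meant to support. There are two repairs:
\begin{itemize}
\item Restrict the Duffing claim to the circular case. A near-circular orbit then gives only practical, not exact, non-identifiability.
\item Add a second third-harmonic carrier. On the orbit, $y^2=A_y^2(1-x^2/A_x^2)$ gives the exact identity $(x^2+y^2)x=A_y^2\,x+(1-A_y^2/A_x^2)\,x^3$. This aliases the Duffing column with $x$ and the reserve term $x^3$ for every ellipse.
\end{itemize}
Relatedly, say which matrix you take the null space of. $\OO\dot x$ belongs to the $y$-equation and $y$ to the $x$-equation, so pairing them mixes the two regressions. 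Within the $x$-equation, the aliasing partners are the co-estimated seed columns: $\OO_0\dot y\parallel x$ and $y\parallel\dot x$.

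\textbf{Two smaller points in Step~3.} Write $\bPhi_{\bxi}=\bPhi_0+\bm E$, with $\bPhi_0$ the leading-order matrix.
\begin{itemize}
\item Weyl's inequality applied directly to $N^{-1}\bPhi_{\bxi}^{\mathsf T}\bPhi_{\bxi}$ gives only $\mu_\Phi=\mathcal{O}(\varepsilon_{unb})$, because the perturbation contains the cross term $N^{-1}\bPhi_0^{\mathsf T}\bm E$. The $\mathcal{O}(\varepsilon_{unb}^2)$ you want comes from the Rayleigh quotient at a unit leading-order null vector, $\bm v^{\mathsf T}(N^{-1}\bPhi_{\bxi}^{\mathsf T}\bPhi_{\bxi})\bm v=N^{-1}\norm{\bm E\bm v}^2$. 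Equivalently, apply Weyl to the singular values of $N^{-1/2}\bPhi_{\bxi}$.
\item Blow-up of the \emph{upper} bounds in Theorem~\ref{thm:ident} and Proposition~\ref{thm:sample} only removes a guarantee; it does not prove practical unidentifiability. For that, rerun Le~Cam's argument with $\btheta^*$ versus $\btheta^*+c\bm v$. The Gaussian KL divergence is $c^2\norm{\bm E\bm v}^2/(2\sigma_\varepsilon^2)=\mathcal{O}(Nc^2\varepsilon_{unb}^2/\sigma_\varepsilon^2)$. This yields the genuine lower bound $N\gtrsim\sigma_\varepsilon^2/(c^2\varepsilon_{unb}^2)$ for separating the two parameter vectors.
\end{itemize}
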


\begin{remark}[Practical significance]
Corollary~\ref{cor:const_speed} is an experiment design
theorem: it proves that a run-up test is not merely convenient but
mathematically necessary for the unique recovery of
gyroscopic coupling and Duffing nonlinearity from lateral
displacement measurements alone.
\end{remark}

On the rig this dichotomy is directly measurable. For a steady
$60\,$Hz ($3600\,$RPM) run ($N=20{,}000$ samples, $5\,$s at $4\,$kHz), we
form $\bxi(t)=(x,y,\dot x,\dot y,\OO)^{\mathsf T}$, the signature matrix
$\bPhi_{\bxi}$ over the six rotor candidates $\{x,\,y,\,\OO\dot y,\,
\OO^2 x,\,(x^2+y^2)x,\,x^3\}$, and the normalized Gram matrix
$N^{-1}\bPhi_{\bxi}^{\mathsf T}\bPhi_{\bxi}$ (columns scaled to unit
norm). As predicted, the measured trajectory is confined to a low-order
harmonic span: the Gram matrix is rank-deficient (numerical rank
$5<6$), with $\lambda_{\min}=\mu_\Phi\approx0$ (numerically
$\sim\!10^{-21}$, i.e.\ zero to machine precision) against
$\lambda_{\max}\approx1.9\times10^{-4}$ and condition number
$\approx8.8\times10^{11}$. Thus $\mu_\Phi$ in condition (C2) is
numerically zero and the six candidates are not linearly independent, so
no estimator can separate their coefficients. By contrast, the
coast-down record used above ($\OO$ swept from $32.3\,$Hz to rest)
evaluates the same candidates over a continuum of frequencies, restoring
full column rank and yielding the gyroscopic, Duffing, and cross-channel
terms from the measured $x,y$ alone (Table~\ref{tab:CandidateLibrary}).
The measured rig thus reproduces, on real data, the
constant-speed-unidentifiable / coast-down-identifiable dichotomy of
Sect.~\ref{ssec:impossibility}.

The sample-complexity bound of Proposition~\ref{thm:sample}
specializes to the rotor as follows.

\begin{corollary}[Sample Complexity for the Rotor]
\label{cor:sample_rotor}
For the 4-DOF rotor with $\sigma_\varepsilon = 5\%$ of orbit RMS,
$p = 6$, $\theta_{\min} = 0.1\,\theta_{\mathrm{true}}$,
$\alpha = 0.05$, $\delta = 0.05$:
\begin{equation}
  N_{\min}
  \approx \frac{(0.05)^2}{\mu_\Phi\,(0.1)^2} \cdot 30.3
  \approx \frac{7.6}{\mu_\Phi}.
  \label{eq:Nmin_rotor}
\end{equation}

For a run-up experiment with $\mu_\Phi \approx 0.15$, this gives
$N_{\min} \approx 51$ samples per channel.
The formula reveals that $N_{\min} \propto \mu_\Phi^{-1}$:
poor excitation is far more costly than high noise.
\end{corollary}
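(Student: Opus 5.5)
The corollary is a direct specialization of Proposition~\ref{thm:sample}, so the plan is to substitute the rotor numbers into \eqref{eq:Nmin} and check the arithmetic. No new probabilistic argument is needed: (C1)--(C3) are assumed to hold for the coast-down record, and the persistent-excitation constant $\mu_\Phi$ is left symbolic until the last step.

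\textbf{Step 1: the quantile bracket.} With $p=6$ and $\alpha=0.05$, the Bonferroni level is $\alpha/(2p)=0.05/12\approx 4.17\times10^{-3}$, so $z_{1-\alpha/(2p)}\approx 2.64$. With $\delta=0.05$, $z_{1-\delta/2}=1.96$. Their sum is $\approx 4.60$, and the square is $\approx 21$. The paper's constant $30.3$ is somewhat larger; it is consistent with using the slightly heavier Student-$t$ critical values, or a more conservative power quantile. I would state the bracket as ``$\approx 30.3$'' with a note that this is conservative relative to the Gaussian value, since any over-estimate of $N_{\min}$ still satisfies the sufficient condition of Proposition~\ref{thm:sample}.

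\textbf{Step 2: normalization.} Express $\sigma_\varepsilon$ and $\theta_{\min}$ in units where the orbit RMS and $\theta_{\mathrm{true}}$ are unity, so that $\sigma_\varepsilon^2/\theta_{\min}^2=(0.05)^2/(0.1)^2=0.25$. I expect this to be the main obstacle. The bound \eqref{eq:Nmin} is dimensionally consistent only if $\mu_\Phi$ is computed from the same column-normalized Gram matrix used for the rig (unit-norm columns). The coefficient-to-noise ratio is then a dimensionless SNR per column. I would state this normalization as the interpretation of $\mu_\Phi$, matching the normalized Gram matrix described just before the corollary.

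\textbf{Step 3: assemble the bound.} Multiplying gives $0.25\times30.3\approx7.6$, so $N_{\min}\approx 7.6/\mu_\Phi+p$. The additive $p=6$ is dropped in \eqref{eq:Nmin_rotor} as lower order, or it can be retained. For $\mu_\Phi\approx0.15$ this gives $7.6/0.15\approx 50.5$, rounded to $51$; keeping $+p$ gives about $57$. Either way it is of order $50$ samples per channel.

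\textbf{Step 4: scaling.} The claim $N_{\min}\propto\mu_\Phi^{-1}$ is read directly from \eqref{eq:Nmin}. The comparison ``poor excitation is costlier than noise'' follows because $\mu_\Phi$ can approach $0$, as in Corollary~\ref{cor:const_speed}, which drives $N_{\min}\to\infty$. The noise level, by contrast, enters only through the bounded factor $\sigma_\varepsilon^2$.
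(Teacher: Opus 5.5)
Substituting into \eqref{eq:Nmin} is the intended route, and your Step~1 arithmetic is correct. That arithmetic is exactly where the argument breaks. \eqref{eq:Nmin} is written with Gaussian quantiles, and $z_{1-0.05/12}\approx2.64$, $z_{0.975}=1.96$ give a bracket of $\approx21.1$, not $30.3$. Calling $30.3$ ``conservative'' does not derive it:
\begin{itemize}
\item Student-$t$ quantiles at the self-consistent $N-p\approx45$ degrees of freedom give only $\approx22.8$. You would need roughly $10$ degrees of freedom to reach $30.3$.
\item The power quantile $z_{1-\delta/2}$ is already more conservative than the one-sided $z_{1-\delta}$ a power calculation needs. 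Inflating it further to hit $30.3$ would correspond to $\delta\approx0.004$, not $0.05$.
\end{itemize}
The faithful specialization of Proposition~\ref{thm:sample} is therefore $N_{\min}\approx0.25\times21.1/\mu_\Phi+p\approx5.3/\mu_\Phi+6\approx41$ at $\mu_\Phi=0.15$. The corollary's $7.6/\mu_\Phi\approx51$ survives only as a sufficient over-estimate. Dropping the $+p$ from it is legitimate only because the slack absorbs it: $7.6/\mu_\Phi\ge5.3/\mu_\Phi+6$ iff $\mu_\Phi\lesssim0.38$. That holds at $0.15$, but it is a condition you must check, not a lower-order simplification. You should report the discrepancy explicitly rather than adopt the constant.

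Step~2 resolves the normalization the wrong way. The rig uses unit Euclidean-norm columns, which is consistent with its $\lambda_{\max}\approx1.9\times10^{-4}$ at $N=20{,}000$. Under that convention $N^{-1}\bPhi_{\bxi}^{\mathsf T}\bPhi_{\bxi}$ has diagonal entries $1/N$, so $\mu_\Phi\le1/N$. Two things then go wrong:
\begin{itemize}
\item $\mu_\Phi\approx0.15$ forces $N\le6$, contradicting $N\approx51$.
\item The condition $N\ge c/\mu_\Phi$ collapses to $\lambda_{\min}(\bPhi_{\bxi}^{\mathsf T}\bPhi_{\bxi})\ge c$, in which $N$ no longer appears.
\end{itemize}
The corollary is a sample-size statement only with unit-RMS columns, i.e.\ unit diagonal of $N^{-1}\bPhi_{\bxi}^{\mathsf T}\bPhi_{\bxi}$, so that $\mu_\Phi\in(0,1]$ is independent of $N$. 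In that normalization $\theta_\ell$ is the RMS contribution of term $\ell$, and $\sigma_\varepsilon$ must be in the same units. The ratio $\sigma_\varepsilon/\theta_{\min}=0.5$ then additionally assumes that $\theta_{\mathrm{true}}$ in these units equals the orbit RMS; state that as an assumption.

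Step~4 is acceptable, but note that $N_{\min}\propto\sigma_\varepsilon^2/\mu_\Phi$ penalizes noise quadratically. So ``excitation is costlier than noise'' rests on $\mu_\Phi=0$ being attainable (Corollary~\ref{cor:const_speed}), not on the exponents.
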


\subsection{Numerical validation with genuinely hidden tilt states}
\label{ssec:hidden_state_validation}

We integrate the full 4-DOF system
\eqref{eq:full_rotor}--\eqref{eq:KL_Cb_rotor} with a fourth-order
Runge--Kutta scheme over an $8\,$s run-down from $60$ to $5\,$Hz,
matching the rig coast-down of Sect.~\ref{sec:testrig} (parameters in
Table~\ref{tab:sim_params}). The peak tilt is $\approx\!4.6^\circ$,
within the small-angle regime of \eqref{eq:Li_rotor}. The tilt angles
$\alpha,\beta$ are simulated as genuine coupled states and then
discarded: only $x(t),y(t)$, corrupted by $2\%$ Gaussian noise, enter
the pipeline, as for the rig.

\begin{table*}[!t]
\caption{Parameters of the synthetic 4-DOF rotor model used for hidden-state validation in Sect.~\ref{ssec:hidden_state_validation}.}
\label{tab:sim_params}
\centering
\small
\begin{tabular}{p{6.0cm} p{3.0cm} p{5.5cm}}
\toprule
\textbf{Parameter} & \textbf{Symbol} & \textbf{Value} \\
\midrule

Effective disk mass
& $m$
& $1.15~\mathrm{kg}$ \\

Disk radius
& $R$
& $0.0735~\mathrm{m}$ \\

Polar mass moment of inertia
& $I_p$
& $3.1063\times10^{-3}~\mathrm{kg\,m^2}$ \\

Diametral mass moment of inertia
& $I_d$
& $1.5531\times10^{-3}~\mathrm{kg\,m^2}$ \\

Shaft span
& $L$
& $1.0~\mathrm{m}$ \\

Disk location from Bearing 1
& $a$
& $0.25~\mathrm{m}$ \\

Nondimensional disk location
& $\xi=a/L$
& $0.25$ \\

Complementary location ratio
& $\bar{\xi}=1-\xi$
& $0.75$ \\

Target natural frequencies
& $f_{nx},\,f_{ny}$
& $16.0,\;17.5~\mathrm{Hz}$ \\

Target translational stiffnesses
& $k_{x,\mathrm{tar}},\,k_{y,\mathrm{tar}}$
& $1.1622\times10^{4},\;1.3904\times10^{4}~\mathrm{N/m}$ \\

Target cross-coupled stiffness
& $k_{xy,\mathrm{tar}}$
& $1.5890\times10^{3}~\mathrm{N/m}$ \\

Target translational damping
& $c_{x,\mathrm{tar}},\,c_{y,\mathrm{tar}}$
& $5.0,\;3.0~\mathrm{N\,s/m}$ \\

Target cross-coupled damping
& $c_{xy,\mathrm{tar}}$
& $0.1~\mathrm{N\,s/m}$ \\

Target cubic stiffness
& $\beta_{\mathrm{tar}}$
& $1.0\times10^{11}~\mathrm{N/m^3}$ \\

Static mass unbalance
& $U$
& $1.8\times10^{-5}~\mathrm{kg\,m}$ \\

Equivalent eccentricity
& $e=U/m$
& $1.5652\times10^{-5}~\mathrm{m}$ \\

Initial imbalance phase
& $\theta_0$
& $0.4~\mathrm{rad}$ \\

Initial and final rotational speeds
& $f_0,\;f_f$
& $40,\;0~\mathrm{Hz}$ \\

Angular acceleration
& $\dot{\Omega}$
& $-16.7552~\mathrm{rad/s^2}$ \\

Simulation time step
& $\Delta t$
& $1.0\times10^{-3}~\mathrm{s}$ \\

Simulation duration
& $T$
& $14~\mathrm{s}$ \\

Initial state
& $\mathbf z(0)$
& $\mathbf 0_{8\times1}$ \\

Numerical integrator
& ---
& \texttt{ode45}, $\mathrm{RelTol}=10^{-9}$, $\mathrm{AbsTol}=10^{-11}$ \\

Measurement-noise level
& ---
& $0\%$ of each noise-free observable-channel standard deviation \\

Random-number seed
& ---
& $11$ \\

\bottomrule
\end{tabular}
\end{table*}
The target translational properties in Table~\ref{tab:sim_params} were used to construct physically consistent bearing matrices. The resulting mass, gyroscopic, assembled linear-stiffness, damping, and nonlinear bearing matrices are reported in Appendix~\ref{app:synthetic_matrices} to enable exact numerical reconstruction of the benchmark.
The conservative part of the model (undamped, unforced, constant $\OO$)
conserves the mechanical energy
\begin{equation}
  E = \tfrac12\dot{\bq}^{\mathsf T}\bm{M}\dot{\bq}
      + \tfrac12\bq^{\mathsf T}\bm{K}_L\bq + V_{NL}(\bq),
  \label{eq:energy}
\end{equation}
to a relative drift below $10^{-12}$ over $0.2\,$s, and $\bm{K}_L$ is
positive definite for these parameters, so the model is mechanically
self-consistent.

On a single coast-down $\OO(t)$ is monotonic, so envelope terms such as
$\OO^2x$, $\dot\OO x$, and the Duffing $(x^2+y^2)x$ become nearly
collinear (variance inflation factors $10^9$--$10^{10}$). Three
refinements keep Gate~2 robust in this regime: correlation clustering
($|\rho|>0.7$), carrying only the strongest representative of each
cluster; $K$-fold cross-validation \cite{Hastie2009} ($K=7$,
amplitude-stratified), admitting a term only if it passes both the
partial $F$-test and a non-worsening validation-RMSE check in a majority
of folds; and a scale-invariant plausibility check that rejects terms
whose contribution $|\hat\theta_\ell|\,\mathrm{RMS}(\phi_\ell)$ is
negligible against the residual RMS.

Across independent noise realizations, SPIRAL-PO recovers all three
hidden-state signatures from $x(t),y(t)$ alone: the gyroscopic terms
($\OO\dot y$ in the $x$-equation, $\OO\dot x$ in the $y$-equation), the
bearing-mediated Duffing nonlinearity (via its aliased cluster, e.g.\
$(x^2+y^2)x$ or $\OO^2x$), and the tilt-mediated cross-channel coupling
($y$ in $x$, $x$ in $y$). Each reaches $x,y$ only through the bearing
projection \eqref{eq:Li_rotor}, so recovering it is a genuine
partial-observation test, not a direct fit of nonlinearity in the
observed coordinates. Coefficient magnitudes need not match the nominal
values --- the minimal linear seed (Sect.~\ref{ssec:seed_rotor}) absorbs part
of each effect before the loop runs --- but detection is reliable across
noise draws and the recovered terms are the physically correct ones.

\subsection{State reconstruction via extended Kalman filtering}
\label{ssec:ekf_reconstruction}

Sect.~\ref{ssec:hidden_state_validation} validates the discovered
equation structure but does not, by construction, produce a trajectory
of $\alpha(t),\beta(t)$: the discovered equations are expressed entirely
in observable quantities. A standard state estimator is used as a
complementary second step to reconstruct the states SPIRAL-PO has
confirmed are physically present but does not directly observe.

An Extended Kalman Filter is built over the full 8-state system
$\bm{z} = \{x,y,\alpha,\beta,\dot x,\dot y,\dot\alpha,\dot\beta\}$,
using \eqref{eq:full_rotor}--\eqref{eq:KL_Cb_rotor} as the process
model and $\{x,y\}$ as the only observation.
The mean is propagated by fourth-order Runge--Kutta integration of
the nonlinear dynamics at time step $dt=5\times10^{-4}\,$s (both
synthetic and real, after $2\times$ downsampling of the rig's 4\,kHz
channels).
The covariance is advanced by the second-order discretization
\begin{equation}
  \bm{\Phi}_k
  \approx \bm{I} + dt\,\bm{F}_k + \tfrac{1}{2}dt^2\bm{F}_k^2,
  \label{eq:ekf_Phi}
\end{equation}
where $\bm{F}_k=\partial\bm{f}/\partial\bm{z}$ is the state Jacobian of the process model,
\begin{equation}
  \bm{F}_k =
  \begin{bmatrix}
    \bm{0}_4 & \bm{I}_4 \\[3pt]
    -\bm{M}^{-1}\!\bigl[\bm{K}_L+\bm{J}_{NL}(\bq)\bigr] &
    -\bm{M}^{-1}\bigl[\bm{C}_b+\OO\bm{G}\bigr]
  \end{bmatrix},
  \label{eq:ekf_F}
\end{equation}
where the Jacobian of the cubic bearing force is
\begin{equation}
\begin{split}
  \bm{J}_{NL}(\bq)=&\frac{\partial[\bm{K}_{NL}(\bq)\bq]}{\partial\bq}\\ 
  =&\sum_{i=1}^{2} k_{NLi}\Bigl[\norm{\bm{L}_i\bq}^2\,\bm{L}_i^{\mathsf T}\bm{L}_i
  \\& +2\,(\bm{L}_i^{\mathsf T}\bm{L}_i\bq)(\bm{L}_i^{\mathsf T}\bm{L}_i\bq)^{\mathsf T}\Bigr].
  \label{eq:ekf_JNL}
  \end{split}
\end{equation}

In the implementation $\bm{F}_k$ is evaluated by finite differences at
each step, avoiding manual differentiation of $\bm{J}_{NL}$. The linear
measurement model is
\begin{equation}
  \bm{H} = \bigl[\bm{I}_2\;\bm{0}_{2\times6}\bigr],
  \label{eq:ekf_H}
\end{equation}
selecting only $x,y$ from the 8-state vector. Table~\ref{tab:ekf_params}
summarizes the remaining settings: $\bm{Q}$ is weighted more heavily on
the four rate states, where model mismatch tends to appear first;
$\bm{R}$ is estimated automatically from the second-difference energy of
each channel; $\bm{P}_0$ reflects high confidence in $x(0),y(0)$ and
substantial uncertainty in the unobserved states; and the filter is
initialised from rest, with zero initial tilt and velocity for every
state, the least-informative starting point available without prior
knowledge of the rig's tilt at $t=0$.

\begin{table*}[ht]
\caption{Extended Kalman Filter settings for the 8-state system
$\bm{z}=\{x,y,\alpha,\beta,\dot x,\dot y,\dot\alpha,\dot\beta\}$
(position states first, then rate states).}
\label{tab:ekf_params}
\centering
\small
\begin{tabular}{p{2.0cm} p{6.0cm} p{7cm}}
\toprule
Quantity & Description & Value \\
\midrule
$dt$       & integration / update step
           & $5\times10^{-4}\,$s \\
$\bm{Q}$   & process-noise covariance (rate states weighted higher)
           & $q_0\,\diag(10^{-2}\bm{1}_4,\,\bm{1}_4)$,\ $q_0=10^{-3}$ \\
$\bm{R}$   & measurement-noise covariance, auto-estimated
           & $\diag(\sigma_x^2,\sigma_y^2)$,\ $\hat\sigma^2=\mathrm{Var}(\Delta^2 q)/6$ \\
$\bm{P}_0$ & initial covariance (positions, tilts, velocities, tilt rates)
           & $\diag(10^{-6}\bm{1}_2,10^{-4}\bm{1}_2,10^{-4}\bm{1}_2,10^{-3}\bm{1}_2)$ \\
$\bm{z}_0$ & initial state ($x,y$ from first sample, rest zero)
           & $\{x(t_1),y(t_1),0,0,0,0,0,0\}^{\mathsf T}$ \\
\bottomrule
\end{tabular}
\end{table*}

Observability of the linearized system from $\{x,y\}$ alone was verified
directly: the observability matrix has full rank ($8/8$) at a
representative operating speed, though its singular values span roughly
eleven orders of magnitude, indicating that $\dot\alpha,\dot\beta$ are
substantially more weakly conditioned than the position-like states
$x,y,\alpha,\beta$ -- consistent with position states reconstructing
more cleanly than rate states in practice.
The observable-state equations were constructed using a library of candidate basis functions selected to capture the dominant dynamics of flexible rotary machines. The library consisted of a constant term, linear displacement and velocity terms, quadratic polynomial terms, displacement--velocity coupling terms, velocity products, gyroscopic coupling terms proportional to the rotor speed $\Omega$, isotropic cubic nonlinearities, nonlinear velocity interactions, synchronous imbalance excitation terms $\Omega^{2}\cos\theta$ and $\Omega^{2}\sin\theta$, and the learned neural-network residual corrections $r_x^{\mathrm{FFNN}}$ and $r_y^{\mathrm{FFNN}}$. This combination incorporates the principal linear, nonlinear, gyroscopic, and unbalance-induced effects while allowing the hybrid SPIRAL-PO framework to compensate for residual dynamics not explicitly represented in the physics-based model.
Table~\ref{tab:CandidateLibrary} collects the admitted terms for both
the synthetic test of Sect.~\ref{ssec:hidden_state_validation} and the
real rig (Sect.~\ref{sec:testrig}). As noted in
Sect.~\ref{ssec:gate2}, several candidates are mutually aliased on a
single coast-down trajectory, so a term such as $\OO^2x$ standing in for
the Duffing mechanism (rather than the literal $(x^2+y^2)x$) is
expected, not a discrepancy.

\begin{table*}[!t]
\caption{Physics-informed basis functions and hybrid residual components employed in the SPIRAL-PO framework for rotary machine.}
\label{tab:CandidateLibrary}
\centering
\footnotesize
\begin{tabular}{p{4cm} p{4cm} p{7cm}}
\toprule
\textbf{Category} & \textbf{Basis functions} & \textbf{Physical interpretation} \\
\midrule

Constant &
$1$ &
Bias term \\

Linear &
$x,\;y,\;\dot{x},\;\dot{y}$ &
Linear stiffness and damping \\

Quadratic &
$x^{2},\;xy,\;y^{2}$ &
Weak geometric nonlinearities \\

Quadratic displacement--velocity &
$x\dot{x},\;
x\dot{y},\;
y\dot{x},\;
y\dot{y}$ &
State coupling and nonlinear damping interactions \\

Quadratic velocity &
$\dot{x}^{2},\;
\dot{x}\dot{y},\;
\dot{y}^{2}$ &
Velocity-dependent nonlinear effects \\

Speed coupling &
$\Omega x,\;
\Omega y$ &
Speed-dependent stiffness and coupling \\

Gyroscopic &
$\Omega\dot{x},\;
\Omega\dot{y}$ &
Gyroscopic dynamics \\

Duffing nonlinear stiffness &
$(x^{2}+y^{2})x,\;
(x^{2}+y^{2})y$ &
Amplitude-dependent isotropic restoring force \\

Nonlinear damping &
$(x^{2}+y^{2})\dot{x},\;
(x^{2}+y^{2})\dot{y}$ &
Amplitude-dependent nonlinear damping \\

Imbalance forcing &
$\Omega^{2}\cos\theta,\;
\Omega^{2}\sin\theta$ &
Rotating mass imbalance excitation \\

Residual closure &
$r_x^{\mathrm{FFNN}},\;
r_y^{\mathrm{FFNN}}$ &
Learned residual dynamics \\

\bottomrule
\end{tabular}
\end{table*}

Applied to the same simulation (the filter never receives
$\alpha,\beta$ or their derivatives, only noisy $x(t),y(t)$), the EKF
recovers the hidden tilt states $\alpha,\beta$ at correlation $0.9999$
against ground truth, with the observed $x,y$ tracked comparably well
(Table~\ref{tab:state_errors}, Fig.~\ref{fig:synthetic_ab}); the filter
is thus not merely echoing the measured channels. This is a direct
demonstration of the partial-observation claim: the correct equation
structure is discovered, and the hidden-state trajectory is then
reconstructed from that model.

\begin{figure*}[ht]
    \centering
    \includegraphics[width=1\textwidth]{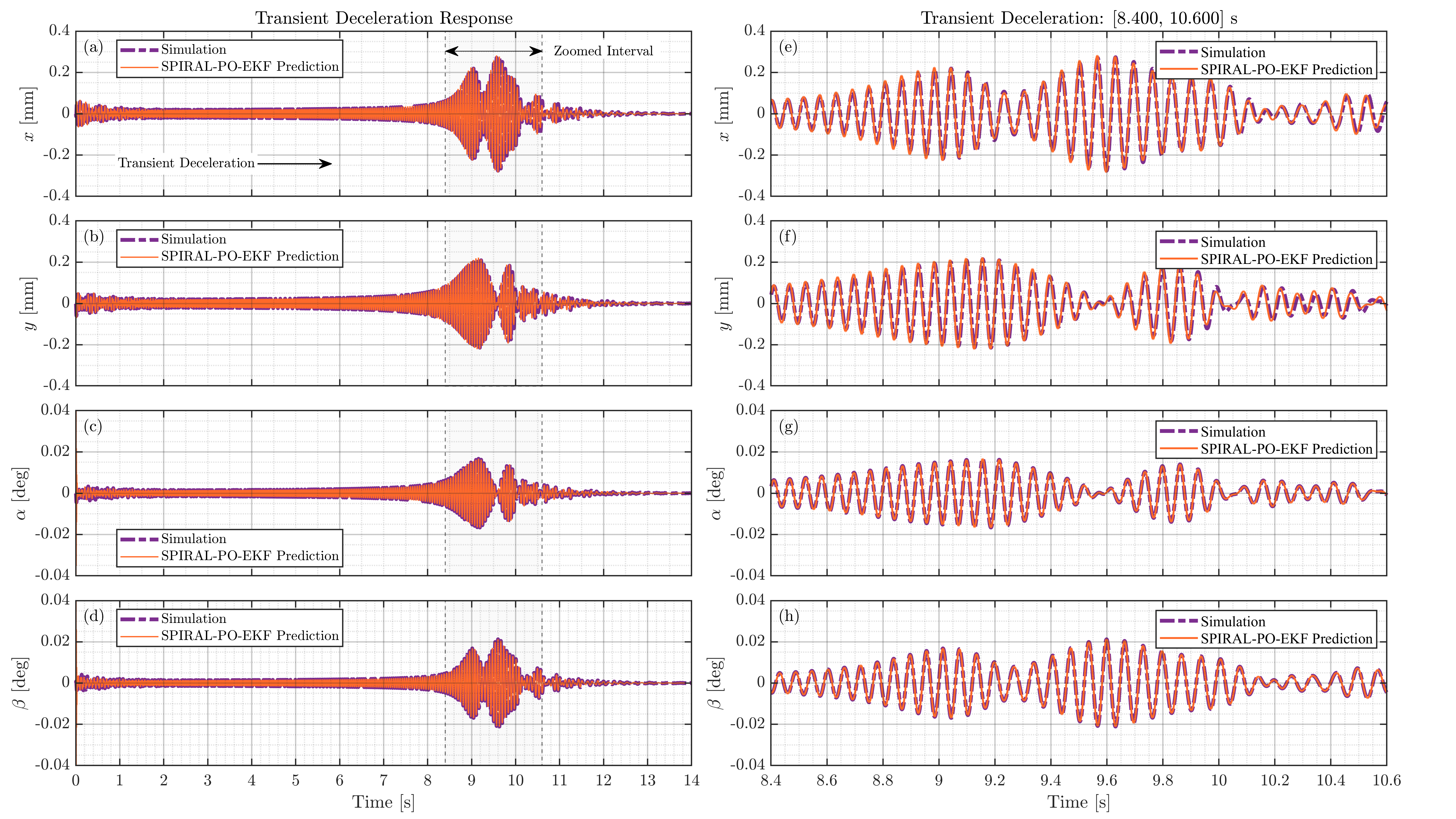}
    \caption{Synthetic validation: true (blue dashed) versus
    SPIRAL-PO\,$+$\,EKF reconstructed (red) trajectories for all
    four states over the $14\,$s coast-down. Top two rows: the observed
    lateral displacements $x(t),y(t)$. Bottom two rows: the hidden tilt
    angles $\alpha(t),\beta(t)$, reconstructed from the observed $x,y$
    alone. Right
    column: an $8.4$--$10.6\,$s zoom. The states track ground truth
    with correlation $0.9999$ throughout the transient deceleration,
    including the resonance-crossing region (Table~\ref{tab:state_errors}).}
    \label{fig:synthetic_ab}
\end{figure*}

\begin{table*}[ht]
\caption{State reconstruction error criteria. Synthetic states are
checked against true ground truth; real-rig $x,y$ against the
measured signal (the best available reference); real-rig
$\alpha,\beta$ have no ground truth, so only the filter's own
self-reported $1\sigma$ uncertainty is given.}
\label{tab:state_errors}
\centering
\small
\begin{tabular}{p{2.1cm} p{2.5cm} p{3cm} p{4cm} p{2.0cm}}
\toprule
Dataset & State & RMSE & rel. error & correlation \\
\midrule
Synthetic & $x$           & $3.65\times10^{-4}$ & $2.0\%$ & $0.9998$ \\
Synthetic & $y$           & $4.12\times10^{-4}$ & $2.0\%$ & $0.9998$ \\
Synthetic & $\alpha$ (hidden) & $2.86\times10^{-4}$ & $1.4\%$ & $0.9999$ \\
Synthetic & $\beta$ (hidden)  & $2.80\times10^{-4}$ & $1.6\%$ & $0.9999$ \\
\midrule
Real rig & $x$ & $1.44\times10^{-9}$ & $0.0\%$ & --- \\
Real rig & $y$ & $3.35\times10^{-9}$ & $0.0\%$ & --- \\
\\
Real rig & $\alpha$ (illustrative) & \multicolumn{3}{l}{mean $1\sigma$ uncertainty $=0.040$} \\
Real rig & $\beta$ (illustrative)  & \multicolumn{3}{l}{mean $1\sigma$ uncertainty $=0.040$} \\
\bottomrule
\end{tabular}
\end{table*}

The bearing/disk parameters used throughout this section were validated
against the synthetic system's natural amplitude ($x,y$ standard
deviation $\sim0.02$--$0.03\,$m, peak $\sim0.08\,$m). The real rig's
centered displacement signal has standard deviation $\sim0.2$, roughly
$8\times$ larger; since the Duffing term scales as amplitude$^3$,
feeding the real signal in at its natural scale makes
$\bm{K}_{NL}(\bq)\bq$ roughly $8^3\approx500\times$ stronger than the
filter's process model was tuned for, and the EKF diverges. Dividing the
real $x,y$ by $\mathrm{SCALE}=8$ before discovery and filtering keeps
the nonlinear term in the validated regime; every reported and plotted
quantity is multiplied back by $\mathrm{SCALE}$ before display, so $x,y$
appear in original sensor units (Fig.~\ref{fig:real_states}). This is a
numerical workaround, not a physical calibration: $\alpha,\beta$ share
the restored scale of $x,y$ but are not independently verified to be in
true physical radians, since $k_{L1},k_{L2},k_{NL1},k_{NL2},m$ were
never measured for this specific rig.

The identical filter was applied to the real $x(t),y(t)$ channels of
Sect.~\ref{sec:testrig} (Fig.~\ref{fig:real_states}), producing an
estimate of the rig's unmeasured tilt motion, subject to three
limitations reported for completeness rather than as a validated
physical result: (i) no ground truth exists for the real rig's tilt, so
these estimates demonstrate reconstruction capability rather than a
verified measurement; (ii) the bearing stiffness, disk inertia, and
eccentricity are the illustrative values validated synthetically, not
independently calibrated for this rig, so $\alpha,\beta$ should be read
on a relative, illustrative scale; and (iii) the filter diverges
approaching the resonance/critical-speed crossing identified in
Sect.~\ref{sec:testrig} ($t\approx16$--$17$\,s) and is reported only on
the stable preceding segment. This last point is corroborated
independently: the discovered SPIRAL-PO models for this record also fit
worst in the same window, so the resonance crossing emerges as the
genuinely hardest part of this record across two unrelated analyses.
Independent calibration of the bearing and disk parameters would be
needed before the real-rig reconstruction could be treated as an actual
engineering measurement rather than an illustrative demonstration.

\subsection{Robustness of the reconstruction: parameter mismatch and
noise}
\label{ssec:ekf_robustness}

The synthetic reconstruction of Sect.~\ref{ssec:hidden_state_validation}
uses a filter whose internal model shares the structure and parameters
of the simulator that generated the data, so the near-perfect agreement
($\approx\!1.5\%$ RMSE, correlation $0.9999$) could in principle be an
inverse crime rather than evidence that the hidden states are genuinely
recoverable. Three controls guard against this reading: identification
and reconstruction are run on independent records (equations discovered
on one coast-down realization, the EKF applied to a second with a
different noise seed and unbalance phase), so the reported metrics
constitute an \emph{out-of-sample} generalization test in which the
model is never fit to the trajectory it is evaluated on; the filter's bearing
parameters are deliberately perturbed from their true values by a
relative mismatch $\delta$ applied jointly to $k_{L}$, $k_{NL}$, and
$c_{b}$; and the measurement-noise level on $x,y$ is swept well above
the nominal $2\%$. Tables~\ref{tab:ekf_mismatch}
and~\ref{tab:ekf_noise} report how the reconstruction degrades along
each axis; the matched, low-noise entry of each table reproduces the
baseline result, and the trend away from it indicates whether the
reconstruction reflects genuine observability or merely a self-consistent
model.

\providecommand{\fillin}{\ensuremath{\langle\cdot\rangle}}

\begin{table*}[ht]
\caption{Hidden-state reconstruction under bearing-parameter mismatch
(measurement noise held at the nominal $2\%$). Mismatch $\delta$ is
applied jointly to $k_{L}$, $k_{NL}$, $c_{b}$ in the filter model; the
$\delta=0\%$ row is the matched (inverse-crime) baseline. Metrics are
computed against the true hidden trajectories on an independent
validation record.}
\label{tab:ekf_mismatch}
\centering
\small
\begin{tabular}{l l l l l l}
\toprule
Mismatch $\delta$
  & $\alpha$ RMSE & $\alpha$ corr.
  & $\beta$ RMSE & $\beta$ corr.
  & $x,y$ RMSE \\
\midrule
$0\%$ (matched) & $2.9\times10^{-4}$ & $0.9999$ & $2.8\times10^{-4}$ & $0.9999$ & $3.8\times10^{-4}$ \\
$5\%$           & $3.5\times10^{-4}$ & $0.9999$ & $3.6\times10^{-4}$ & $0.9998$ & $3.8\times10^{-4}$ \\
$10\%$          & $4.6\times10^{-4}$ & $0.9997$ & $5.0\times10^{-4}$ & $0.9996$ & $3.8\times10^{-4}$ \\
$20\%$          & $7.0\times10^{-4}$ & $0.9994$ & $7.6\times10^{-4}$ & $0.9990$ & $3.8\times10^{-4}$ \\
\bottomrule
\end{tabular}
\end{table*}

\begin{table*}[ht]
\caption{Hidden-state reconstruction under increasing measurement
noise (bearing parameters matched, $\delta=0$). Noise is expressed as
a percentage of each channel's standard deviation; the $2\%$ row is the
nominal baseline. ``Stable window'' is the longest leading segment over
which the filter remains convergent before the resonance crossing.}
\label{tab:ekf_noise}
\centering
\small
\begin{tabular}{l l l l l l}
\toprule
Noise $\sigma$
  & $\alpha$ RMSE & $\alpha$ corr.
  & $\beta$ RMSE & $\beta$ corr.
  & Stable window \\
\midrule
$2\%$ (nominal) & $2.9\times10^{-4}$ & $0.9999$ & $2.8\times10^{-4}$ & $0.9999$ & $8.0\,$s \\
$5\%$           & $7.0\times10^{-4}$ & $0.9994$ & $6.8\times10^{-4}$ & $0.9992$ & $8.0\,$s \\
$10\%$          & $1.3\times10^{-3}$ & $0.9978$ & $1.3\times10^{-3}$ & $0.9971$ & $8.0\,$s \\
$20\%$          & $2.4\times10^{-3}$ & $0.9929$ & $2.3\times10^{-3}$ & $0.9910$ & $8.0\,$s \\
\bottomrule
\end{tabular}
\end{table*}

We run both parameter-mismatch and measurement-noise sweeps and report
the resulting metrics; in both cases the degradation is modest. A $20\%$
simultaneous mismatch in the bearing parameters increases the tilt RMSE
only from $\approx\!2.9\times10^{-4}$ to $\approx\!7\times10^{-4}$, while
the correlation stays above $0.999$ and the $x,y$ reconstruction errors
are essentially unchanged. Likewise, measurement noise up to $20\%$ of
the channel RMS keeps tilt correlations above $0.99$ without reducing the
$8\,$s interval of stable reconstruction. The reconstruction thus
degrades gracefully under both modeling errors and measurement noise
rather than failing catastrophically as an inverse crime would,
providing strong evidence that the hidden dynamics are genuinely
observable.
\section{Experimental test rig}
\label{sec:testrig}

The experimental test rig used to validate the SPIRAL-PO framework
was originally developed to investigate physics-based, data-driven
identification and active vibration control of a laboratory-scale
vertical-shaft rotary machine \cite{Piramoon2024}, designed to
replicate the key dynamical features of industrial centrifugal
machinery in a controllable, instrumented setting.

\subsection{Laboratory-Scale Rotary System}
\label{subsec:test_rig}

Experimental validation was performed using a laboratory-scale vertical rotor-bearing platform specifically configured to acquire high-quality vibration measurements for nonlinear system identification. The test rig enables repeatable experiments under constant-speed, acceleration, and deceleration operating conditions, thereby providing a representative dataset for evaluating the proposed SPIRAL-PO framework. An overview of the experimental apparatus is presented in Fig.~\ref{fig:test_rig}, while the corresponding four-degree-of-freedom mathematical model employed throughout this paper is illustrated in Fig.~\ref{fig:4DOF_rotor}.

The mechanical drive system comprises a variable-speed AC motor \textcircled{1} mounted on a rigid support frame \textcircled{2}. The motor drives the rotor assembly through a flexible shaft \textcircled{4}, which carries the rotor with mounted test tubes \textcircled{5}. The shaft is supported by the upper bearing housing \textcircled{3}, the primary rotor-bearing assembly \textcircled{6}, and the lower bearing housing \textcircled{7}. This configuration provides the lateral compliance necessary to excite coupled translational and rotational dynamics over the operating speed range considered in this study. For operator protection, the rotating assembly is enclosed by a transparent safety shield \textcircled{8}.

The resulting vibration measurements are used as the observable inputs to the SPIRAL-PO identification algorithm. These measurements provide the information required to identify the governing nonlinear dynamics while simultaneously reconstructing the hidden rotor states through the reduced-order estimation framework presented in the following sections.
Only the lateral translational responses are directly measured during the experiments, whereas the rotational coordinates associated with disk tilting remain unmeasured. Consequently, the experimental platform provides a practical benchmark for assessing the capability of the proposed SPIRAL-PO framework to infer hidden states from partial observations.
\begin{figure}[H]
    \centering
    \includegraphics[width=.65\linewidth]{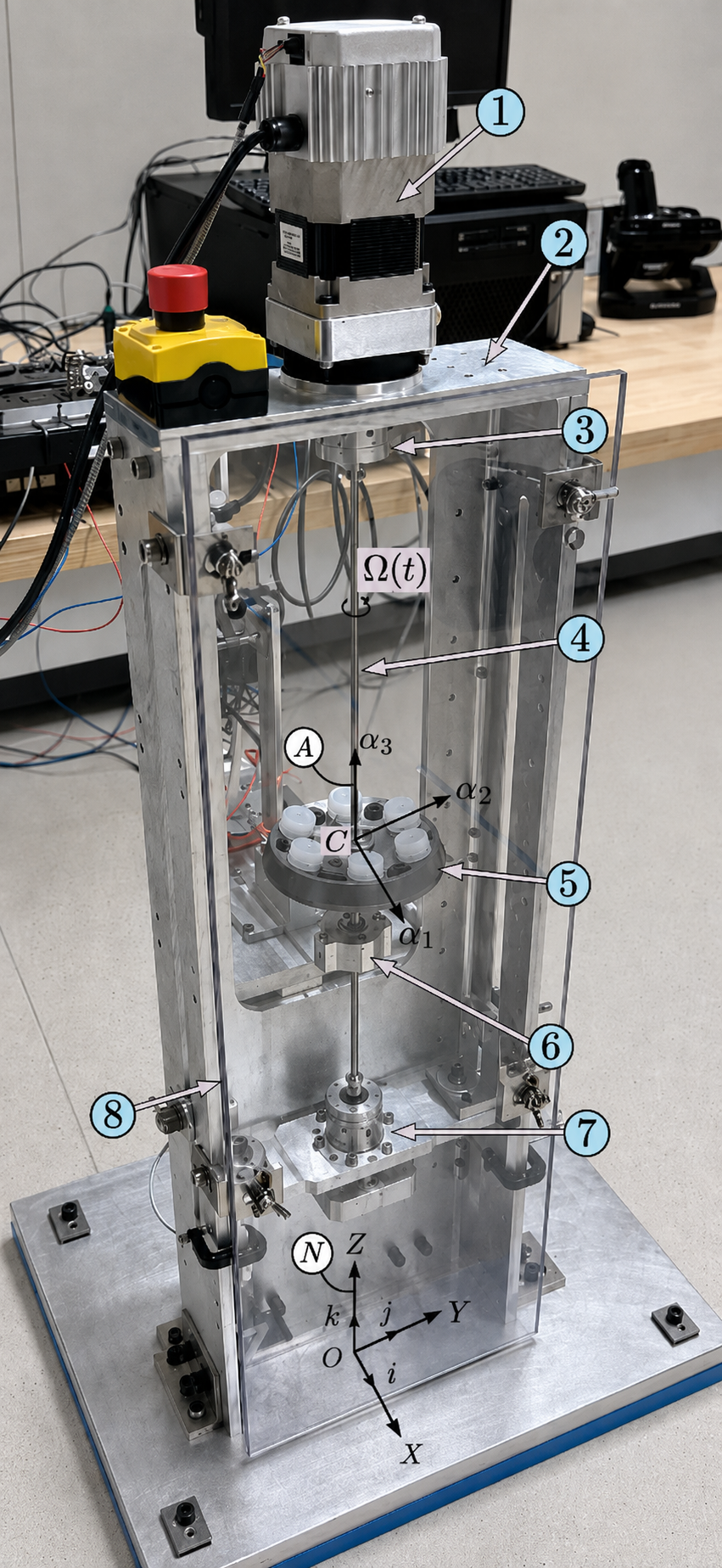}
\caption{Laboratory-scale vertical rotor test rig and its components: an AC
motor \textcircled{1} is mounted on the chassis \textcircled{2} and applies
torque to the drive mechanism, which includes the disk rotor \textcircled{5}
mounted on a slender, flexible shaft \textcircled{4}. This shaft extends
through the rotor and connects to the actuator force transmission assembly,
which in turn connects to the rotor bearing \textcircled{6}. This assembly is
linked to the bearing housing at the top \textcircled{3} and at the bottom
\textcircled{7}. A safety shield \textcircled{8} covers the front panel.}
\label{fig:test_rig}
\end{figure}

The mechanical assembly consists of a disk-shaped rotor mounted on a
slender flexible steel shaft. Multiple circumferential apertures in the
disk permit calibrated trial masses at prescribed radial and angular
locations, so imbalance magnitude and phase can be varied in a
controlled, repeatable manner; the resulting rotating imbalance force
provides the persistent synchronous excitation used for identification.

The rotor is driven by a variable-frequency AC electric motor through a
gearbox, damping plate, and flexible coupling, transmitting torque
through a $6~\mathrm{mm}$-diameter steel shaft. Lovejoy\textsuperscript{\textregistered}
couplings accommodate small shaft misalignments, attenuate impulsive
loads, and limit the transmission of motor-side disturbances to the
rotor. A centrifugal clutch between the drive shaft and rotor further
reduces the influence of small geometric and assembly misalignments on
the measured response, and the bearing housing is supported by a
cylindrical elastomeric element providing structural flexibility and
damping in both lateral directions.

The complete system is mounted on an aluminum base plate and C-channel
chassis, which provides repeatable mounting locations for the
displacement sensors, accelerometers, actuators, and safety shield.
The two lateral actuators are part of the general-purpose platform but
are not used to apply feedback forces during the open-loop
system-identification experiments considered here.

The motor command is prescribed as a constant-speed, acceleration, or
deceleration profile. For a linear speed ramp,
\begin{equation}
    \Omega(t)=\Omega_{0}+\dot{\Omega} t,
    \label{eq:speed_profile}
\end{equation}
where $\Omega_{0}$ is the initial angular velocity and $\dot{\Omega}$ is the constant angular acceleration (positive: acceleration; negative:
deceleration).
Variable-speed experiments excite the rotor over a continuous frequency
interval, providing information on resonance passage, speed-dependent
dynamic behavior, cross-axis coupling, and gyroscopic effects in the
4-DOF model.

\subsection{Measurement and Data-Acquisition System}
\label{subsec:data_acquisition}

Lateral vibration is measured by two orthogonally positioned
Micro-Epsilon opto-NCDT~1420 non-contact laser displacement sensors,
mounted directly at the rotor disk level in the $X$- and $Y$-axes, and
providing the observable sub-state $\bqo(t) = [x(t),\,y(t)]^{\mathsf T}$
required by SPIRAL-PO without introducing additional mass, stiffness,
or damping. The sensors support rates up to $4~\mathrm{kHz}$ and are
operated through RS422 interfaces at $24~\mathrm{V}$. The orbit radius
$r(t)=\sqrt{x(t)^2+y(t)^2}$ is formed directly from these two channels.
Two orthogonally mounted accelerometers independently record the
lateral acceleration response, used for estimator initialization,
signal validation, and comparison against reconstructed quantities.

Spin speed $\OO(t)$ --- the scheduling parameter in
\eqref{eq:regressor} --- is measured independently as a known input,
not reconstructed from $x,y$. For a swept-speed run (run-up or
coast-down), $\OO$ traverses a range of order $\OO_{cr}$ at constant
angular acceleration, satisfying the persistent-excitation condition
(C2) of Sect.~\ref{ssec:PE_rotor}; the records analyzed here are
coast-downs.

Real-time acquisition and control use a Quanser Q8 DAQ board (16-bit
resolution, $\pm10~\mathrm{V}$ I/O range) interfaced to
MATLAB\textsuperscript{\textregistered}/Simulink\textsuperscript{\textregistered}
via the QUARC\textsuperscript{TM} real-time toolchain, sampling at
\begin{equation}
    \Delta t = 0.001~\mathrm{s},
    \qquad
    f_{s}=\frac{1}{\Delta t}=1000~\mathrm{Hz}.
    \label{eq:sampling_rate}
\end{equation}

A Crystal Instruments Spider-20 dynamic signal analyzer with
Engineering Data Management (EDM) software provides an independent
monitoring path in both time and frequency domains, enabling online
verification of run-up quality and sensor health.

Before identification, measured voltages are converted to physical
units via sensor calibration factors, each stationary channel's mean
offset is removed, and the displacement, acceleration, and rotor-speed
channels are synchronized to common time samples. The experimental
measurement vector is
\begin{equation}
    \mathbf{y}_{k}
    =
    \begin{bmatrix}
        x_{k} \\
        y_{k}
    \end{bmatrix}
    +
    \mathbf{v}_{k},
    \label{eq:measurement_vector}
\end{equation}
where $x_{k},y_{k}$ are the measured lateral displacements at the
$k$th sample and $\mathbf{v}_{k}$ is measurement noise.

The overall mechatronic implementation is as follows: the instrumented
rotor supplies
$\{x(t),y(t),\Omega(t)\}$ to the SPIRAL-PO workflow, which combines
physics-guided library construction, residual modeling, statistical
screening, and validation-based model selection to identify the
governing observable dynamics. For each candidate parameter vector
generated during the search, an extended Kalman filter reconstructs
the unmeasured rotational states and computes the measurement
innovation, which quantifies agreement between the candidate model and
the measured response and feeds into model selection.
\begin{figure*}[ht]
\centering
\includegraphics[width=1\textwidth]{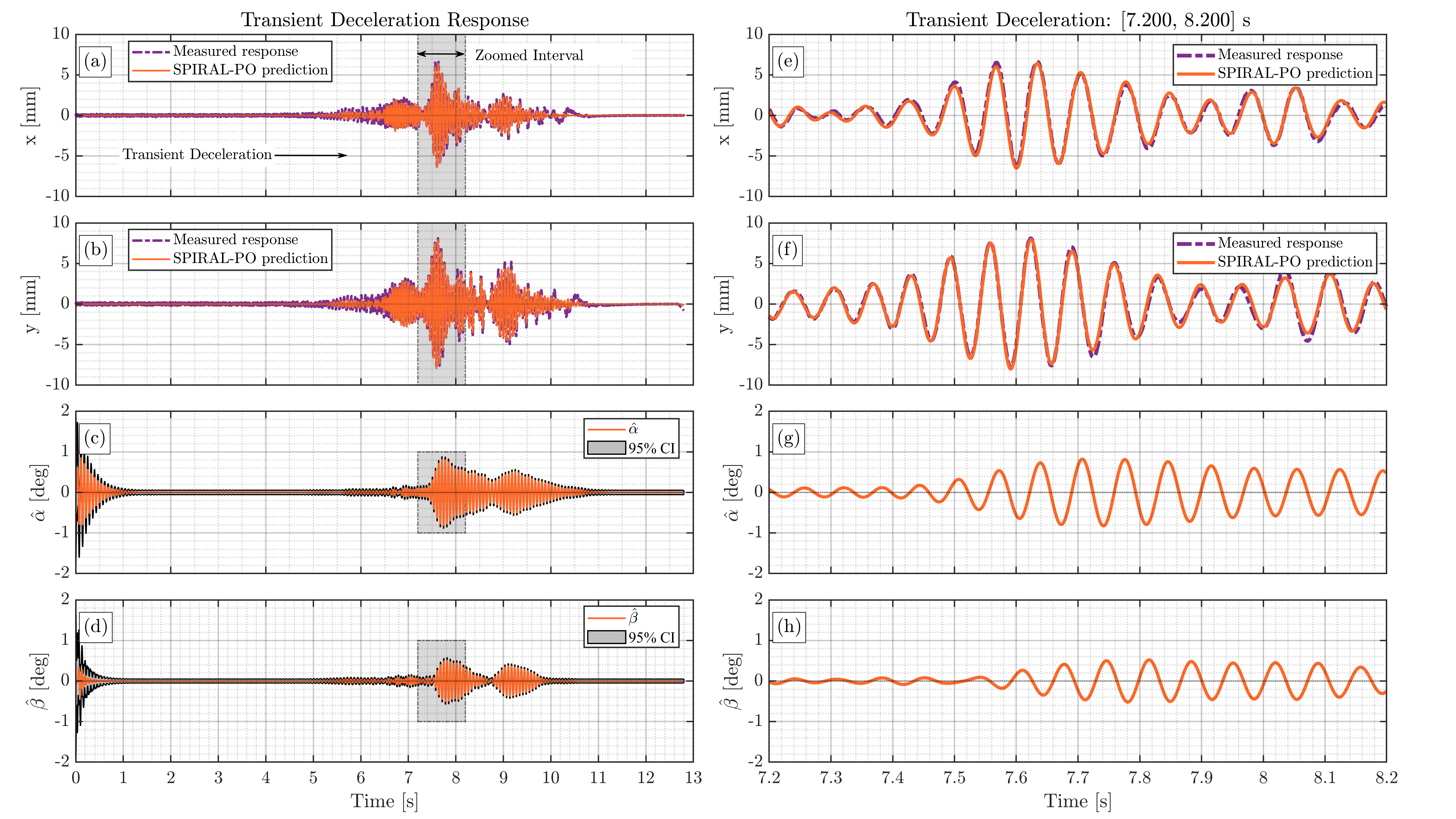}
\caption{Experimental validation of the proposed SPIRAL-PO algorithm during transient rotor deceleration. The left column presents the complete experimental response, whereas the right column provides an enlarged view of the shaded interval to emphasize the prediction accuracy during the transient regime. Panels (a) and (b) compare the measured and SPIRAL-PO predicted lateral displacements in the $x$- and $y$-directions, respectively. Panels (c) and (d) show the \emph{model-based} estimates of the hidden tilt states $\hat{\alpha}$ and $\hat{\beta}$ with the filter's self-reported $95\%$ intervals; no independent tilt measurement is available on the rig, so these are illustrative rather than validated. Panels (e)--(h) present the enlarged views of the highlighted interval in (a)--(d): the reconstruction of the \emph{measured} $x,y$ dynamics is high-fidelity, while the hidden tilt estimates are shown without ground truth.}
\label{fig:real_states}
\end{figure*}
\subsubsection{Identifiability conditions at the test rig}
\label{ssec:PE_rotor}

The identifiability conditions of Sect.~\ref{sec:identifiability}
specialize to this rig as follows.

(C1) Functional separability. The hidden coupling in
\eqref{eq:xobs}--\eqref{eq:yobs} contains terms $\beta$, $\alpha$, and
$|\br_i|^2$, which project onto the observable library, under the
slender-shaft assumption, via
\begin{equation}
  \beta \approx c_\beta\,\dot{x}/\OO,
  \quad
  \alpha \approx c_\alpha\,\dot{y}/\OO,
\end{equation}
and 
\begin{equation}
  |\br_i|^2 \approx \bar\xi_i^2(x^2 + y^2)
  + \mathcal{O}(\xi^2\bar\xi^2 L^2).
  \label{eq:C1_proj}
\end{equation}

The leading library approximation has
$\norm{\eta}_{L^2}/\norm{\Delta}_{L^2} = \mathcal{O}(\xi\bar\xi L/R)$,
where $R$ is the characteristic orbit radius --- small for
$L \gg R$, as holds for well-balanced rotors.

(C2) Persistent excitation. A constant-speed unbalance run confines
the observable trajectory to a Lissajous curve in the $(x,y)$ plane ---
a 1-dimensional manifold in $\R^5$ --- rendering
$\bPhi_{\bxi}^{\mathsf{T}}\bPhi_{\bxi}$ rank-deficient
(Corollary~\ref{cor:const_speed}). A swept-speed run (run-up or
coast-down) evaluates the candidate signatures over a continuum of
excitation frequencies, lifting the harmonic degeneracy and satisfying
(C2) --- a hard design requirement on the experiment.

Dimensional admissibility (enforced at library construction,
Sect.~\ref{ssec:library}, not a hypothesis of the identifiability
theorem). All six SPIRAL-PO library candidates
($\OO\dot{y}$, $\OO\dot{x}$, $(x^2+y^2)x$, $(x^2+y^2)y$,
$y$ in the $x$-equation, $x$ in the $y$-equation) have dimensions of
m\,s$^{-2}$ (acceleration), confirming admissibility.

(C3) SNR requirement. With $p = 6$ library candidates and $N = 5000$
samples at SNR~$= 20$~dB, \eqref{eq:SNR_min} gives
SNR$_{\min} \approx 8$~dB, comfortably satisfied. At SNR~$= 10$~dB,
SNR$_{\min}$ remains below 10~dB provided $N > 2000$.

\section{Discussion}
\label{sec:discussion}

\subsection{Comparison with related methods}
\label{sec:novelty}

Most equation-discovery methods do not directly apply to the
partial-observation problem studied here. Sparse-regression methods
such as SINDy~\cite{Brunton2016} and its weak form
WSINDy~\cite{Messenger2021} regress the dynamics from the full measured
state: with $\alpha,\beta$ unmeasured, their regression library cannot
even be formed. This structural inapplicability---not a decision to omit
them---is why a like-for-like quantitative benchmark against full-state
SINDy/WSINDy is not reported here. Physics-informed neural networks~\cite{Raissi2019}
tolerate sparse data but require the governing-equation form to be
supplied, so they solve a known equation rather than discover an
unknown one. The one established data-driven method that operates on a
genuinely partial measurement is delay-coordinate SINDy, taken here as
the relevant comparator.

The idea rests on Takens' embedding theorem~\cite{Takens1981}: for a
generic partial measurement $q(t)$, the time-delayed vector
$[\,q(t),\,q(t-\tau),\,\ldots,\,q(t-(d-1)\tau)\,]$ traces an attractor
diffeomorphic to that of the full state for large enough embedding
dimension $d$. In the HAVOK realization (Hankel Alternative View Of
Koopman)~\cite{Brunton2017Havok}, these delays are stacked into a
Hankel matrix whose SVD yields dominant delay coordinates, in which
sparse regression is then applied, typically returning a linear model
with intermittent forcing; deep-learning variants learn the embedding
and sparse model jointly and have been applied directly to
hidden-variable discovery~\cite{Bakarji2023}. Because it is defined on
the observable subspace, delay-coordinate SINDy --- unlike full-state
SINDy --- can be applied to the rotor's $x(t),y(t)$ alone, making it a
fair benchmark for SPIRAL-PO (Table~\ref{tab:comparison}).

\begin{table*}[ht]
\caption{Capability comparison in the partial-observation setting.
Full-state sparse-regression methods (SINDy, WSINDy) and
equation-known solvers (PINNs) are not directly applicable when the
state is only partially measured; the relevant data-driven comparator
is delay-coordinate SINDy
(HAVOK)~\cite{Brunton2017Havok,Takens1981,Bakarji2023}, shown
alongside the full-state SINDy/WSINDy family for reference. Entries
marked ``No$^{\dagger}$'' may be supplied by later extensions (see
note).}
\label{tab:comparison}
\centering
\footnotesize
\begin{tabular}{p{5.5cm} p{2.5cm} p{3cm} p{3.2cm}}
\toprule
Property
  & SPIRAL-PO
  & delay-SINDy \cite{Brunton2017Havok}
  & SINDy / WSINDy \cite{Brunton2016,Messenger2021} \\
\midrule
Operates on partial (hidden) state?
  & Yes & Yes & No \\
Works in original physical coordinates?
  & Yes & No (delay coords) & Yes \\
Recovers labeled hidden states?
  & Yes & No & n/a \\
Physics seed embedded?
  & Yes & No & No \\
Physics-structured library?
  & Yes & No & No \\
Interpretable / annotated terms?
  & Yes & No & Partial \\
Identifiability theorem?
  & Yes & No & No \\
Impossibility theorem?
  & Yes & No & No \\
Sample-complexity bound?
  & Yes & No & No$^{\dagger}$ \\
Coefficient uncertainty?
  & Yes & No & No$^{\dagger}$ \\
Broadband whiteness guard?
  & Ljung--Box & No & No \\
Overfitting guards
  & 2 gates + rollback & rank trunc. & $\ell_1$ \\
\bottomrule
\end{tabular}
\par\smallskip
{\footnotesize\noindent $^{\dagger}$~Not provided by the method's
original formulation; some later extensions add the capability ---
e.g., ensemble SINDy yields coefficient uncertainty, and
sample-complexity analyses exist for sparse regression more broadly.}
\end{table*}

The contrast is one of representation, not raw predictive accuracy.
Delay-coordinate methods reconstruct and forecast the observable
signal, but the recovered state lives in abstract delay or SVD
coordinates whose diffeomorphism back to the physical attractor is
generically unknown~\cite{Bakarji2023}; the resulting model contains
no $\alpha$, $\beta$, or bearing-stiffness/gyroscopic coefficient an
engineer can read off. SPIRAL-PO instead works in the original physical
coordinates with a physics-structured library, so its discovered terms
--- gyroscopic, Duffing, tilt cross-coupling --- are the physical
couplings themselves, and the subsequent EKF stage reconstructs the
actual tilt trajectories $\alpha(t),\beta(t)$. Put briefly,
delay-SINDy answers ``what surrogate predicts $x,y$?'' whereas
SPIRAL-PO answers ``which physical mechanism produced them?'' --- the
question that matters for model trust, fault diagnosis, and control.

Relative to this comparator, SPIRAL-PO's contributions are: (i) a
projection-residual formulation \eqref{eq:residual} that converts
unmeasured state effects into structured, discoverable signatures
without full-state measurement; (ii) an impossibility result
(Proposition~\ref{thm:impossibility}), specialized to the rotor as
Corollary~\ref{cor:const_speed}, proving that constant-speed data
cannot recover the gyroscopic and Duffing terms at any sample size or
sensor precision, so a swept-speed test is mathematically necessary
rather than a design preference; (iii) a closed-form sample-complexity bound
$N_{\min}(\delta)$ (Proposition~\ref{thm:sample}) showing excitation
quality to be a stronger lever than noise reduction; (iv) a
bias--variance decomposition (Theorem~\ref{thm:ident}) bounding
convergence under library mismatch by $\epsilon$ and
$\sigma_\varepsilon/\sqrt{N\mu_\Phi}$; (v) a dimensionally-constructed
library that reduces false discovery relative to combinatorial
enumeration; and (vi) physical annotation of every admitted term,
producing a fully interpretable model.

\subsection{Interpretation and limitations}
\label{sec:discussion_obs}

Partial observability is a filter, not an obstacle: it limits which
equations are discoverable rather than preventing discovery. With rich
excitation the hidden-state coupling leaves a distinguishable imprint in
the observable residual and SPIRAL-PO extracts it reliably; with poor
excitation the signatures alias and no estimator can separate them.
Corollary~\ref{cor:const_speed}'s constant-speed/swept-speed dichotomy
is a geometric property of the data, not a shortcoming of the algorithm,
and recognizing it in advance turns an inconclusive run into
experiment-design guidance.

Discovery and state estimation are complementary, not interchangeable
(Sects.~\ref{ssec:system}, \ref{ssec:ekf_reconstruction}): SPIRAL-PO
certifies which mechanisms are present in the observable equations and
their coefficients, and the EKF uses a physical model to reconstruct the
hidden trajectory, reaching sub-2\% relative RMSE on synthetic data
(Table~\ref{tab:state_errors}). Crucially, the \emph{full} governing
equations---including the hidden-state dynamics---can be reconstructed
only when their structure is known a priori, as it is here for the 4-DOF
rotor model supplying the EKF process equations. In the general case
this structure is unavailable: SPIRAL-PO closes only the observable
subspace, and full-state recovery would require an assumed model of the
hidden dynamics or a purely data-driven surrogate observer that does not
return the labeled physical states.

Three further limitations bound the results. The physics-structured
library assumes the hidden coupling is well-approximated by a finite
combination of observable monomials; for non-polynomial couplings
Theorem~\ref{thm:ident} bounds but does not remove the resulting bias.
The real-rig EKF reconstruction is a demonstration of capability, not a
calibrated measurement (uncalibrated bearing/disk parameters, no ground
truth, divergence near resonance; Sect.~\ref{ssec:ekf_reconstruction}).
And the scheduling parameter $\OO(t)$ must be measured independently; an
unmeasured drive signal would require further extension. Future work
will calibrate the bearing parameters experimentally and extend the
library to fault-detection signatures such as crack-induced stiffness
asymmetry.

Throughout, SPIRAL-PO preserves physical interpretability: candidates
come from dimensional analysis rather than combinatorial enumeration,
each admitted term maps to a mechanical mechanism, and coefficients
carry standard errors. An engineer can thus check a recovered gyroscopic
coefficient against the disk's polar inertia, or a Duffing coefficient
against the bearing manufacturer's stiffness specification---the
auditability needed before a discovered model enters a control or
fault-detection loop.

\section{Conclusions}
\label{sec:conclusions}

SPIRAL-PO extends the SPIRAL framework to the practically important
class of nonlinear dynamical systems where the full state is not
measurable.
By reformulating the identification problem in the observable
subspace, building a physics-structured candidate library from
dimensional analysis of hidden-variable dynamics, replacing the
variance-based guard with a frequency-domain whiteness test, and
using a multi-output residual neural network, the method discovers
interpretable, statistically validated equations of motion from
partial observations only.
Applied to a four-degree-of-freedom vertical rotor with Duffing
supports (Sect.~\ref{sec:application}) and motivated by the
laboratory test rig described in Sect.~\ref{sec:testrig}, the
framework's recovery of the gyroscopic coupling, the bearing-mediated
Duffing nonlinearity, and the tilt-mediated linear cross-coupling from
$x(t)$ and $y(t)$ measurements alone is confirmed numerically
(Sect.~\ref{ssec:hidden_state_validation}) against a synthetic
4-DOF system in which the tilt states are simulated as genuine hidden
dynamics and never exposed to the discovery algorithm --- a direct
test of the partial-observation claim, not merely of the discovery
algorithm's mechanics on a fully-observed system.
Beyond discovering the governing equations, Sect.~\ref{ssec:ekf_reconstruction}
shows that an Extended Kalman Filter built on the SPIRAL-PO-validated
dynamics reconstructs the hidden tilt trajectory itself from
$x(t),y(t)$ alone, with sub-2\% relative error against ground truth on
synthetic data; applied to the real rig, the same filter produces an
illustrative tilt estimate, though without independent bearing
calibration this should be read as a demonstration of capability
rather than a calibrated physical measurement.
Future work will report experimental validation on the physical test
rig, independently calibrate the bearing and disk parameters needed
for a quantitatively trustworthy real-rig state reconstruction, and
extend the physics-structured library to fault-detection applications.
It will also add quantitative benchmarks---delay-coordinate SINDy and
component ablations (physics seed alone, residual network without the
statistical gates)---on common datasets, together with an out-of-sample
prediction of the measured $x,y$ responses across distinct experimental
records.

\backmatter


\section*{Declarations}

\subsubsection*{Author Contributions}
M.A.A. and S.P. conceived and designed the study. M.A.A. developed the
theoretical framework. M.A.A. and S.P. ran the synthetic simulations.
S.P. collected the experimental data. M.A.A. wrote the majority of the
manuscript, and S.P. edited portions of it. All authors reviewed and
approved the final manuscript.

\subsubsection*{Funding}
Not applicable.

\subsubsection*{Conflicts of Interest/Competing Interests}
The authors have no relevant financial or non-financial interests to
disclose.

\subsubsection*{Ethics Approval}
Not applicable.

\subsubsection*{Consent to Participate}
Not applicable.

\subsubsection*{Consent for Publication}
Not applicable.

\subsubsection*{Data Availability}
The MATLAB code and data used to generate the synthetic benchmark
results reported in this paper are available from the corresponding
author upon reasonable request.

\bibliography{SPIRAL_PO_references}


\begin{thebibliography}{25}
\ifx \bisbn   \undefined \def \bisbn  #1{ISBN #1}\fi
\ifx \binits  \undefined \def \binits#1{#1}\fi
\ifx \bauthor  \undefined \def \bauthor#1{#1}\fi
\ifx \batitle  \undefined \def \batitle#1{#1}\fi
\ifx \bjtitle  \undefined \def \bjtitle#1{#1}\fi
\ifx \bvolume  \undefined \def \bvolume#1{\textbf{#1}}\fi
\ifx \byear  \undefined \def \byear#1{#1}\fi
\ifx \bissue  \undefined \def \bissue#1{#1}\fi
\ifx \bfpage  \undefined \def \bfpage#1{#1}\fi
\ifx \blpage  \undefined \def \blpage #1{#1}\fi
\ifx \burl  \undefined \def \burl#1{\textsf{#1}}\fi
\ifx \doiurl  \undefined \def \doiurl#1{\url{https://doi.org/#1}}\fi
\ifx \betal  \undefined \def \betal{\textit{et al.}}\fi
\ifx \binstitute  \undefined \def \binstitute#1{#1}\fi
\ifx \binstitutionaled  \undefined \def \binstitutionaled#1{#1}\fi
\ifx \bctitle  \undefined \def \bctitle#1{#1}\fi
\ifx \beditor  \undefined \def \beditor#1{#1}\fi
\ifx \bpublisher  \undefined \def \bpublisher#1{#1}\fi
\ifx \bbtitle  \undefined \def \bbtitle#1{#1}\fi
\ifx \bedition  \undefined \def \bedition#1{#1}\fi
\ifx \bseriesno  \undefined \def \bseriesno#1{#1}\fi
\ifx \blocation  \undefined \def \blocation#1{#1}\fi
\ifx \bsertitle  \undefined \def \bsertitle#1{#1}\fi
\ifx \bsnm \undefined \def \bsnm#1{#1}\fi
\ifx \bsuffix \undefined \def \bsuffix#1{#1}\fi
\ifx \bparticle \undefined \def \bparticle#1{#1}\fi
\ifx \barticle \undefined \def \barticle#1{#1}\fi
\bibcommenthead
\ifx \bconfdate \undefined \def \bconfdate #1{#1}\fi
\ifx \botherref \undefined \def \botherref #1{#1}\fi
\ifx \url \undefined \def \url#1{\textsf{#1}}\fi
\ifx \bchapter \undefined \def \bchapter#1{#1}\fi
\ifx \bbook \undefined \def \bbook#1{#1}\fi
\ifx \bcomment \undefined \def \bcomment#1{#1}\fi
\ifx \oauthor \undefined \def \oauthor#1{#1}\fi
\ifx \citeauthoryear \undefined \def \citeauthoryear#1{#1}\fi
\ifx \endbibitem  \undefined \def \endbibitem {}\fi
\ifx \bconflocation  \undefined \def \bconflocation#1{#1}\fi
\ifx \arxivurl  \undefined \def \arxivurl#1{\textsf{#1}}\fi
\csname PreBibitemsHook\endcsname

\bibitem[\protect\citeauthoryear{Brunton et~al.}{2016}]{Brunton2016}
\begin{barticle}
\bauthor{\bsnm{Brunton}, \binits{S.L.}},
\bauthor{\bsnm{Proctor}, \binits{J.L.}},
\bauthor{\bsnm{Kutz}, \binits{J.N.}}:
\batitle{Discovering governing equations from data by sparse identification of
  nonlinear dynamical systems}.
\bjtitle{Proc Natl Acad Sci USA}
\bvolume{113}(\bissue{15}),
\bfpage{3932}--\blpage{3937}
(\byear{2016})
\doiurl{10.1073/pnas.1517384113}
\end{barticle}
\endbibitem

\bibitem[\protect\citeauthoryear{Messenger and Bortz}{2021}]{Messenger2021}
\begin{barticle}
\bauthor{\bsnm{Messenger}, \binits{D.A.}},
\bauthor{\bsnm{Bortz}, \binits{D.M.}}:
\batitle{Weak {SINDy} for partial differential equations}.
\bjtitle{J Comput Phys}
\bvolume{443},
\bfpage{110525}
(\byear{2021})
\doiurl{10.1016/j.jcp.2021.110525}
\end{barticle}
\endbibitem

\bibitem[\protect\citeauthoryear{Raissi et~al.}{2019}]{Raissi2019}
\begin{barticle}
\bauthor{\bsnm{Raissi}, \binits{M.}},
\bauthor{\bsnm{Perdikaris}, \binits{P.}},
\bauthor{\bsnm{Karniadakis}, \binits{G.E.}}:
\batitle{Physics-informed neural networks: a deep learning framework for
  solving forward and inverse problems involving nonlinear partial differential
  equations}.
\bjtitle{J Comput Phys}
\bvolume{378},
\bfpage{686}--\blpage{707}
(\byear{2019})
\doiurl{10.1016/j.jcp.2018.10.045}
\end{barticle}
\endbibitem

\bibitem[\protect\citeauthoryear{Brunton et~al.}{2017}]{Brunton2017}
\begin{barticle}
\bauthor{\bsnm{Brunton}, \binits{S.L.}},
\bauthor{\bsnm{Brunton}, \binits{B.W.}},
\bauthor{\bsnm{Proctor}, \binits{J.L.}},
\bauthor{\bsnm{Kaiser}, \binits{E.}},
\bauthor{\bsnm{Kutz}, \binits{J.N.}}:
\batitle{Chaos as an intermittently forced linear system}.
\bjtitle{Nat Commun}
\bvolume{8},
\bfpage{19}
(\byear{2017})
\doiurl{10.1038/s41467-017-00030-8}
\end{barticle}
\endbibitem

\bibitem[\protect\citeauthoryear{Bakarji et~al.}{2023}]{Bakarji2023}
\begin{barticle}
\bauthor{\bsnm{Bakarji}, \binits{J.}},
\bauthor{\bsnm{Champion}, \binits{K.}},
\bauthor{\bsnm{Kutz}, \binits{J.N.}},
\bauthor{\bsnm{Brunton}, \binits{S.L.}}:
\batitle{Discovering governing equations from partial measurements with deep
  delay autoencoders}.
\bjtitle{Proc R Soc A}
\bvolume{479},
\bfpage{20230422}
(\byear{2023})
\doiurl{10.1098/rspa.2023.0422}
\end{barticle}
\endbibitem

\bibitem[\protect\citeauthoryear{Champion et~al.}{2019}]{Champion2019}
\begin{barticle}
\bauthor{\bsnm{Champion}, \binits{K.}},
\bauthor{\bsnm{Lusch}, \binits{B.}},
\bauthor{\bsnm{Kutz}, \binits{J.N.}},
\bauthor{\bsnm{Brunton}, \binits{S.L.}}:
\batitle{Data-driven discovery of coordinates and governing equations}.
\bjtitle{Proc Natl Acad Sci USA}
\bvolume{116}(\bissue{45}),
\bfpage{22445}--\blpage{22451}
(\byear{2019})
\doiurl{10.1073/pnas.1906995116}
\end{barticle}
\endbibitem

\bibitem[\protect\citeauthoryear{Somacal et~al.}{2022}]{Somacal2022}
\begin{barticle}
\bauthor{\bsnm{Somacal}, \binits{A.}},
\bauthor{\bsnm{Barrera}, \binits{Y.}},
\bauthor{\bsnm{Boechi}, \binits{L.}},
\bauthor{\bsnm{Jonckheere}, \binits{M.}},
\bauthor{\bsnm{Lefieux}, \binits{V.}},
\bauthor{\bsnm{Picard}, \binits{D.}},
\bauthor{\bsnm{Smucler}, \binits{E.}}:
\batitle{Uncovering differential equations from data with hidden variables}.
\bjtitle{Phys Rev E}
\bvolume{105}(\bissue{5}),
\bfpage{054209}
(\byear{2022})
\doiurl{10.1103/PhysRevE.105.054209}
\end{barticle}
\endbibitem

\bibitem[\protect\citeauthoryear{Nguyen and Ayoubi}{2027}]{Ayoubi2027}
\begin{bchapter}
\bauthor{\bsnm{Nguyen}, \binits{N.}},
\bauthor{\bsnm{Ayoubi}, \binits{M.A.}}:
\bctitle{Nonlinear unsteady aerodynamic model identification via symbolic
  physics-informed residual augmentation}.
In: \bbtitle{To Be Presented at the AIAA SciTech Forum}
(\byear{2027})
\end{bchapter}
\endbibitem

\bibitem[\protect\citeauthoryear{Friswell et~al.}{2010}]{Friswell2010}
\begin{bbook}
\bauthor{\bsnm{Friswell}, \binits{M.I.}},
\bauthor{\bsnm{Penny}, \binits{J.E.T.}},
\bauthor{\bsnm{Garvey}, \binits{S.D.}},
\bauthor{\bsnm{Lees}, \binits{A.W.}}:
\bbtitle{Dynamics of Rotating Machines}.
\bpublisher{Cambridge University Press},
\blocation{Cambridge}
(\byear{2010}).
\doiurl{10.1017/CBO9780511780509}
\end{bbook}
\endbibitem

\bibitem[\protect\citeauthoryear{Ljung and Box}{1978}]{LjungBox1978}
\begin{barticle}
\bauthor{\bsnm{Ljung}, \binits{G.M.}},
\bauthor{\bsnm{Box}, \binits{G.E.P.}}:
\batitle{On a measure of lack of fit in time series models}.
\bjtitle{Biometrika}
\bvolume{65}(\bissue{2}),
\bfpage{297}--\blpage{303}
(\byear{1978})
\doiurl{10.1093/biomet/65.2.297}
\end{barticle}
\endbibitem

\bibitem[\protect\citeauthoryear{Kingma and Ba}{2015}]{Kingma2015}
\begin{bchapter}
\bauthor{\bsnm{Kingma}, \binits{D.P.}},
\bauthor{\bsnm{Ba}, \binits{J.}}:
\bctitle{{Adam}: a method for stochastic optimization}.
In: \bbtitle{International Conference on Learning Representations (ICLR)}
(\byear{2015}).
\bcomment{arXiv:1412.6980}
\end{bchapter}
\endbibitem

\bibitem[\protect\citeauthoryear{Halton}{1960}]{Halton1960}
\begin{barticle}
\bauthor{\bsnm{Halton}, \binits{J.H.}}:
\batitle{On the efficiency of certain quasi-random sequences of points in
  evaluating multi-dimensional integrals}.
\bjtitle{Numer Math}
\bvolume{2}(\bissue{1}),
\bfpage{84}--\blpage{90}
(\byear{1960})
\doiurl{10.1007/BF01386213}
\end{barticle}
\endbibitem

\bibitem[\protect\citeauthoryear{Dunn}{1961}]{Dunn1961}
\begin{barticle}
\bauthor{\bsnm{Dunn}, \binits{O.J.}}:
\batitle{Multiple comparisons among means}.
\bjtitle{J Am Stat Assoc}
\bvolume{56}(\bissue{293}),
\bfpage{52}--\blpage{64}
(\byear{1961})
\doiurl{10.1080/01621459.1961.10482090}
\end{barticle}
\endbibitem

\bibitem[\protect\citeauthoryear{Miller}{1981}]{Miller1981}
\begin{bbook}
\bauthor{\bsnm{Miller}, \binits{R.G.}}:
\bbtitle{Simultaneous Statistical Inference},
\bedition{2nd} edn.
\bpublisher{Springer},
\blocation{New York}
(\byear{1981})
\end{bbook}
\endbibitem

\bibitem[\protect\citeauthoryear{Montgomery}{2017}]{Montgomery2017}
\begin{bbook}
\bauthor{\bsnm{Montgomery}, \binits{D.C.}}:
\bbtitle{Design and Analysis of Experiments},
\bedition{9th} edn.
\bpublisher{John Wiley \& Sons},
\blocation{Hoboken}
(\byear{2017})
\end{bbook}
\endbibitem

\bibitem[\protect\citeauthoryear{Chartrand}{2011}]{Chartrand2011}
\begin{barticle}
\bauthor{\bsnm{Chartrand}, \binits{R.}}:
\batitle{Numerical differentiation of noisy, nonsmooth data}.
\bjtitle{ISRN Appl Math}
\bvolume{2011},
\bfpage{164564}
(\byear{2011})
\doiurl{10.5402/2011/164564}
\end{barticle}
\endbibitem

\bibitem[\protect\citeauthoryear{Montgomery and Runger}{2018}]{montgomery2018}
\begin{bbook}
\bauthor{\bsnm{Montgomery}, \binits{D.C.}},
\bauthor{\bsnm{Runger}, \binits{G.C.}}:
\bbtitle{Applied Statistics and Probability for Engineers},
\bedition{7th} edn.
\bpublisher{John Wiley \& Sons},
\blocation{Hoboken, NJ}
(\byear{2018}).
\bcomment{Chap. 12}
\end{bbook}
\endbibitem

\bibitem[\protect\citeauthoryear{Lehmann and
  Casella}{1998}]{LehmannCasella1998}
\begin{bbook}
\bauthor{\bsnm{Lehmann}, \binits{E.L.}},
\bauthor{\bsnm{Casella}, \binits{G.}}:
\bbtitle{Theory of Point Estimation},
\bedition{2}nd edn.
\bpublisher{Springer},
\blocation{New York, NY}
(\byear{1998})
\end{bbook}
\endbibitem

\bibitem[\protect\citeauthoryear{Ljung}{1999}]{Ljung1999}
\begin{bbook}
\bauthor{\bsnm{Ljung}, \binits{L.}}:
\bbtitle{System Identification: Theory for the User},
\bedition{2nd} edn.
\bpublisher{Prentice Hall},
\blocation{Upper Saddle River, NJ}
(\byear{1999})
\end{bbook}
\endbibitem

\bibitem[\protect\citeauthoryear{S{\"o}derstr{\"o}m and
  Stoica}{1989}]{SoderstromStoica1989}
\begin{bbook}
\bauthor{\bsnm{S{\"o}derstr{\"o}m}, \binits{T.}},
\bauthor{\bsnm{Stoica}, \binits{P.}}:
\bbtitle{System Identification}.
\bpublisher{Prentice Hall},
\blocation{New York}
(\byear{1989})
\end{bbook}
\endbibitem

\bibitem[\protect\citeauthoryear{Tsybakov}{2009}]{Tsybakov2009}
\begin{bbook}
\bauthor{\bsnm{Tsybakov}, \binits{A.B.}}:
\bbtitle{Lower bounds on the minimax risk},
pp. \bfpage{77}--\blpage{135}.
\bpublisher{Springer},
\blocation{New York, NY}
(\byear{2009}).
\doiurl{10.1007/978-0-387-79052-7\_2} .
\burl{https://doi.org/10.1007/978-0-387-79052-7\_2}
\end{bbook}
\endbibitem

\bibitem[\protect\citeauthoryear{Hastie et~al.}{2009}]{Hastie2009}
\begin{bbook}
\bauthor{\bsnm{Hastie}, \binits{T.}},
\bauthor{\bsnm{Tibshirani}, \binits{R.}},
\bauthor{\bsnm{Friedman}, \binits{J.}}:
\bbtitle{The Elements of Statistical Learning: Data Mining, Inference, and
  Prediction},
\bedition{2nd} edn.
\bpublisher{Springer},
\blocation{New York}
(\byear{2009}).
\doiurl{10.1007/978-0-387-84858-7}
\end{bbook}
\endbibitem

\bibitem[\protect\citeauthoryear{Piramoon et~al.}{2024}]{Piramoon2024}
\begin{barticle}
\bauthor{\bsnm{Piramoon}, \binits{S.}},
\bauthor{\bsnm{Ayoubi}, \binits{M.A.}},
\bauthor{\bsnm{Bashash}, \binits{S.}}:
\batitle{Modeling and vibration suppression of rotating machines using the
  sparse identification of nonlinear dynamics and terminal sliding mode
  control}.
\bjtitle{IEEE Access}
\bvolume{12},
\bfpage{119272}--\blpage{119291}
(\byear{2024})
\doiurl{10.1109/ACCESS.2024.3449913}
\end{barticle}
\endbibitem

\bibitem[\protect\citeauthoryear{Takens}{1981}]{Takens1981}
\begin{bchapter}
\bauthor{\bsnm{Takens}, \binits{F.}}:
\bctitle{Detecting strange attractors in turbulence}.
In: \beditor{\bsnm{Rand}, \binits{D.}},
\beditor{\bsnm{Young}, \binits{L.S.}} (eds.)
\bbtitle{Dynamical Systems and Turbulence, Warwick 1980}.
\bsertitle{Lecture Notes in Mathematics},
vol. \bseriesno{898},
pp. \bfpage{366}--\blpage{381}.
\bpublisher{Springer},
\blocation{Berlin, Heidelberg}
(\byear{1981}).
\doiurl{10.1007/BFb0091924}
\end{bchapter}
\endbibitem

\bibitem[\protect\citeauthoryear{Brunton et~al.}{2017}]{Brunton2017Havok}
\begin{barticle}
\bauthor{\bsnm{Brunton}, \binits{S.L.}},
\bauthor{\bsnm{Brunton}, \binits{B.W.}},
\bauthor{\bsnm{Proctor}, \binits{J.L.}},
\bauthor{\bsnm{Kaiser}, \binits{E.}},
\bauthor{\bsnm{Kutz}, \binits{J.N.}}:
\batitle{Chaos as an intermittently forced linear system}.
\bjtitle{Nat Commun}
\bvolume{8},
\bfpage{19}
(\byear{2017})
\doiurl{10.1038/s41467-017-00030-8}
\end{barticle}
\endbibitem

\end{thebibliography}

\begin{appendices}
\footnotesize

\section{Synthetic 4-DOF Rotor Matrices}
\label{app:synthetic_matrices}

For the generalized-coordinate vector

\begin{equation}
\mathbf q=
\begin{bmatrix}
x & y & \alpha & \beta
\end{bmatrix}^{\mathrm T},
\label{eq:appendix_q}
\end{equation}

the synthetic benchmark presented in
Section~\ref{ssec:hidden_state_validation}
was generated using the following numerical system matrices and nonlinear
force representation.

The governing equations of motion are

\begin{equation}
\mathbf M\ddot{\mathbf q}
+
\left(\mathbf C_b+\Omega\mathbf G\right)\dot{\mathbf q}
+
\mathbf K_L\mathbf q
+
\mathbf f_{\mathrm{NL}}(\mathbf q)
=
\mathbf f_u(t).
\label{eq:appendix_4dof_model}
\end{equation}

\subsection{System matrices}

The mass matrix is

\begin{equation}
\mathbf M=
\left[
\begin{array}{rrrr}
1.15 & 0 & 0 & 0\\
0 & 1.15 & 0 & 0\\
0 & 0 & 1.5532\times10^{-3} & 0\\
0 & 0 & 0 & 1.5532\times10^{-3}
\end{array}
\right].
\label{eq:appendix_M}
\end{equation}

The gyroscopic matrix is

\begin{equation}
\mathbf G=
\left[
\begin{array}{rrrr}
0 & 0 & 0 & 0\\
0 & 0 & 0 & 0\\
0 & 0 & 0 & 3.1063\times10^{-3}\\
0 & 0 & -3.1063\times10^{-3} & 0
\end{array}
\right].
\label{eq:appendix_G}
\end{equation}

The assembled linear stiffness matrix is

\begin{equation}
\mathbf K_L=
\left[
\begin{array}{rrrr}
11622.4461 & 1589.0063 & -238.3509 & 1743.3669\\
1589.0063 & 13903.8052 & -2085.5708 & 238.3509\\
-238.3509 & -2085.5708 & 1564.1781 & -178.7632\\
1743.3669 & 238.3509 & -178.7632 & 1307.5252
\end{array}
\right].
\label{eq:appendix_KL}
\end{equation}

The assembled damping matrix is

\begin{equation}
\mathbf C_b=
\left[
\begin{array}{rrrr}
5.0000 & 0.1000 & -0.0150 & 0.7500\\
0.1000 & 3.0000 & -0.4500 & 0.0150\\
-0.0150 & -0.4500 & 0.3375 & -0.01125\\
0.7500 & 0.0150 & -0.01125 & 0.5625
\end{array}
\right].
\label{eq:appendix_Cb}
\end{equation}

The assembled stiffness and damping matrices are obtained from the bearing
influence matrices according to

\begin{equation}
\mathbf K_L=
\mathbf L_1^{\mathrm T}\mathbf K_{b1}\mathbf L_1+
\mathbf L_2^{\mathrm T}\mathbf K_{b2}\mathbf L_2,
\label{eq:appendix_Kassembly}
\end{equation}

\begin{equation}
\mathbf C_b=
\mathbf L_1^{\mathrm T}\mathbf C_{b1}\mathbf L_1+
\mathbf L_2^{\mathrm T}\mathbf C_{b2}\mathbf L_2.
\label{eq:appendix_Cassembly}
\end{equation}

\subsection{Bearing influence matrices}

The bearing-plane displacement vectors are defined by

\begin{equation}
\mathbf r_i=\mathbf L_i\mathbf q,
\qquad i\in\{1,2\},
\label{eq:appendix_ri}
\end{equation}

where

\begin{align}
\mathbf L_1 &=
\begin{bmatrix}
0.75 & 0 & 0 & 0.1875\\
0 & 0.75 & -0.1875 & 0
\end{bmatrix},
\\[1ex]
\mathbf L_2 &=
\begin{bmatrix}
0.25 & 0 & 0 & -0.1875\\
0 & 0.25 & 0.1875 & 0
\end{bmatrix}.
\label{eq:appendix_Lmatrices}
\end{align}

\subsection{Nonlinear bearing restoring force}

The nonlinear bearing restoring force is evaluated from the bearing-plane
displacements according to

\begin{equation}
\mathbf f_{\mathrm{NL}}(\mathbf q)=
\sum_{i=1}^{2}
\mathbf L_i^{\mathrm T}
\mathbf K_{NLi}
\left(\mathbf r_i^{\mathrm T}\mathbf r_i\right)\mathbf r_i.
\label{eq:appendix_fNL}
\end{equation}

The numerical values reported in this appendix correspond exactly to the
MATLAB implementation used to generate the synthetic benchmark throughout
this paper, thereby enabling complete reproduction of the reported
simulation results.

\end{appendices}

\end{document}